\newtheorem{requirement}{}
\theoremstyle{plain}
\newtheorem{theorem}{Theorem}[section]
\newtheorem{lemma}[theorem]{Lemma}
\newtheorem{corollary}[theorem]{Corollary}
\newtheorem{observation}[theorem]{Observation}
\theoremstyle{definition}
\newcommand{\gio}{G_\mathsf{i,o}}
\newcommand{\prdiv}{\mathtt{prd}(\msfi(v))}
\newcommand{\dataiv}{\mathtt{data}(\msfi(v))}
\newcommand{\true}{\mathtt{True}}
\newcommand{\false}{\mathtt{False}}
\newcommand{\gi}{G_\mathsf{i}}
\newcommand{\pif}{\langle \Pi,f\rangle}
\newcommand{\msfi}{\mathsf{i}}
\newcommand{\msfo}{\mathsf{o}}
\newcommand{\Vfunc}{V\rightarrow\{0,1\}^*}
\newcommand{\calu}{\mathcal{U}}
\newcommand{\calf}{\mathcal{F}}
\newcommand{\legpi}{\mathcal{LEG}(\Pi)}
\newcommand{\inner}{\mathtt{inner}}
\newcommand{\rim}{\mathtt{rim}}
\newcommand{\partopt}{\mathtt{Part\_OPT}}
\newcommand{\partcgf}{\mathtt{Part\_CGF}}
\begin{document}
	\title{Locally Restricted Proof Labeling Schemes (Full Version)\thanks {This work was supported in part by the Technion Hiroshi Fujiwara Cyber Security Research Center and the Israel National Cyber Directorate. In addition, the work of Shay Kutten was also supported in part by ISF grant 1346/22.}}
	\author{Yuval Emek\footnote{Technion - Israel Institute of Technology. yemek@technion.ac.il} \and Yuval Gil\footnote{Technion - Israel Institute of Technology. yuval.gil@campus.technion.ac.il} \and Shay Kutten \footnote {Technion - Israel Institute of Technology. kutten@technion.ac.il}
	}
	\date{}
	
	\maketitle		

	\begin{abstract}
		Introduced by Korman, Kutten, and Peleg (PODC 2005), a \emph{proof labeling
			scheme (PLS)} is a distributed verification system dedicated to evaluating if
		a given \emph{configured graph} satisfies a certain property.
		It involves a centralized \emph{prover}, whose role is to provide proof that a
		given configured graph is a yes-instance by means of assigning \emph{labels}
		to the nodes, and a distributed verifier, whose role is to verify the validity
		of the given proof via local access to the assigned labels.
		In this paper, we introduce the notion of a \emph{locally restricted} PLS in
		which the prover's power is restricted to that of a LOCAL algorithm with a
		polylogarithmic number of rounds.
		To circumvent inherent impossibilities of PLSs in the locally restricted
		setting, we turn to models that relax the correctness requirements by allowing
		the verifier to accept some no-instances as long as they are not ``too far''
		from satisfying the property in question.
		To this end, we evaluate
		(1)
		\emph{distributed graph optimization problems (OptDGPs)} based on the notion
		of an \emph{approximate proof labeling scheme (APLS)} (analogous to the type of
		relaxation used in sequential approximation algorithms);
		and
		(2) \emph{configured graph families (CGFs)} based on the notion of a
		\emph{testing proof labeling schemes (TPLS)} (analogous to the type of
		relaxation used in property testing algorithms).
		The main contribution of the paper comes in the form of two generic compilers,
		one for OptDGPs and one for CGFs:
		given a black-box access to an APLS (resp., PLS) for a large class of OptDGPs
		(resp., CGFs), the compiler produces a locally restricted APLS (resp., TPLS)
		for the same problem, while losing at most a
		$(1 + \epsilon)$
		factor in the scheme's relaxation guarantee.
		An appealing feature of the two compilers is that they only require a
		logarithmic additive label size overhead. 
	\end{abstract}
	
	\section{Introduction}
	\label{section:introduction}
	A \emph{proof system} is a tool designed to verify the correctness of a
	certain claim.
	It is composed of two entities:
	a \emph{prover}, whose role is to provide proof for the claim in question;
	and a computationally bounded \emph{verifier} that seeks to verify the validity
	of the given proof.
	The crux of a proof system is that the proof given by the prover cannot be
	blindly trusted.
	That is, for a proof system to be correct, the verifier must be able to
	distinguish between an honest prover, providing a correct proof, and a
	malicious prover who tries to convince the verifier of a false claim.
	
	In the realm of \emph{distributed computing}, the study of proof systems, also
	known as \emph{distributed proof systems}, has gained a lot of attention.
	The goal of a distributed proof system is to decide if a given
	\emph{configured graph} satisfies a certain property.
	This is typically done by means of a centralized prover that has a global view
	of the entire configured graph, and a distributed verifier that operates at
	all nodes concurrently and is subject to locality restrictions.
	Various models for distributed proof systems have been introduced in the
	literature, including \emph{proof labeling schemes (PLSs)} \cite{pls},
	\emph{locally checkable proofs} \cite{lcp}, \emph{nondeterministic local
		decisions} \cite{nld1}, and \emph{distributed interactive proofs}
	\cite{Kol-interactive-proofs}.
	
	The current paper focuses on the PLS model, introduced by Korman, Kutten, and
	Peleg \cite{pls} (see Section~\ref{section:pls} for the formal definition).
	In a PLS, the prover generates its proof by means of assigning a \emph{label}
	to each node.
	The verification process performed by the verifier at each node $v$ has access
	to $v$'s label and to the labels of $v$'s neighbors, but it cannot access the
	labels assigned to nodes outside its local neighborhood. 
	The correctness requirements state that if the given configured graph is a
	yes-instance, then all nodes must accept;
	and if the given configured graph is a no-instance, then at least one node
	must reject.
	The standard performance measure of a PLS is its \emph{proof size}, defined to
	be the size of the largest label assigned by the honest prover.  
	
	Recently, there is a growing interest from (sequential) computational
	complexity researchers in \emph{doubly efficient} proof systems
	\cite{GoldwasserRK17, ReingoldRR21}.
	These proof systems are characterized by restricting the (honest) prover to
	``efficient computations'' --- i.e., polynomial time algorithms --- on top of
	the restrictions imposed on the computational power of the (still weaker)
	verifier.
	For example, Goldwasser et al.~\cite{GoldwasserRK17} consider polynomial
	time provers vs.\ logarithmic space verifiers, whereas Reingold et
	al.~\cite{ReingoldRR21} consider polynomial time provers vs.\ linear time
	and near-linear time verifiers.
	
	Motivated by the success story of doubly efficient proof systems in sequential
	computational complexity, in this paper, we initiate the study of such proof
	systems in the distributed computing realm.
	To do so, we adjust the notion of ``efficient computations'' from sequential
	algorithms running in polynomial time to LOCAL algorithms running in a
	polylogarithmic number of rounds \cite{Peleg2000}.
	This introduces a new type of PLSs, called \emph{locally restricted}
	PLSs, where the label assigned to a node $v$ is computed by the (honest)
	prover based on the subgraph induced by the nodes within polylogarithmic 
	distance from $v$, rather than the whole graph (refer to
	Section~\ref{section:locally-restricted} for a formal definition).
	
	Beyond the theoretical interest that lies in this new type of distributed
	proof systems, we advocate for their investigation also from a more practical
	point of view:
	A natural application of PLSs is local checking for self-stabilizing
	algorithms \cite{Awerbuch-1991} which involves a detection module and a
	correction module.
	In this mechanism, the verifier's role is played by the detection module and
	the prover's role is played by a dedicated sub-module of the correction module
	responsible for the label assignment to the nodes \cite{pls} (the correction
	module typically includes another sub-module, responsible for constructing the
	actual solution, which is abstracted away by the PLS).
	Since both modules operate as distributed algorithms, any attempt to implement
	them in practice should take efficiency considerations into account.
	While classic PLSs consider this efficiency requirement (only) from the
	verifier's point of view, in locally restricted PLSs, we impose efficiency
	demands on both the verifier and the prover.

	
	It turns out that locally restricted PLSs are impossible for many interesting
	properties, regardless of proof size (as shown in the simple observation presented in
	Section~\ref{section:impossibilities}).
	This leads us to slightly relax the correctness requirements of a PLS so that
	the verifier may also accept no-instances as long as they are not ``too far''
	from satisfying the property in question.
	Specifically, we consider locally restricted schemes in the context of two
	relaxed models called \emph{approximate proof labeling schemes (APLS)}
	\cite{aplspaper, EmekG20}
	and \emph{testing proof labeling schemes (TPLS)}. 
	
	The APLS model was introduced by Censor-Hillel et al.~\cite{aplspaper} and
	studied further by Emek and Gil \cite{EmekG20}.
	For an approximation parameter
	$\alpha \geq 1$,
	the goal of an $\alpha$-APLS for a \emph{distributed graph optimization
		problem (OptDGP)} is to distinguish between optimal instances and instances
	that are $\alpha$-far from being optimal (refer to Section~\ref{section:model}
	for the definitions). Interestingly, for some classical edge-based \emph{covering/packing} OptDGPs (e.g., maximum matching and minimum edge cover), locally restricted APLSs are already established in previous works \cite{lcp,aplspaper,EmekG20}. In contrast, until the current paper, known APLSs for node-based covering/packing OptDGPs require the prover to have a global view of the given configured graph (see, e.g., the APLS for minimum weight vertex cover presented in \cite{EmekG20}). 
	In Section \ref{section:compiler-optdgps}, we develop a generic compiler that
	gets a (not necessarily locally restricted) $\alpha$-APLS for an OptDGP $\Psi$,
	belonging to a large class of  node-based covering/packing OptDGPs, and generates a
	locally restricted
	$((1 + \epsilon) \alpha)$-APLS
	for $\Psi$, where $\epsilon$ is a constant performance parameter.
	The proof size of the locally restricted
	$((1+\epsilon) \alpha)$-APLS
	generated by our compiler is
	$\ell_{\Psi, \alpha} + O (\log n)$,
	where
	$\ell_{\Psi, \alpha}$
	is the proof size of the $\alpha$-APLS provided to the compiler. Refer to Section \ref{section:high-level} for a high-level overview of this construction.
	
	The TPLS model is developed in the current paper based on the notion of
	\emph{property testing}
	\cite{GoldreichGR98, AlonKKR08}.
	For a parameter
	$\delta > 0$,
	the goal of a $\delta$-TPLS for a \emph{configured graph family (CGF)} $\Phi$
	is to distinguish between configured graphs belonging to $\Phi$ and configured
	graphs that are $\delta$-far from belonging to $\Phi$, where the distance here
	is measured in terms of the graph topology.
	In Section \ref{section:compiler-cgfs}, we develop a generic compiler that
	gets a (not necessarily locally restricted) PLS for a CGF $\Phi$, that is
	closed under node-induced subgraphs and disjoint union, and generates a
	locally restricted $\delta$-TPLS for $\Phi$.
	The proof size of the locally restricted $\delta$-TPLS generated by our
	compiler is
	$\ell_{\Phi} + O (\log n)$,
	where $\ell_{\Phi}$ is the proof size of the PLS provided to the compiler. Refer to Section \ref{section:cgf-high-level} for a high-level overview of this construction.
	
	The applicability of our compilers is demonstrated in Section
	\ref{section:bounds}, where we show
	how the two compilers can be used to obtain APLSs and TPLSs for various
	well-known OptDGPs and CGFs, respectively;
	refer to Tables \ref{apls-table} and \ref{tpls-table} for a summary of these
	results.
	We conclude with additional related work presented in
	Section~\ref{section:related}.

	\begin{table}
		\begin{center}
			\begin{tabular}{| l| l| l| l| }
				\hline
				\textbf{OptDGP} & \textbf{Graph family} & \textbf{Approx.\  ratio} & \textbf{Proof size} \\
				\hline
				
				minimum weight vertex cover & any & $2(1+\epsilon)$&$O
				(\log n)$ \\
				\hline
				minimum vertex cover & odd-girth $=\omega(\log n)$ & $1+\epsilon$&$O
				(\log n)$\\
				\hline
				maximum independent set & any & $\Delta(1+\epsilon)$&$O
				(\log n)$\\
				
				& odd-girth $=\omega(\log n)$ & $1+\epsilon$&$O
				(\log n)$\\
				\hline
				minimum weight dominating set& any & $O(\log n)$&$O
				(\log n)$\\
				\hline
				any canonical OptDGP & any & $1+\epsilon$&$O
				(n^2)$\\
				\hline
			\end{tabular}
		\end{center}
		\caption{Locally restricted $\alpha$-APLS results and proof sizes.}
		\label{apls-table}
	\end{table}
	
	\begin{table}
		\begin{center}
			\begin{tabular}{| l| l| }
				\hline
				\textbf{CGF} & \textbf{Proof size} \\
				\hline
				planarity & $O(\log n)$\\
				\hline
				bounded arboricity  & $O(\log n)$\\
				\hline
				$k$-colorability  & $O(\log n)$\\
				\hline
				forest  & $O(\log n)$\\
				\hline
				DAG  & $O(\log n)$\\
				\hline
			\end{tabular}
		\end{center}
		\caption{Locally restricted $\delta$-TPLS results and proof sizes.}
		\label{tpls-table}
	\end{table}
	
	\subsection{Paper's Organization}
	In Section \ref{section:model}, we present the model. Preliminaries are presented in Section \ref{section:preliminaries}. Following that, in Sections \ref{section:compiler-optdgps} and \ref{section:compiler-cgfs}, we present our compiler for OptDGPs and CGFs, respectively. Within these sections, a high-level overview of the compilers appears in Subsections \ref{section:high-level} and \ref{section:cgf-high-level}. In Section \ref{section:bounds}, we present concrete bounds for APLS and TPLS construction. In Section \ref{section:impossibilities}, we discuss the impossibilities of locally restricted PLSs for a variety of problems. Finally, additional related work is presented in Section \ref{section:related}. 
	
	\section{Model}
	\label{section:model}
	We consider \emph{distributed verification systems} in which evaluated instances are called \emph{configured graphs}. A configured graph $G_s=\langle G,s\rangle $ is a pair consisting of an undirected graph $G=(V,E)$ and a \emph{configuration function} $s:\Vfunc$ that assigns a bit string $s(v)$, referred to as $v$'s \emph{local configuration}, to each node $v\in V$. Throughout this paper, we use the notation $n=|V|$ and $m=|E|$.

	For a node $v\in V$, we stick to the convention that $N_G(v) = \{ u \mid (u, v) \in E  \}$ 
	denotes the set of $v$'s \emph{neighbors} in $G$ and that $\deg_{G}(v)=|N_G(v)|$ denotes $v$'s \emph{degree} in $G$. When $G$ is clear from the context, we may omit it from our notations and use $N(v)$ and $\deg(v)$ instead of $N_G(v)$ and $\deg_G(v)$, respectively.  
	
	We assume that all configured graphs considered in the context of this paper are \emph{identified}, i.e., the configuration function $s:\Vfunc$ assigns a unique id of size $O(\log n)$, denoted by $id(v)$, to each node $v\in V$. Moreover, we assume that the local configuration $s(v)$ distinguishes between node $v$'s incident edges by means of a set $\mathcal{A}(v)$ of abstract \emph{port names}, and a bijection $\rho^{s}_v:N(v)\rightarrow\mathcal{A}(v)$, referred to as the \emph{internal port name assignment} of $v$, that assigns a (locally unique) port name $\rho_{v}^{s}(u)$ to each node $u \in N(v)$. More concretely, assume that the local configuration $s(v)$ includes a designated field for each neighbor $u \in N(v)$ and that this field is indexed by $\rho_{v}^{s}(u)$. Unless stated otherwise, when we refer to an ordered list $u_1,\dots,u_{\deg(v)}$ of $v$'s neighbors, it is assumed that the list is ordered by $v$'s internal port name assignment.
	
	Given a configured graph $G_s$ consisting of graph $G=(V,E)$ and configuration function $s$, we say that a configured graph $G'_{s'}$ consisting of graph $G'=(V',E')$ and configuration function $s'$, is a \emph{configured subgraph} of $G_s$ if (1) $G'$ is a subgraph of $G$, i.e., $V'\subseteq V$ and $E'\subseteq E$; and (2) the configuration function $s'$ is the projection of $s$ on $G'$, where for each node $v \in V'$, the fields corresponding to nodes $u \in N_{G}(v) \setminus N_{G'}(v)$ are omitted from the local configuration $s'(v)$ and the internal port name assignment $\rho_v^{s'}$ associated with $s'$ is defined so that $\rho_v^{s'}(u)=\rho_v^{s}(u)$ for each $u\in N_{G'}(v)$. For a node subset $U\subseteq V$, let $G(U)$ denote the subgraph induced on $G$ by $U$ and let $G_s(U)$ be the configured subgraph of $G_s$ defined over the subgraph $G(U)$.
	
	We define a \emph{configured graph family (CGF)} as a collection of configured graphs.\footnote{Refer to Table \ref{abbreviations} for a full list of the abbreviations used in this paper} A CGF type that plays a central role in this paper is that of a \emph{distributed graph problem (DGP)} $\Pi$, where for each configured graph $G_s\in \Pi$, the configuration function $s$ is composed of an \emph{input assignment} $\msfi:\Vfunc$ and an \emph{output assignment} $\msfo:\Vfunc$. We refer to such a configured graph as an \emph{input-output (IO) graph} and often denote it by $\gio$. The input assignment $\msfi$ assigns to each node $v\in V$, a bit string $\msfi(v)$, referred to as $v$'s \emph{local input}, that encodes attributes associated with $v$ and its incident edges (e.g., node ids, edge orientations, edge weights, and node weights); whereas the output assignment $\msfo$ assigns a \emph{local output} $\msfo(v)$ to each node $v\in V$. For an input assignment $\msfi$, we refer to the configured graph $\gi=\langle G,\msfi\rangle$ as an \emph{input graph}.
	
	Consider a DGP $\Pi$. An input graph $\gi$ is said to be \emph{legal} (and the graph $G$ and input assignment $\msfi$ are said to be \emph{co-legal}) if there exists an output assignment $\msfo$ such that $\gio\in \Pi$, in which case we say that $\msfo$ is a \emph{feasible solution} for $\gi$ (or simply for $G$ and $\msfi$).  For a DGP $\Pi$, we denote the set of legal input graphs by $\legpi=\{\gi\mid \exists\msfo:\gio\in \Pi\}$. 
	

	A \emph{distributed graph minimization problem (MinDGP)} (resp.,
	\emph{distributed graph maximization problem (MaxDGP)}) $\Psi$ is a pair
	$\pif $,
	where $\Pi$ is a DGP and
	$f:\Pi \rightarrow \mathbb{Z}$
	is a function, referred to as the \emph{objective function} of $\Psi$, that
	maps each IO graph
	$\gio\in \Pi$
	to an integer value
	$f(\gio)$.\footnote{%
		We assume for simplicity that the images of the objective functions used
		in the context of this paper are integral.
		Lifting this assumption and allowing for real numerical values would
		complicate some of the arguments, but it does not affect the validity of our
		results.}
	Given a co-legal graph $G$ and input assignment $\msfi$, define
	\[
	OPT_{\Psi}(G,\msfi) = \underset{\msfo:\gio\in \Pi}{\inf} \{f(\gio)\}
	\]
	if $\Psi$ is a MinDGP; and 
	\[
	OPT_{\Psi}(G, \msfi)= \underset{\msfo:\gio\in \Pi}{\sup} \{f(\gio\}
	\]
	if $\Psi$ is a MaxDGP.
	We often use the general term \emph{distributed graph optimization problem
		(OptDGP)} to refer to MinDGPs as well as MaxDGPs.
	Given an OptDGP $\Psi=\langle \Pi, f\rangle$ and co-legal graph $G$ and input assignment $\msfi$, the output assignment $\msfo$ is said to be an \emph{optimal solution} for $\gi$ (or simply for $G$ and $\msfi$) if $\msfo$ is a feasible solution for $\gi$ and $f(\gio)=OPT_\Psi(G,\msfi)$.

	\subsection{Proof Labeling Schemes}
	\label{section:pls}
	In this section we present the notion of proof labeling schemes as well as its approximation variants. To that end, we first present the notion of gap proof labeling schemes, as defined in \cite{EmekG20}.
	
	Fix some universe $\calu$ of configured graphs. A 
	\emph{gap proof labeling scheme (GPLS)} is a mechanism designed to
	distinguish the configured graphs in a \emph{yes-family}
	$\calf_{Y} \subset \calu$ from the configured graphs
	in a \emph{no-family}
	$\calf_{N} \subset \calu$, where
	$\calf_{Y} \cap \calf_{N} = \emptyset$. This is done by means of a (centralized) \emph{prover} and a (distributed)
	\emph{verifier} that play the following roles:
	Given a configured graph $G_s\in \calu$, if $G_s\in \calf_Y$,
	then the prover assigns a bit string $L(v)$, called the \emph{label} of $v$,
	to each node $v\in V$.
	Let $L^N(v)=\langle L(u_1),\dots ,L(u_{\deg(v)})\rangle$ be the vector of labels assigned to $v$'s neighbors. The verifier
	at node
	$v \in V$
	is provided with the $3$-tuple
	$\langle s(v), L(v), L^{N}(v) \rangle$
	and  returns a Boolean value $\varphi(v)$. 
	
	We say that the verifier \emph{accepts} $G_s$ if
	$\varphi(v) = \true$
	for all nodes
	$v \in V$;
	and that the verifier \emph{rejects} $G_s$ if
	$\varphi(v) = \false$
	for at least one node
	$v \in V$.
	The GPLS is said to be \emph{correct} if the following requirements hold for
	every configured graph
	$G_s \in \calu$:
	\begin{requirement}\label{gpls-r1}
		If $G_s\in \calf_Y$, then the prover produces a label assignment $L : \Vfunc$ such that the verifier accepts $G_s$.
	\end{requirement}
	\begin{requirement}\label{gpls-r2}
		If $G_s\in \calf_N$, then for any label assignment $L : \Vfunc$, the verifier rejects $G_s$.
	\end{requirement}
	We emphasize that no requirements are made for configured graphs
	$G_s \in \calu \setminus (\calf_Y \cup \calf_N)$;
	in particular, the verifier may either accept or reject these configured
	graphs (the same holds for configured graphs that do not belong to the
	universe $\calu$). The performance of a GPLS is measured by means of its \emph{proof size}
	defined to be the maximum length of a label $L(v)$ assigned by the prover to
	the nodes $v \in V$ assuming that 
	$G_s\in  \calf_Y$.
	\paragraph*{Proof Labeling Schemes for CGFs.}
	Consider some CGF $\Phi$ and let $\calu$ be the universe of all configured graphs. A \emph{proof labeling scheme (PLS)} for $\Phi$ is the GPLS over $\calu$ defined by setting the yes-family to be $\calf_{Y}=\Phi$; and the no-family to be $\calf_{N}=\calu\setminus\calf_{Y}$. In other words, a PLS for $\Phi$ determines whether a given configured graph $G_s$ belongs to $\Phi$.
	
	In this paper, we also define a relaxed model of PLSs for a CGF $\Phi$ in which we allow the verifier to accept configured graphs that are not "too far" from belonging to $\Phi$. To that end, we use the following distance measure which is widely used in the realm of property testing (see e.g., \cite{AlonKKR08}). 
	
	let $G_s$ and $G'_{s'}$ be two configured graphs. Given a parameter $\delta >0$, we say that $G_s$ and $G'_{s'}$ are \emph{$\delta$-close}  if $G'_{s'}$ is a configured subgraph of $G_s$ and $G'$ can be obtained from $G$ by removing at most $\delta m$ edges (or vice versa).

	Consider a CGF $\Phi$.
	We say that a configured graph $G_s$ is \emph{$\delta$-far} from belonging to $\Phi$ if $G'_{s'} \notin \Phi$ for any configured graph $G'_{s'}$ which is $\delta$-close to $G_s$. We define a \emph{$\delta$-testing proof labeling scheme ($\delta$-TPLS)} in the same way as a PLS for $\Phi$ with the sole difference that the no-family is defined by setting $\mathcal{F}_{N}=\{G_s\mid G_s \text{ is $\delta$-far from belonging to $\Phi$}\}$.
	\paragraph*{Proof Labeling Schemes for OptDGPs.}
	Consider some OptDGP $\Psi=\pif$ and let $\calu=\{\gio\mid \gi\in \legpi\}$. A \emph{proof labeling scheme (PLS)} for $\Psi$ is defined as a GPLS over $\calu$ by setting the yes-family to be \[\calf_Y=\{ \gio\in\Pi\mid f(\gio)=OPT_\Psi(G,\msfi)\}\] and the no-family to be $\calf_N=\calu\setminus \calf_Y$. In other words, a PLS for $\Psi$ determines for a given IO graph $\gio\in \calu$ whether the output assignment $\msfo:\Vfunc$ is an optimal solution (which means in particular that it is a feasible solution) for the co-legal graph $G=(V,E)$ and input assignment $\msfi:\Vfunc$.
	
	In the realm of OptDGPs, a relaxed model called approximate proof labeling scheme has been considered in \cite{aplspaper,EmekG20}. In this model, the correctness requirement of a PLS  are relaxed so that it may also accept feasible solutions that only approximate the optimal ones. Specifically, given an approximation parameter
	$\alpha \geq 1$,
	an \emph{$\alpha$-approximate proof labeling scheme ($\alpha$-APLS)} for an OptDGP
	$\Psi = \langle \Pi, f \rangle$
	is defined in the same way as a PLS for $\Psi$ with the sole difference that the no-family is defined by setting
	\[
	\mathcal{F}_{N}
	\, = \,
	\begin{cases}
		\mathcal{U} \setminus \left\{
		\gio\in\Pi 
		\mid f(\gio) \leq \alpha \cdot OPT_{\Psi}(G, \msfi)
		\right\}
		\, , &
		\text{if $\Psi$ is a MinDGP} \\
		\mathcal{U} \setminus \left\{
		\gio \in \Pi \mid f(\gio) \geq OPT_{\Psi}(G, \msfi) / \alpha
		\right\}
		\, , &
		\text{if $\Psi$ is a MaxDGP}
	\end{cases} \, .
	\]

	\sloppy

	\subsection{Locally Restricted Proof Labeling Schemes}
	\label{section:locally-restricted}
	In this paper, we focus on provers whose power is limited as follows.
	We say that a GPLS is \emph{locally restricted} if there exists a constant $c$ such that for every configured graph
	$G_{s} = \langle G = (V, E), s \rangle \in \mathcal{F}_{Y}$
	and for every node
	$v \in V$,
	the label $L(v)$ is computed by the prover as a function of $G_{s}(B^{r}(v))$,
	where
	$r = \log^{c} n$
	and $B^{r}(v)$ denotes the set of nodes at (hop) distance at most $r$ from $v$ in $G$.
	Equivalently, the prover is restricted to a distributed algorithm operating under the LOCAL model
	\cite{Linial92,Peleg2000}
	with polylogarithmic rounds. We emphasize that if $G_s\in \calf_{N}$, then the verifier is required to reject $G_s$ for any label assignment, including label assignments that were not produced in a locally restricted fashion.


	\section{Preliminaries}
	\label{section:preliminaries}
	
	\paragraph*{Sequentially Local Algorithms.}
	
	In the \emph{sequentially local (SLOCAL)} model, introduced in \cite{GhaffariKM17}, each node $v\in V$ maintains two (initially empty) bit strings denoted by $\text{info}(v)$ and $\text{decision}(v)$. Nodes are processed sequentially in an arbitrary order $p=v_1,\dots,v_n$ (i.e., irrespective of node ids). We refer to the time that node $v_i$ is processed as the $i$-th iteration of the algorithm. In the $i$-th iteration, $v_i$ has a read/write access to $\text{info}(u)$ for all nodes $u\in B^r(v_i)$, where $r\in \mathbb{Z}_{\geq 0}$ is a parameter referred to as the \emph{locality} of the algorithm. Following that, $v_i$ writes an irrevocable value into $\text{decision}(v_i)$ based strictly on $G_s(B^r(v_i))$ and the bit strings $\text{info}(u)$ of all $u\in B^r(v_i)$.
	
	A consequence of the seminal work of Ghafari et al.\
	\cite{GhaffariKM17,RozhonG20}
	is that any SLOCAL algorithm with
	$\log^{O (1)} n$
	locality can be simulated by a LOCAL algorithm with $\log ^{O(1)}n$ rounds. Therefore, in the context of a locally restricted GPLS, by allowing the prover to compute the label $L(v)$ of each node $v\in V$ using an SLOCAL algorithm with locality $r=\log ^{O(1)}n$ (rather than a LOCAL algorithm with $\log^{O(1)}n$ rounds), we do not increase the scheme's power.
	
	\sloppy
	\paragraph*{Comparison Schemes.}
	Let $\calu$ be a universe of configured graphs $G_{s_{a,b}}$, such that $G=(V,E)$ is a connected undirected graph and the configuration function $s_{a,b}:\Vfunc$ assigns two values $a(v),b(v)\in \mathbb{R}$ to each node $v\in V$. A \emph{comparison scheme} is a mechanism whose goal is to decide if $\sum_{v\in V}a(v)\geq \sum_{v\in V}b(v)$ for a given configured graph $G_{s_{a,b}}\in\calu$. Formally, a comparison scheme is defined as a GPLS over $\calu$ by setting the yes-family to be $\calf_{Y}=\{G_{s_{a,b}}\in \calu\mid \sum_{v\in V}a(v)\geq \sum_{v\in V}b(v)\}$; and the no family to be $\calf_{N}=\calu\setminus\calf_{Y}$. 
	
	In \cite[Lemma 4.4]{pls}, Korman et al.\ present a generic design for
	comparison schemes as follows. Consider a configured graph
	$G_{s_{a,b}}\in\calu$. The label assignment $L:\Vfunc$ constructed by the
	prover encodes a spanning tree of $G$ rooted at some (arbitrary) node $r\in V$
	(see \cite[Lemma 2.2]{pls} for details on spanning tree construction). In
	addition, the prover encodes the sum of $a(\cdot)$ and $b(\cdot)$ values in
	the sub-tree rooted at node $v$ for each $v\in V$. This allows the verifier to
	check that the sums assigned at each node $v\in V$ are correct (using the sums
	assigned to $v$'s children); and the verifier at the root $r$ can evaluate if
	$G_{s_{a,b}}\in\calf_{Y}$. The proof size of this scheme is $O(\log
	n+M_{a,b})$, where $M_{a,b}$ is the maximum length (in bits) of values
	$\sum_{v\in U}a(v)$ and $\sum_{v\in U}b(v)$ over all node-subsets $U\subseteq
	V$. This comparison scheme construction is used as an auxiliary tool in the compilers presented in Sections \ref{section:compiler-optdgps} and \ref{section:compiler-cgfs}.  
	
	\sloppy

	\section{Compiler for OptDGPs}
	\label{section:compiler-optdgps}
	In this section, we present our generic compiler for OptDGPs. It is divided into five subsections as follows. First, in Section \ref{section:optdgp-charachterization} we characterize the OptDGPs that are suited for our compiler, referred to as canonical OptDGPs, based on the notions of locally checkable labelings and covering/packing OptDGPs (these terms are formally defined in Section \ref{section:optdgp-charachterization}). In Section \ref{section:properties-of-canonical-optdgps}, we establish an important property of optimal solutions for covering/packing OptDGPs that serve the compiler construction. Sections \ref{section:part-one-partition} and \ref{section:part-two-label-and-verification} are dedicated to the compiler construction. More formally, these sections constructively prove the following theorem.
	\begin{theorem}\label{theorem:main-optdgp}
		Let $\Psi$ be a canonical OptDGP that admits an $\alpha$-APLS with a proof size of $\ell_{\Psi,\alpha}$. For any constant $\epsilon>0$, there exists a locally restricted $(\alpha(1+\epsilon))$-APLS for $\Psi$ with a proof size of $\ell_{\Psi,\alpha}+O(\log n)$.
	\end{theorem}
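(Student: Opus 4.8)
\emph{Proof plan.} The idea is to \emph{localize} the verification: the locally restricted prover carves $G$ into clusters of polylogarithmic diameter, certifies approximate optimality \emph{within} each cluster by invoking the given black-box $\alpha$-APLS on the induced sub-instance, and then stitches the per-cluster guarantees into a global one using a within-cluster instance of the comparison scheme of \cite{pls}. Throughout, fix a canonical OptDGP $\Psi = \pif$ (so that $f$ is a sum of node-wise contributions and $\Pi$ is locally checkable with covering/packing monotonicity), a co-legal pair $G,\msfi$, and an optimal output assignment $\msfo$; I describe the minimization (covering) case, the maximization (packing) case being symmetric.

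\emph{Step 1 (partition).} Using the algorithm $\partopt$ developed in Section~\ref{section:part-one-partition}, the prover computes a vertex partition $C_1,\dots,C_k$ of $G$ such that: (i) each $C_i$ has diameter $\log^{O(1)}n$ and carries a spanning BFS tree of this depth; (ii) $\partopt$ runs as an SLOCAL algorithm of locality $\log^{O(1)}n$, and hence is available to a locally restricted prover via the simulation of \cite{GhaffariKM17,RozhonG20}; and, crucially, (iii) $\sum_{i=1}^{k} OPT_{\Psi}\bigl(G_{\msfi}(C_i)\bigr)$ and $OPT_{\Psi}(G,\msfi)$ agree up to a $(1+\epsilon)$ factor. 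Property (iii) is where the covering/packing structure of $\Psi$ enters: the ``easy'' direction ($\sum_i OPT_{\Psi}(G_{\msfi}(C_i)) \le OPT_{\Psi}(G,\msfi)$ for covering) follows by restricting $\msfo$ to each cluster and summing node-wise contributions, while the reverse direction rests on the structural lemma of Section~\ref{section:properties-of-canonical-optdgps}, which lets one repair the inter-cluster ``seams'' of cluster-optimal solutions at a cost that is chargeable to $OPT_{\Psi}(G,\msfi)$ itself; this charging is kept below an $\epsilon$-fraction of $OPT_{\Psi}(G,\msfi)$ by choosing the cluster radii via a region-growing argument.

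\emph{Step 2 (labels and verification).} To each node $v \in C_i$ the prover attaches: $O(\log n)$ bits encoding the identity of $C_i$ (say, the id of its BFS root) and $v$'s depth in the BFS tree, so that every verifier can locally recover the induced configured subgraph $\gio(C_i)$ and the cluster tree; an optimal output $\msfo_i^{\star}$ for the sub-instance $G_{\msfi}(C_i)$ (a legal sub-instance by the canonical structure) together with the $\ell_{\Psi,\alpha}$-bit label assigned to $v$ by the black-box $\alpha$-APLS for $\Psi$ run on $\langle G_{\msfi}(C_i),\msfo_i^{\star}\rangle$ --- this label is computable by the (locally restricted) prover precisely because $C_i$ has polylogarithmic diameter; and $O(\log n)$ bits of BFS-subtree sums for a within-cluster comparison scheme comparing, at the cluster root, $(1+\epsilon)$ times the total $f$-contribution of $\msfo_i^{\star}$ on $C_i$ against the total $f$-contribution of $\msfo$ on $C_i$. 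The verifier at $v$ then runs the $\alpha$-APLS verifier on the reconstructed sub-instance, checks the consistency of the BFS tree and the subtree sums, and performs the cluster-root comparison. The total proof size is $\ell_{\Psi,\alpha}+O(\log n)$. For completeness, when $\msfo$ is optimal the $\alpha$-APLS accepts every honest sub-instance (since $\alpha \ge 1$), and summing the cluster-root inequalities over $i$ while using that $f$ is node-wise additive and property~(iii) shows that all nodes accept. For soundness, fix any accepting label assignment: soundness of the $\alpha$-APLS on each $G_{\msfi}(C_i)$ forces $f(G_{\msfi}(C_i),\msfo_i) \le \alpha\cdot OPT_{\Psi}(G_{\msfi}(C_i))$ for the claimed output $\msfo_i$; correctness of the comparison scheme turns the cluster-root checks into $(1+\epsilon)\sum_i f(G_{\msfi}(C_i),\msfo_i) \ge f(\gio)$; and combining these with the easy direction of property~(iii) yields $f(\gio) \le \alpha(1+\epsilon)\cdot OPT_{\Psi}(G,\msfi)$, i.e.\ $\gio \notin \calf_N$.

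\emph{Main obstacle.} The heart of the argument is Step~1: building $\partopt$ so that the seam-repair cost stays an $\epsilon$-fraction of $OPT_{\Psi}(G,\msfi)$ \emph{while} the cluster diameters remain polylogarithmic and the entire computation is SLOCAL with polylogarithmic locality. This is exactly where local checkability and covering/packing monotonicity of $\Psi$ are indispensable, and carrying it out uniformly for minimization (where one pays by adding seam nodes to the solution) and maximization (where one pays by deleting seam nodes) is the delicate part; it is the content of Sections~\ref{section:properties-of-canonical-optdgps} and~\ref{section:part-one-partition}. Everything else --- the per-cluster invocation of the black box, the reconstruction of $\gio(C_i)$ by the verifier, and the aggregation through node-wise additivity of $f$ plus the comparison scheme --- is routine once the partition is in place.
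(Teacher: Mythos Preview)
Your architecture is the paper's: partition into polylog-diameter clusters via SLOCAL ball growing, run the black-box $\alpha$-APLS on each induced sub-instance, and aggregate with a per-cluster comparison scheme. But there is a genuine gap in how you reconcile the \emph{global} approximation guarantee of the partition with the \emph{per-cluster} checks the verifier performs.

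Concretely, your property~(iii) says $\sum_i OPT_\Psi(G_\msfi(C_i))$ and $OPT_\Psi(G,\msfi)$ agree up to $(1+\epsilon)$, a statement about a sum over all clusters. Yet in Step~2 each cluster root independently tests $(1+\epsilon)\,w(C_i,\msfo_i^\star)\ge w(C_i,\msfo)$. For completeness you must show that \emph{every} such root accepts, i.e.\ that this inequality holds cluster by cluster when $\msfo$ is optimal; your sentence ``summing the cluster-root inequalities over $i$ \dots\ shows that all nodes accept'' has the logic inverted --- summing is how soundness aggregates local checks into a global bound, not how completeness deduces local acceptance from a global one. And in fact the per-cluster inequality need \emph{not} hold: an optimal $\msfo$ can place disproportionate weight on the rim of one cluster $C_i$ (nodes within distance~$2$ of $V\setminus C_i$), so that $w(C_i,\msfo)$ exceeds $(1+\epsilon)$ times the local optimum even though the aggregate over all clusters is fine. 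The paper flags exactly this in Section~\ref{section:high-level}: ``not every cluster is guaranteed to have at most an $\epsilon$-fraction of its weight assigned to the rim nodes.''

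The paper's fix, which your plan omits, is the \emph{secondary affiliation} mechanism built into $\partopt$: certain rim nodes are given a secondary pointer to an adjacent cluster, and the prover certifies, for each cluster $V_j$ separately, that the weight of the set $S_j$ (rim nodes charged to $V_j$ via secondary affiliation) is at most $\epsilon\cdot w(\inner^2(V_j),\msfo)$. This ``growth'' property is genuinely local and is what makes completeness go through; the separate ``optimality'' property $w(\inner^2(V_j),\msfo)\le\alpha\cdot w_{\min}(V_j)$ (from Lemma~\ref{lemma:min-dgp-inner}) handles the inner part. Note also that the paper works with $w_{\min}(V_j)$ (minimum weight of an assignment respecting $\Pi$ on $\inner(V_j)$) rather than $OPT_\Psi(G_\msfi(V_j))$; this is what makes your ``easy direction'' $w_{\min}(V_j)\le w(V_j,\msfo^*)$ actually easy, whereas the feasibility of $\msfo^*|_{C_i}$ for the induced sub-instance (which your formulation needs) is not immediate from the covering axioms. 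Once you add the secondary-affiliation accounting and switch from $OPT_\Psi(G_\msfi(C_i))$ to $w_{\min}(C_i)$, your outline becomes the paper's proof.
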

	
	For convenience, the compiler construction is divided between Sections \ref{section:part-one-partition} and \ref{section:part-two-label-and-verification} as follows. In Section \ref{section:part-one-partition}, we present an SLOCAL algorithm with logarithmic locality that partitions the nodes into disjoint clusters, such that the subgraph induced by each cluster is of logarithmic diameter. The goal of this partition is to enable the prover to construct the label of a node as a function of the subgraph induced by its cluster (and possibly some nodes adjacent to its cluster) without information on nodes that are farther away. This partition facilitates the label assignment and verification process described in Section \ref{section:part-two-label-and-verification}. In Section \ref{section:high-level}, we provide a high-level overview of the SLOCAL algorithm and how it is used in the label assignment and verification process.

	\subsection{Canonical OptDGPs}
	\label{section:optdgp-charachterization}
	\paragraph*{Locally Checkable Labelings.}
	A DGP $\Pi$ is said to be a \emph{locally checkable labeling (LCL)} (cf.\ \cite{NaorS95}) if there exists a Boolean predicate family $\mathcal{L}^{\Pi}=\{
	p_{d, \ell}^{\Pi} : (\{ 0, 1 \}^{*})^{d + 1} \rightarrow \{ \true, \false \}
	\}_{d \in \mathbb{Z}_{\geq 0}, \ell \in \{ 0, 1 \}^{*}}$ such that for every legal input graph
	$\gi\in \legpi$,
	an output assignment
	$\msfo:\Vfunc$
	is a feasible solution for $G$ and $\msfi$ if and only if
	$p_{\deg(v), \msfi(v)}^{\Pi}(\msfo(v), \msfo(u_{1}), \dots, \msfo(u_{\deg(v)})) = \true$
	for every node
	$v \in V$
	with neighbors
	$u_{1}, \dots, u_{\deg(v)}$. 
	
	For convenience, we assume that the local input $\msfi(v)$ of a node $v \in V$ is partitioned into two fields, denoted by $\prdiv$ and $\dataiv$,
	where the former (fully) determines the predicate $p^{\Pi}_{\deg(v), i(v)}$ associated with $\deg(v)$ and $\msfi(v)$ and the latter encodes all other pieces of information included in $\msfi(v)$.
	This allows us to slightly abuse the notation and write $p^{\Pi}_{\prdiv}$ instead of $p^{\Pi}_{\deg(v), \msfi(v)}$.
	We further assume that the Boolean predicate family $\mathcal{L}^{\Pi}$ includes the trivial tautology predicate
	$taut_d:(\{ 0, 1 \}^{*})^{d + 1} \rightarrow \{ \true, \false \}$,
	$d \in \mathbb{Z}_{\geq 0}$,
	that satisfies
	$taut_{d}(x) = \true$
	for every
	$x \in (\{ 0, 1 \}^{*})^{d + 1}$
	and that this predicate is encoded by writing the designated bit string $taut_{d}$ in the $\mathtt{prd}(\cdot)$ field of the local input.
	
	We say that an LCL $\Pi$ is \emph{self-induced} if the following two conditions are satisfied for every legal input graph $\gi\in \legpi$ and node subset $U\subseteq V$: (1) $\gi(U)\in \legpi$; and (2) if $\msfi':\Vfunc$ is the input assignment derived from $\msfi$ by setting $\mathtt{data}(\msfi'(v))=\dataiv$ and $\mathtt{prd}(\msfi'(v))=taut_{\deg(v)}$ for every $v\in U$, then $G_{\msfi'}\in\legpi$. 
	
	Let $\Pi$ be an LCL and let $\gi\in \legpi$. For a subset $U\subseteq V$ of nodes, we denote by $\inner(U)=\{u\in U\mid N_G(u)\subseteq U\}$ the set of nodes in $U$ for which every neighbor is in $U$ and define $\inner^2(U)=\inner(\inner(U))$ and $\rim(U)=U\setminus\inner^2 (U)$. We say that a function $g:U\rightarrow\{0,1\}^*$ \emph{respects} $\Pi$ if $p^{\Pi}_{\mathtt{prd}(\msfi(v))}(g(v),g(u_1),\dots,g(u_{\deg(v)}))=\true$ for each $v\in \inner(U)$ with neighbors $u_1,\dots, u_{\deg(v)}$.
	
	\paragraph*{Canonical OptDGPs.}
	A MinDGP (resp., MaxDGP) $\Psi=\pif$ is said to be a \emph{covering} (resp., \emph{packing}) OptDGP  if the following conditions hold: (1) $\Pi$ is an LCL; (2) for each $n$-node IO graph $\gio\in \Pi$, there exists a positive integer $k=k(\Pi,n)=n^{O(1)}$ such that the output assignment $\msfo$ assigns a nonnegative integer $\msfo(v)\in\{0,\dots,k\}$, referred to as $v$'s \emph{multiplicity}, to each node $v\in V$; (3) for each predicate $p^{\Pi}_{d,\ell}\in\mathcal{L}^{\Pi}$, if $p^{\Pi}_{d,\ell}(x_0, x_1,\dots,x_{d})=\true$ for nonnegative integers $x_0,x_1,\dots,x_{d}\in \{0,\dots,k\}$, then $p^{\Pi}_{d,\ell}(x'_0, x'_1,\dots,x'_{d})=\true$ for any nonnegative integers $x'_0,x'_1,\dots,x'_{d}\in \{0,\dots,k\}$ that satisfy $x'_j\geq x_j$ (resp., $x'_j\leq x_j$) for all $0\leq j\leq d$; (4) for every legal input graph $\gi\in \legpi$, there exists a node-weight function $w:V\rightarrow\{1,\dots n^{O(1)}\}$ such that the weight $w(v)$ of node $v$ is encoded in $v$'s local input field $\dataiv$; and (5) $f(\gio)=\sum_{v\in V}w(v)\cdot \msfo(v)$ for every $\gio\in\Pi$. The OptDGP $\Psi=\pif$ is said to be \emph{canonical} if it is covering/packing and $\Pi$ is self-induced. 
	
	Consider a covering/packing OptDGP $\Psi=\pif$. Let $\gi\in \legpi$ be a legal input graph with the underlying node-weight function $w:V\rightarrow\{1,\dots, n^{O(1)}\}$ and let $U\subseteq V$. Given a function $g:U\rightarrow\{0,\dots,k(\Pi,n)\}$ that assigns a multiplicity value $g(u)$ to each node $u\in U$, we define $w(U,g)=\sum_{u\in U}w(u)\cdot g(u)$. 
	
	For a covering MinDGP $\Psi$, let $w_{\min}(U)$ denote the minimum possible value of $w(U,g)$ obtained by a function $g: U \rightarrow \{ 0, \dots, k(\Pi, n) \}$ that respects $\Pi$. Let $N^{2}(U)$ be the set of nodes in $V\setminus U$ at distance at most $2$ from a node in $U$. For a packing MaxDGP, let $w_{\max}(U)$ denote the maximum possible value of $w(U,g)$ obtained by a function $g: U\cup N^{2}(U) \rightarrow \{ 0, \dots, k(\Pi, n) \}$ that satisfies (1) $g(v)=0$ for each node $v\in N^{2}(U)$; and (2) $g$ respects $\Pi$.

	\subsection{Properties of Optimal Solutions for Covering/Packing OptDGPs} 
	\label{section:properties-of-canonical-optdgps}
	In the following lemmas, we establish important properties regarding optimal solutions of covering and packing OptDGPs. Consider a covering (resp., packing) OptDGP $\Psi=\pif$. Let $\gio\in \Pi$ be an IO graph such that $\msfo:V\rightarrow\{0,\dots,k(\Pi,n)\}$ is an optimal solution for $G$ and $\msfi$. 
	
	\begin{lemma}\label{lemma:min-dgp-inner}
		If $\Psi$ is a covering MinDGP, then $w(\inner^2(U),\msfo)\leq w_{\min}(U)$ for any $U\subseteq V$.
	\end{lemma}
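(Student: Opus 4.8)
The plan is to turn the global optimal solution $\msfo$ into a ``cheaper on $\inner^{2}(U)$'' feasible solution by splicing in a cheapest $\Pi$-respecting multiplicity function on $U$, and then let optimality of $\msfo$ do the work. First fix a function $g^{*}\colon U\to\{0,\dots,k(\Pi,n)\}$ that respects $\Pi$ and attains $w_{\min}(U)=w(U,g^{*})$; such a $g^{*}$ exists because the restriction $\msfo|_{U}$ already respects $\Pi$ (feasibility of $\msfo$ makes $p^{\Pi}_{\prdiv}$ hold at every $v\in\inner(U)$, whose neighbours all lie in $U$) and there are only finitely many candidate functions. Then I would consider the output assignment
\[
\hat{\msfo}(v)=
\begin{cases}
g^{*}(v), & v\in\inner^{2}(U),\\
\max\{g^{*}(v),\msfo(v)\}, & v\in\rim(U),\\
\msfo(v), & v\notin U.
\end{cases}
\]
Note that $\hat{\msfo}$ agrees with $\msfo$ outside $U$, that $\hat{\msfo}(v)\ge g^{*}(v)$ for every $v\in U$, and that $\hat{\msfo}(v)\ge\msfo(v)$ for every $v\notin\inner^{2}(U)$; also every value of $\hat{\msfo}$ lies in $\{0,\dots,k(\Pi,n)\}$.

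The crux --- and the step I expect to be the main obstacle --- is showing that $\hat{\msfo}$ is a feasible solution for $G$ and $\msfi$, i.e.\ that $p^{\Pi}_{\prdiv}$ holds at every node $v\in V$ under $\hat{\msfo}$. I would split into two cases according to whether $v\in\inner(U)$. If $v\in\inner(U)$, then $N_{G}(v)\subseteq U$, so $\hat{\msfo}(v)\ge g^{*}(v)$ and $\hat{\msfo}(u)\ge g^{*}(u)$ for every neighbour $u$; since $g^{*}$ respects $\Pi$, the predicate $p^{\Pi}_{\prdiv}$ holds at $v$ for the $g^{*}$-values, and condition (3) in the definition of a covering OptDGP (monotonicity under increasing multiplicities) upgrades it to the coordinatewise larger $\hat{\msfo}$-values. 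If $v\notin\inner(U)$ (in particular if $v\notin U$, since $\inner(U)\subseteq U$), then $v$ has no neighbour in $\inner^{2}(U)$ --- if $u\in\inner^{2}(U)=\inner(\inner(U))$ were a neighbour of $v$ then $v\in N_{G}(u)\subseteq\inner(U)$, a contradiction --- hence $\hat{\msfo}(u)\ge\msfo(u)$ for every neighbour $u$, and also $\hat{\msfo}(v)\ge\msfo(v)$; feasibility of $\msfo$ gives $p^{\Pi}_{\prdiv}$ at $v$ for the $\msfo$-values, and the same monotonicity argument upgrades it to $\hat{\msfo}$. These two cases exhaust $V$, so $\langle G,\langle\msfi,\hat{\msfo}\rangle\rangle\in\Pi$.

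Granting feasibility, optimality of $\msfo$ for the covering MinDGP $\Psi$ yields $\sum_{v\in V}w(v)\hat{\msfo}(v)\ge\sum_{v\in V}w(v)\msfo(v)$, and cancelling the identical contributions of the nodes outside $U$ gives $\sum_{v\in U}w(v)\hat{\msfo}(v)\ge\sum_{v\in U}w(v)\msfo(v)$. Using the partition $U=\inner^{2}(U)\sqcup\rim(U)$, substituting the definition of $\hat{\msfo}$, and bounding $\max\{g^{*}(v),\msfo(v)\}-\msfo(v)\le g^{*}(v)$ on $\rim(U)$ (valid because $\msfo(v)\ge 0$), this rearranges to
\[
w(\inner^{2}(U),\msfo)\ \le\ \sum_{v\in\inner^{2}(U)}w(v)g^{*}(v)+\sum_{v\in\rim(U)}w(v)g^{*}(v)\ =\ w(U,g^{*})\ =\ w_{\min}(U),
\]
which is exactly the claim. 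What remains is routine bookkeeping: verifying that the two-case split for feasibility is genuinely exhaustive and that each invocation of monotonicity keeps every argument in $\{0,\dots,k(\Pi,n)\}$ --- both immediate since $\hat{\msfo}$ is built by pointwise maxima and selections of values already in that range.
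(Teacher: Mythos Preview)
Your proof is correct and follows essentially the same approach as the paper: both construct the same hybrid assignment (the paper's $\tilde{\msfo}$ is your $\hat{\msfo}$) and use optimality of $\msfo$ together with the covering monotonicity condition, the only cosmetic differences being that the paper argues by contradiction and is terser about the feasibility case split that you spell out carefully.
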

	\begin{proof}
		Assume by contradiction that $w(\inner^2(U),\msfo)> w_{\min}(U)$ for some $U\subseteq V$. This means that there exists an assignment $\msfo':U\rightarrow\{0,\dots,k(\Pi,n)\}$ that respects $\Pi$, such that $w(\inner^2(U),\msfo)> w(U,\msfo')$. Let $\tilde{\msfo}$ be the output assignment defined as follows: $\tilde{\msfo}(v)=\msfo(v)$ for all $v\in V\setminus U$; $\tilde{\msfo}(v)=\msfo'(v)$ for all $v\in \inner^2(U)$; and $\tilde{\msfo}(v)=\max\{\msfo(v),\msfo'(v)\}$ for all $v\in \rim(U)$. Recall that $\msfo$ is a feasible solution for $G$ and $\msfi$ and that $\msfo'$ respects $\Pi$. Since $\Psi$ is a covering OptDGP, we get that $p^{\Pi}_{\prdiv}(\tilde{\msfo}(v),\tilde{\msfo}(u_1),\dots,\tilde{\msfo}(u_{\deg_{G}(v)}))=\true$ for each node $v\in V$ (where  $u_1,\dots,u_{\deg_G(v)}$ denote $v$'s neighbors in $G$). It follows that $\tilde{\msfo}$ is a feasible solution with objective value $f(G_{\msfi,\tilde{\msfo}})=w(V,\tilde{\msfo})\leq w(V\setminus U, \msfo)+w(\inner^2(U),\msfo')+w(\rim(U),\msfo)+w(\rim(U),\msfo')= w(V,\msfo)-w(\inner^2(U),\msfo)+w(U,\msfo')<w(V,\msfo)=f(\gio)$ which contradicts the optimality of $\msfo$.
	\end{proof}
	
	\begin{lemma}\label{lemma:max-dgp-inner}
		If $\Psi$ is a packing MaxDGP, then $w(U,\msfo)\geq w_{\max}(\inner^{2}(U))$ for any $U\subseteq V$.
	\end{lemma}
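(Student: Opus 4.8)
The plan is to dualize the proof of Lemma~\ref{lemma:min-dgp-inner}, replacing its ``covering $+$ pointwise maximum'' argument with a ``packing $+$ zeroing out'' argument. Fix a node subset $U\subseteq V$ and assume toward a contradiction that $w(U,\msfo)<w_{\max}(\inner^2(U))$. By the definition of $w_{\max}$, there is a function $g:\inner^2(U)\cup N^{2}(\inner^2(U))\rightarrow\{0,\dots,k(\Pi,n)\}$ that respects $\Pi$, satisfies $g(v)=0$ for every $v\in N^{2}(\inner^2(U))$, and attains $w(\inner^2(U),g)=w_{\max}(\inner^2(U))>w(U,\msfo)$. I would then graft $g$ onto $\msfo$ by defining the output assignment $\tilde{\msfo}$ via $\tilde{\msfo}(v)=\msfo(v)$ for $v\in V\setminus U$, $\tilde{\msfo}(v)=g(v)$ for $v\in\inner^2(U)$, and $\tilde{\msfo}(v)=0$ for $v\in\rim(U)$; note $\tilde{\msfo}$ maps into $\{0,\dots,k(\Pi,n)\}$. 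The purpose of zeroing out $\rim(U)$ is that $0$ simultaneously lies pointwise below $\msfo$ and agrees with $g$ on the part of $\rim(U)$ lying within two hops of $\inner^2(U)$, where $g\equiv 0$.

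The heart of the argument is verifying that $\tilde{\msfo}$ is again a feasible solution, i.e., $p^{\Pi}_{\prdiv}(\tilde{\msfo}(v),\tilde{\msfo}(u_1),\dots,\tilde{\msfo}(u_{\deg(v)}))=\true$ at every node $v\in V$ with neighbors $u_1,\dots,u_{\deg(v)}$. I would first record two topological facts: (i) since a node of $\inner^2(U)=\inner(\inner(U))$ has all of its neighbors in $\inner(U)\subseteq U$, no node of $V\setminus U$ is adjacent to a node of $\inner^2(U)$; and (ii) if $v$ is adjacent to some node of $\inner^2(U)$, then $v\in\inner(U)$, hence all of $v$'s neighbors lie in $U$ and, being within two hops of $\inner^2(U)$, in $\inner^2(U)\cup N^{2}(\inner^2(U))$, so that $v\in\inner(\inner^2(U)\cup N^{2}(\inner^2(U)))$. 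Then I would split by the location of $v$: for $v\in V\setminus U$, fact~(i) gives $\tilde{\msfo}(u)\le\msfo(u)$ for every neighbor $u$ while $\tilde{\msfo}(v)=\msfo(v)$, so the packing monotonicity (condition~(3) in the definition of a packing OptDGP) transfers the predicate from the feasible $\msfo$; for $v\in\rim(U)$ not adjacent to $\inner^2(U)$, again every neighbor satisfies $\tilde{\msfo}(u)\le\msfo(u)$ while $\tilde{\msfo}(v)=0\le\msfo(v)$, so monotonicity applies; and for $v\in\inner^2(U)$, or $v\in\rim(U)$ adjacent to $\inner^2(U)$, fact~(ii) together with $g\equiv 0$ on $N^{2}(\inner^2(U))$ shows that $\tilde{\msfo}$ coincides with $g$ at $v$ and at all neighbors of $v$, so the predicate holds because $g$ respects $\Pi$.

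Finally I would compare objective values: $f(G_{\msfi,\tilde{\msfo}})=w(V,\tilde{\msfo})=w(V\setminus U,\msfo)+w(\inner^2(U),g)+w(\rim(U),\tilde{\msfo})=w(V\setminus U,\msfo)+w(\inner^2(U),g)>w(V\setminus U,\msfo)+w(U,\msfo)=w(V,\msfo)=f(\gio)$, using $w(\rim(U),\tilde{\msfo})=0$ and the contradiction hypothesis $w(\inner^2(U),g)>w(U,\msfo)$. Since $\Psi$ is a MaxDGP and $\msfo$ is optimal, no feasible solution can beat $f(\gio)$, a contradiction. I expect the main obstacle to be the bookkeeping in the feasibility check --- in particular, arguing that the zeroed-out buffer $\rim(U)$ is ``thick enough'' to insulate the grafted region from the untouched region. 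This is precisely why the two-hop neighborhood $N^{2}(\cdot)$ (rather than a one-hop neighborhood) appears in the definition of $w_{\max}$, and fact~(ii) is what closes this gap; the packing monotonicity property then does the rest essentially for free.
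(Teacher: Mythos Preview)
Your proposal is correct and follows essentially the same approach as the paper: the construction of $\tilde{\msfo}$ (keep $\msfo$ on $V\setminus U$, use the maximizer on $\inner^2(U)$, zero out $\rim(U)$) and the resulting objective-value contradiction are identical. The only difference is that you spell out the feasibility case analysis via facts~(i) and~(ii), whereas the paper compresses this into a single sentence invoking the packing property.
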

	\begin{proof}
		Assume by contradiction that $w(U,\msfo)< w_{\max}(\inner ^{2}(U))$ for some $U\subseteq V$. This means that there exists an assignment $\msfo':U\rightarrow\{0,\dots,k(\Pi,n)\}$ that respects $\Pi$, such that $\msfo'(v)=0$ for each $v\in \rim(U)$, and $w(U,\msfo)< w(\inner^2(U),\msfo')$. Let $\tilde{\msfo}$ be the output assignment defined as follows: $\tilde{\msfo}(v)=\msfo(v)$ for all $v\in V\setminus U$; $\tilde{\msfo}(v)=\msfo'(v)$ for all $v\in \inner^2(U)$; and $\tilde{\msfo}(v)=0$ for all $v\in \rim(U)$. Recall that $\msfo$ is a feasible solution for $G$ and $\msfi$ and that $\msfo'$ respects $\Pi$. Since $\Psi$ is a packing OptDGP, we get that $p^{\Pi}_{\prdiv}(\tilde{\msfo}(v),\tilde{\msfo}(u_1),\dots,\tilde{\msfo}(u_{\deg_{G}(v)}))=\true$ for each node $v\in V$ (where  $u_1,\dots,u_{\deg_{G}(v) }$ denote $v$'s neighbors in $G$). It follows that $\tilde{\msfo}$ is a feasible solution with objective value $f(G_{\msfi,\tilde{\msfo}})=w(V,\tilde{\msfo})=w(V,\msfo)-w(U,\msfo)+w(\inner^2(U),\msfo')>w(V,\msfo)=f(\gio)$ which contradicts the optimality of $\msfo$.
	\end{proof}

	\subsection{Overview} 
	\label{section:high-level}
	In Section \ref{section:part-one-partition}, we present an SLOCAL algorithm called $\partopt$ that partitions the nodes of a given IO graph $\gio$ into clusters. Before formally describing the algorithm in Section \ref{section:part-one-partition}, let us provide some intuition by presenting the high-level idea of the partition and how it is used in the label and verification process (as described in Section \ref{section:part-two-label-and-verification}) for the case of a canonical MinDGP $\Psi$ (the high-level idea for MaxDGPs is similar and the differences are mostly technical).  
	
	Given an IO graph $\gio$, where $\msfo$ is an optimal solution, we use a ball growing argument to obtain a partition of the nodes into clusters such that: (1) the subgraph induced by each cluster is of logarithmic diameter; and (2) the total weight of nodes in the rim of clusters (i.e., nodes with distance at most $2$ from a different cluster) is an $\epsilon$-fraction of the total weight of inner nodes of clusters (i.e., nodes with distance at least $3$ from a different cluster). 
	
	The goal of the partition obtained by $\partopt$ is to allow the prover to compute the label assigned to each node based on its cluster. Essentially, the prover seeks to provide the verifier with a proof that the partition satisfies $2$ main properties: (1) for each cluster $V_j$, the weight of the given solution induced on the inner nodes is at most the weight of an approximately optimal (global) solution induced on the cluster; and (2) the total weight of nodes in the rim of clusters is at most an $\epsilon$-fraction of the total weight of inner nodes. 
	
	While providing proof for the first property is rather straightforward using the labels of an $\alpha$-APLS for $\Psi$ in a black box manner, providing proof for the second property is somewhat more challenging. The reason is that the property as presented above is rather global --- not every cluster is guaranteed to have at most an $\epsilon$-fraction of its weight assigned to the rim nodes. Constructing a label that sums the total weights of rim and inner nodes of all clusters is a global task and can not be accomplished in a locally restricted fashion. To that end, during $\partopt$, we may assign some nodes with a \emph{secondary affiliation} to an adjacent cluster. The idea is that for each cluster $V_j$, the sum between weights of nodes with secondary affiliation to $V_j$ and nodes in $\rim(V_j)$ that do not have a secondary affiliation to any cluster is bounded by an $\epsilon$-fraction of $V_j$'s inner nodes weight. 
	
	Throughout $\partopt$, each node $v$ maintains a \emph{color} whose role is to keep track of  the changeability status of $v$'s secondary affiliation. The color white indicates that the secondary affiliation may still change; whereas black indicates that the secondary affiliation is final.    
	
	\subsection{Partition Algorithm} 
	\label{section:part-one-partition}
	
	\paragraph*{Algorithm's Description.}
	
	We now provide a formal description of the $\partopt$ algorithm. Consider a canonical OptDGP $\Psi=\pif$. Let $\gio\in \Pi$ be an IO graph such that $\msfo$ is an optimal solution for $G$ and $\msfi$. The algorithm partitions the nodes of $G$ into (possibly empty) clusters $V=V_1\dot{\cup}\dots\dot{\cup}V_n$. As usual in the SLOCAL model, the nodes are processed sequentially in $n$ iterations based on an arbitrary order $v_1,\dots,v_n$ on the nodes, where node $v_j$ is processed in the $j$-th iteration.
	
	Throughout the execution of $\partopt$, each node $v\in V$ maintains three fields referred to as $\text{cluster}(v)$, $\text{sec}(v)$, and $\text{color}(v)$. The field $\text{cluster}(v)$ is initially empty and its role is to identify $v$'s cluster, where each cluster $V_j$ is identified by the id of node $v_j$ which is processed in the $j$-th iteration, i.e., $V_j=\{v\mid \text{cluster}(v)=id(v_j)\}$. The field $\text{sec}(v)$ is initially empty and its role is to identify $v$'s \emph{secondary affiliation} to a cluster if such affiliation exists (otherwise it remains empty throughout the algorithm). The field $\text{color}(v)\in \{\text{black},\text{white}\}$, initially set to white, maintains $v$'s \emph{color}.

	We describe the $j$-th iteration of $\partopt$ as follows. Let $G_j$ be the subgraph induced on $G$ by $V\setminus(V_1\cup\dots\cup V_{j-1})$. If $v_j\in V_1\cup\dots\cup V_{j-1}$, then we define $V_j=\emptyset$ and finish the iteration; so, assume that $v_j$ is a node in $G_j$. For an integer $r\in \mathbb{Z}_{\geq 0}$, let $D_j^r$ be the set of nodes at distance exactly $r$ from $v_j$ in $G_j$ and let $B^r_j=\bigcup_{r'=0}^{r}D_j^{r'}$. Let $white_j$ be the set of nodes in $G_j$ that are colored white in the beginning of the $j$-th iteration.   
	
	Suppose that $\Psi$ is a MinDGP. We define $r(j)$ to be the smallest integer that satisfies $w(\inner^2(B_j^{r(j)+6}),\msfo)\leq (1+\epsilon) \cdot w(\inner ^2(B_j^{r(j)+2}),\msfo)$. Notice that $\inner^2(\cdot)$ is taken with respect to nodes in $G$ (and not $G_j$), i.e., $\inner^2(B_j^{r(j)+6})$ (resp., $\inner ^2(B_j^{r(j)+2})$) is the set of nodes in $B_j^{r(j)+6}$ (resp., $B_j^{r(j)+2}$) for which every node within distance $2$ in $G$ is in $B_j^{r(j)+6}$ (resp., $B_j^{r(j)+2}$). In the case that $\Psi$ is a MaxDGP, define $r(j)$ to be the smallest integer that satisfies $w(B_j^{r(j)+6},\msfo)\leq (1+\epsilon) \cdot w(B_j^{r(j)+2},\msfo)$.

	Following the computation of $r(j)$, we define the cluster $V_j$ and modify the color and secondary affiliation of some nodes as follows (this process is the same for MinDGPs and MaxDGPs). Let $X_j$ be the set of white nodes in $\inner^2(B_j^{r(j)+6})$ at distance exactly $r(j)+3$ from $v_j$, and let $Y_j$ be the set of white nodes in $\inner^2(B_j^{r(j)+6})$ at distance exactly $r(j)+4$ from $v_j$ that have a neighbor in $X_j$. We complete the $j$-th iteration by setting $\text{cluster}(v)=id(v_j)$ for each node $v\in B_j^{r(j)+2}$ (i.e., setting $V_j=B_j^{r(j)+2}$); $\text{sec}(v)=id(v_j)$ for each node $v\in X_j\cup Y_j$; and  $\text{color}(v)=\text{black}$ for each node $v\in X_j$.
	\sloppy 
	
	\paragraph*{Algorithm's Properties.}
	We go on to analyze some properties of $\partopt$. Consider a cluster $V_j$. Let $\mathtt{sec}(V_j)=\{v\mid \text{sec}(v)=id(v_j)\}$ be the set of nodes whose secondary affiliation is to $V_j$ by the end of the algorithm and let $\mathtt{ext}(V_j)=V_j\cup\mathtt{sec}(V_j)$.
	
	\begin{lemma}\label{lemma:cluter-diameter}
		The subgraphs $G(V_j)$ and $G(\mathtt{ext}(V_j))$ induced on $G$ by $V_j$ and $\mathtt{ext}(V_j)$, respectively,  are connected and have diameter $O(\log n)$ for each $j\in[n]$.
	\end{lemma}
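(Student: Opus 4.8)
The plan is to first bound the radius $r(j)$ chosen in the $j$-th iteration of $\partopt$ by $O(\log n)$, and then to deduce both the connectivity and the diameter claims from the fact that $V_j$ and $\mathtt{ext}(V_j)$ are squeezed between concentric balls around $v_j$ of radii $O(\log n)$. First I would dispose of the trivial case: if $v_j\in V_1\cup\dots\cup V_{j-1}$ then $V_j=\emptyset$ and the $j$-th iteration assigns $id(v_j)$ to no node's secondary-affiliation field, so $\mathtt{ext}(V_j)=\emptyset$ and there is nothing to prove; hence I may assume $v_j$ is a node of $G_j$, so that $v_j\in V_j=B_j^{r(j)+2}$.

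For the bound on $r(j)$ I would run a ball-growing argument. In the MinDGP case set $W_t:=w(\inner^2(B_j^{t}),\msfo)$ (in the MaxDGP case $W_t:=w(B_j^{t},\msfo)$; the argument is the same). Monotonicity of $\inner^2(\cdot)$ under inclusion, together with nonnegativity of the summands $w(v)\cdot\msfo(v)$, makes $(W_t)_{t\ge 0}$ nondecreasing, and $W_t\le w(V,\msfo)=f(\gio)\le n\cdot\max_v w(v)\cdot k(\Pi,n)=n^{O(1)}$ by the defining properties of a covering/packing OptDGP. By minimality of $r(j)$, $W_{r+6}>(1+\epsilon)\,W_{r+2}$ for every $r<r(j)$; instantiating $r=0,4,8,\dots$ yields $W_{4(k+1)+2}>(1+\epsilon)\,W_{4k+2}$ for all $k$ with $4k<r(j)$. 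Since each $W_t$ is an integer, the subsequence $(W_{4k+2})_{k}$ is positive either at $k=0$ or right after the first step, and from that point it grows by a factor larger than $1+\epsilon$ at each step while staying $n^{O(1)}$; as $\epsilon$ is a positive constant, this caps the number of steps, and hence $r(j)$, at $O(\log n)$.

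With $r(j)=O(\log n)$ in hand I would finish as follows. Since $V_j=B_j^{r(j)+2}$, every $u\in V_j$ has a shortest $G_j$-path to $v_j$ lying inside $V_j$; because $G_j$ is the subgraph of $G$ induced by a vertex set containing $V_j$, we get $G(V_j)=G_j(V_j)$, so $G(V_j)$ is connected with eccentricity at most $r(j)+2$ from $v_j$, hence of diameter $O(\log n)$. For $\mathtt{ext}(V_j)=V_j\cup\mathtt{sec}(V_j)$ I would use two bookkeeping facts: only the $j$-th iteration can assign $id(v_j)$ to a secondary-affiliation field (node ids are unique), so $\mathtt{sec}(V_j)\subseteq X_j\cup Y_j$; and the nodes of $X_j$ are colored black in that iteration, so no later iteration overwrites their affiliation, giving $X_j\subseteq\mathtt{sec}(V_j)$. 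A node of $X_j$ lies at $G_j$-distance $r(j)+3$ from $v_j$, hence is adjacent on a $G_j$-shortest path to a node of $V_j$, yielding a path to $v_j$ inside $\mathtt{ext}(V_j)$ of length $r(j)+3$; a node of $Y_j\cap\mathtt{sec}(V_j)$ is adjacent to a node of $X_j\subseteq\mathtt{ext}(V_j)$ and so has such a path of length $r(j)+4$. Therefore $G(\mathtt{ext}(V_j))$ is connected with eccentricity at most $r(j)+4$ from $v_j$, hence of diameter $O(\log n)$.

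The hard part will be the ball-growing estimate on $r(j)$, and within it two points deserve care: handling the degenerate regime in which the weighted ball sums $W_t$ start at $0$ (so the multiplicative growth only begins after the first failing step), and justifying the $n^{O(1)}$ ceiling on $W_t$ — this is exactly where the definition of a covering/packing OptDGP is used, via the polynomial bounds on the node weights $w(v)$ and on the multiplicity range $k(\Pi,n)$. The rest is straightforward tracking of how the clusters $V_j$, the sets $X_j,Y_j$, and the node colors evolve across the iterations of $\partopt$.
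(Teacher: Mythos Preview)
Your proposal is correct and follows essentially the same approach as the paper's proof: a ball-growing argument to bound $r(j)=O(\log n)$ via the $n^{O(1)}$ ceiling on the weighted sums, together with the observation that $X_j\subseteq\mathtt{sec}(V_j)\subseteq X_j\cup Y_j$ (using the black coloring of $X_j$) to establish connectivity of $G(\mathtt{ext}(V_j))$. If anything, you are slightly more careful than the paper in handling the degenerate $W_t=0$ regime and in justifying $G(V_j)=G_j(V_j)$.
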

	\begin{proof}
		Suppose that $V_j\neq \emptyset$ (as the lemma is trivial otherwise). First, observe that by definition, all nodes $v\in V_j$ are reachable from $v_j$ in $G(V_j)$, thus $G(V_j)$ is connected.
		
		To see that $G(\mathtt{ext}(V_j))$ is connected, we first observe that the subgraph $G(V_j\cup X_j\cup Y_j)$ is connected. By the time cluster $V_j$ is determined, we color the nodes of $X_j$ black. Thus, their secondary affiliation remains to $V_j$ throughout the algorithm. At termination, it follows that $\mathtt{ext}(V_j)=V_j\cup X_j\cup Y$ for some $Y\subseteq Y_j$. Since the nodes of $Y$ all have a neighbor in $X_j$, we get that $G(\mathtt{ext}(V_j))= G(V_j\cup X_j\cup Y)$ is connected. 
		
		To show that the diameters of $G(V_j)$ and $G(\mathtt{ext}(V_j))$ are $O(\log n)$ it is sufficient to show that $r(j)=O(\log n)$. We use a ball growing argument. By definition, for every $r'<r(j)+6$, it holds that
		$$w(\inner^2(B_j^{r(j)+6}),\msfo)\geq w (\inner^2(B_j^{r'}),\msfo)>(1+\epsilon)\cdot w (\inner^2(B_j^{r'-4}),\msfo)>$$
		$$(1+\epsilon)^2\cdot w (\inner^2(B_j^{r'-8}),\msfo)>\dots$$
		if $\Psi$ is a MinDGP; and
		$$w(B_j^{r(j)+6},\msfo)\geq w (B_j^{r'},\msfo)>(1+\epsilon)\cdot w (B_j^{r'-4},\msfo)>(1+\epsilon)^2\cdot w (B_j^{r'-8},\msfo)>\dots$$
		if $\Psi$ is a MaxDGP. Since the terms $w(\inner^2(B_j^{r(j)+6}),\msfo)$ and $w(B_j^{r(j)+6},\msfo)$ are both bounded by a polynomial of $n$, it follows that $r(j)=O((1/\epsilon)\log n)=O(\log n)$ in both cases.
	\end{proof}
	
	A simple observation derived from Lemma \ref{lemma:cluter-diameter} is that $\partopt$ has locality $O(\log n)$. This observation combined with the results of \cite{GhaffariKM17,RozhonG20} lead to the following corollary.
	\begin{corollary}\label{corollary:round-complexity}
		The algorithm $\partopt$ can be simulated by a LOCAL algorithm with polylogarithmic round-complexity.
	\end{corollary}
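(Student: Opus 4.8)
The plan is to obtain the corollary directly from Lemma~\ref{lemma:cluter-diameter} together with the SLOCAL-to-LOCAL simulation recalled in Section~\ref{section:preliminaries}; the only substantive step is to certify that, viewed as an SLOCAL algorithm, $\partopt$ has locality $O(\log n)$.

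First I would fix the SLOCAL formulation: the mutable state $\text{info}(v)$ of a node $v$ holds its current $\text{cluster}(v)$, $\text{sec}(v)$, and $\text{color}(v)$ fields, which are precisely the data that $\partopt$ reads and updates (the output of a node is read off from its terminal $\text{info}$). Then I would examine a single iteration $j$ in which $v_j$ is a node of $G_j$ --- the other case is vacuous --- and bound the radius around $v_j$ that it inspects or writes into. Recovering $G_j$ from the $\text{cluster}(\cdot)$ fields of nearby nodes, computing the layers $D_j^{r}$ and balls $B_j^{r}$, evaluating $\inner^2(\cdot)$ and $\rim(\cdot)$ with respect to $G$, reading the white/black colors deposited by earlier iterations, and finally writing $V_j$, the secondary affiliations of $X_j\cup Y_j$, and the new colors of $X_j$, all stay within $B_j^{r(j)+6}$ together with a constant-radius neighborhood of that set in $G$. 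Since distances in the subgraph $G_j$ are at least as large as in $G$, this region lies inside the radius-$(r(j)+O(1))$ ball around $v_j$ in $G$; and by the ball-growing estimate in the proof of Lemma~\ref{lemma:cluter-diameter}, $r(j)=O((1/\epsilon)\log n)$ with a hidden constant depending only on $\Psi$ (through the $n^{O(1)}$ bounds on $k(\Pi,n)$ and on the node weights). For constant $\epsilon$ this is $O(\log n)$ uniformly over all $j$ and all input graphs, so a single locality parameter $r=O(\log n)=\log^{O(1)}n$ suffices.

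Having exhibited $\partopt$ as an SLOCAL algorithm of polylogarithmic locality, I would finish by invoking the consequence of \cite{GhaffariKM17,RozhonG20} stated in Section~\ref{section:preliminaries}, which turns any such algorithm into a LOCAL algorithm running in $\log^{O(1)}n$ rounds.

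I expect the one delicate point to be the radius bookkeeping in the second paragraph: one must be explicit that the adaptively chosen $r(j)$ is always $O(\log n)$, that taking $\inner^2(\cdot)$ in $G$ rather than in $G_j$ enlarges the exploration radius by only an additive constant, and that consulting the colors written in earlier iterations does not force a larger ball. All of this is routine once the estimate from Lemma~\ref{lemma:cluter-diameter} is available.
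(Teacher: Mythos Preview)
Your proposal is correct and follows essentially the same approach as the paper: the paper simply notes that Lemma~\ref{lemma:cluter-diameter} implies $\partopt$ has locality $O(\log n)$ and then invokes the SLOCAL-to-LOCAL simulation of \cite{GhaffariKM17,RozhonG20}. Your write-up is more explicit about the radius bookkeeping, but the underlying argument is identical.
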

	
	For each $j\in [n]$, define $S_j=\mathtt{sec}(V_j)\cup \{v\in \rim(V_j)\mid \text{sec}(v)\text{ is empty}\}$ as the set of nodes composed of nodes outside of $V_j$ whose secondary affiliation is to $V_j$ and nodes in $\rim(V_j)$ that do not have a secondary affiliation. The following observation establishes an important property regarding the sets $\rim(V_j)$ and $S_j$.
	
	\begin{observation}\label{observation:rim}
		$\bigcup_{j\in [n]}\rim(V_j)\subseteq \bigcup_{j\in [n]}S_j$
	\end{observation}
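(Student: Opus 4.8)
The plan is to show that every node $v$ that lies in $\rim(V_j)$ for some $j$ is captured by $S_{j'}$ for some (possibly different) $j'$. Fix such a node $v$, and let $j$ be the index with $v \in \rim(V_j)$, i.e. $v \in V_j \setminus \inner^2(V_j)$. I would split into two cases according to whether $v$'s secondary affiliation field $\text{sec}(v)$ is empty at termination. If $\text{sec}(v)$ is empty, then by definition $v \in \{u \in \rim(V_j) \mid \text{sec}(u) \text{ is empty}\} \subseteq S_j$, and we are done immediately. So the substantive case is when $\text{sec}(v) = id(v_{j'})$ for some $j'$; then $v \in \mathtt{sec}(V_{j'}) \subseteq S_{j'}$, again by the definition of $S_{j'}$. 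Hence in either case $v \in \bigcup_{j' \in [n]} S_{j'}$, which is exactly the claimed containment.

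The only thing that needs checking is that this case split is exhaustive and that the membership assertions hold verbatim against the definitions — in particular that $\mathtt{sec}(V_{j'}) = \{u \mid \text{sec}(u) = id(v_{j'})\}$ is the set of nodes with secondary affiliation to $V_{j'}$ at the end of the algorithm, which is precisely how $\mathtt{sec}(V_{j'})$ was defined just before Lemma~\ref{lemma:cluter-diameter}, and that this set is included in $S_{j'}$ by the definition $S_{j'} = \mathtt{sec}(V_{j'}) \cup \{u \in \rim(V_{j'}) \mid \text{sec}(u) \text{ is empty}\}$. The secondary-affiliation field takes a well-defined value at termination for every node (either empty or the id of exactly one cluster, since $\text{sec}(v)$ is only ever written once, when $v$ enters some $X_{j'} \cup Y_{j'}$, and the white/black coloring controls whether it can be overwritten), so the dichotomy "empty vs.\ equal to some $id(v_{j'})$" covers all possibilities.

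I do not anticipate a genuine obstacle here: the observation is essentially a bookkeeping identity that unpacks the definitions of $\rim(V_j)$, $\mathtt{sec}(V_j)$, and $S_j$. The one subtlety worth a sentence in the write-up is that the index $j'$ witnessing $v \in S_{j'}$ need not equal the index $j$ with $v \in \rim(V_j)$ — this is exactly the point of introducing secondary affiliations in the first place, as flagged in the overview in Section~\ref{section:high-level}. So the proof is a two-line case analysis, and the main care is simply to state it so that the reader sees the union on the right-hand side genuinely needs to range over all $j' \in [n]$, not just the single $j$.
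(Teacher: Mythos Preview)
Your proposal is correct and follows essentially the same two-case argument as the paper: fix $v\in\rim(V_j)$, and according to whether $\text{sec}(v)$ is empty or equals some $id(v_{j'})$, conclude $v\in S_j$ or $v\in S_{j'}$ respectively. One tiny inaccuracy in your parenthetical: $\text{sec}(v)$ is not necessarily written only once (a node in some $Y_{j'}$ stays white and may have its secondary affiliation overwritten later), but this does not affect your argument, since all you need is that at termination $\text{sec}(v)$ is either empty or equal to a single cluster id.
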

	
	\begin{proof}
		Consider a node $v\in\rim(V_j)$ for some $j\in [n]$. By definition, if $\text{sec}(v)$ is empty, then $v\in S_j$. If $\text{sec}(v)$ is not empty, then there exists some $j'\in[n]$ such that $v\in \mathtt{sec}(V_{j'})$, and therefore $v\in S_{j'}$. Overall, we get that $v\in \bigcup_{\ell\in [n]}S_\ell$.
	\end{proof}

	\subsection{Labels and Verification} 
	\label{section:part-two-label-and-verification}
	In this section, we describe the label assignment and verification process of our compiler. We start in Section \ref{section:label-mindgp}, by describing the compiler's label and verification for MinDGPs. In Section \ref{section:label-maxdgp}, we go on to describe the changes required to establish the same for MaxDGPs. In both cases, we establish the proof size and correctness of our construction, thus proving Theorem \ref{theorem:main-optdgp}. 
	\subsubsection{MinDGPs}
	\label{section:label-mindgp}
	Consider a canonical MinDGP $\Psi=\pif$ and an IO graph $\gio\in \Pi$, where $\msfo$ is an optimal solution for $G$ and $\msfi$. The prover uses the SLOCAL algorithm $\partopt$ presented in Section \ref{section:part-one-partition} to compute the values $r(j)$ and subsets $V_j$, $\mathtt{sec}(V_j)$, $\mathtt{ext}(V_j)=V_j\cup \mathtt{sec}(V_j)$, and $S_j=\mathtt{sec}(V_j)\cup \{v\in \rim(V_j)\mid \text{sec}(v)\text{ is empty}\}$ for all $j\in[ n]$.
	
	The goal of the prover is to provide proof of four properties satisfied by the given solution $\msfo$ and the outcome of $\partopt$. We refer to those properties as \emph{feasibility}, \emph{rim}, \emph{growth}, and \emph{optimality}. The four properties are defined as follows. The feasibility property states that $\msfo$ is a feasible solution for $G$ and $\msfi$, i.e., $\gio\in\Pi$; the rim property states that for each $j\in [n]$ and node $v\in\rim(V_j)$, there exists $j'\in [n]$, such that $v\in S_{j'}$; the growth property states that $w(S_j,\msfo)\leq \epsilon \cdot w(\inner ^2(V_j),\msfo)$ for each $j\in[n]$; and the optimality property states that $w(\inner ^2(V_j),\msfo)\leq \alpha \cdot w_{\min}(V_j)$ for each $j\in [n]$. 
	
	The prover provides its proof by means of a label assignment $L:\Vfunc$ that assigns each node $v$ with a label $L(v)=\langle L_{\text{feas}}(v),L_{\text{rim}}(v),L_{\text{grw}}(v),L_{\text{opt}}(v)\rangle$. The label $L(v)$ is composed of the fields $L_{\text{feas}}(v)$, $L_{\text{rim}}(v)$, $L_{\text{grw}}(v)$, and $L_{\text{opt}}(v)$ that provide proof for the feasibility, rim, growth, and optimality properties, respectively. 
	
	The field $L_{\text{feas}}(\cdot)$ provides a proof for the feasibility property by setting $L_{\text{feas}}(v)=\msfo(v)$ for each node $v\in V$. Notice that since $\Pi$ is an LCL, verifying $\msfo$'s feasibility is done by checking that $L_{\text{feas}}(v)=\msfo(v)$, and $p_{\prdiv}^{\Pi}(L_{\text{feas}}(v), L_{\text{feas}}(u_1),\dots L_{\text{feas}}(u_{\deg_G(v)}))=\true$ at each node $v$ with neighbors $u_1,\dots,u_{\deg_G(v)}$.
	
	The field $L_{\text{rim}}(\cdot)$ provides a proof for the rim property as follows. First, the sets $V_j$ and $S_j$ are encoded for all $1\leq j\leq n$, where each of the sets is identified by $id(v_j)$. In addition, each node $v\in \rim(V_j)$ is assigned the minimal distance to a node $u\notin V_j$ (notice that by definition, these values are either $1$ or $2$). This allows the verifier to check that for each node $v\in \rim(V_j)$, there exists $j'\in [n]$ such that $v\in S_{j'}$, i.e., verify that the rim property is satisfied.

	The field $L_{\text{grw}}(\cdot)$ provides a proof for the growth property simply by using a comparison scheme (as defined in Section \ref{section:preliminaries}) that compares between $w(S_j,\msfo)$ and $\epsilon \cdot w(\inner ^2(V_j),\msfo)$. This comparison scheme is used concurrently for each $\mathtt{ext}(V_j)\neq \emptyset$, based on a shortest paths tree of $G(\mathtt{ext}(V_j))$ rooted at node $v_j$. Observe that by Lemma \ref{lemma:cluter-diameter}, this tree spans the nodes of $\mathtt{ext}(V_j)$ and has diameter $O(\log n)$. 
	
	The field $L_{\text{opt}}(\cdot)$ provides a proof for the optimality property as follows. First, for each $V_j\neq \emptyset$, the prover computes an assignment $g_j:V_j\rightarrow \{0,\dots ,k(\Pi,n)\}$, such that $g_j$ respects $\Pi$ and $w(V_j,g_j)=w_{\min}(V_j)$. The prover assigns each node $v\in V_j$ with the multiplicity $g_j(v)$ and proves that $ w(\inner^2(V_j),\msfo)\leq  w(V_j,g_j)$ by means of a comparison scheme based on a shortest paths (spanning) tree of $G(V_j)$ rooted at node $v_j$. Finally, the prover proves that $w(V_j,g_j)\leq \alpha\cdot w_{\min}(V_j)$ by means of an $\alpha$-APLS for $\Psi$ on the configured subgraph $\gio(V_j)$. Notice that an $\alpha$-APLS for $\Psi$ is well-defined over the instance $\gio(V_j)$ since $\Pi$ is self-induced, and thus $\gi(V_j)\in\legpi$.

	\paragraph*{Proof Size and Correctness.}
	
	We observe that the label assignment produced by the prover can be computed by means of an SLOCAL algorithm with locality $O(\log n)$ and thus it can be simulated by a locally restricted prover. Moreover, for each node $v\in V$, the sub-labels  $L_{\text{feas}}(v)$, $L_{\text{rim}}(v)$, and $L_{\text{grw} }(v)$ are of size $O(\log n)$; whereas $L_{\text{opt}}(v)$ is of size $\ell_{\Psi,\alpha}+O(\log n )$, where $\ell_{\Psi,\alpha}$ is the proof size of an $\alpha$-APLS for $\Psi$. Overall, the proof size of this scheme is $\ell_{\Psi,\alpha}+O(\log n)$. 
	
	Regarding the correctness requirements, we start by showing the completeness requirement, i.e., we show that if $\msfo$ is an optimal solution for $G$ and $\msfi$, then the verifier accepts $\gio$. To that end, it is sufficient to show that all four aforementioned properties are satisfied. The feasibility property holds since by definition, $\msfo$ is a feasible solution for $G$ and $\msfi$; the rim property follows directly from Observation \ref{observation:rim}; the growth property holds by the construction of the clusters $V_j$; and the optimality property follows from Lemma \ref{lemma:min-dgp-inner}. We note that as established in Lemma \ref{lemma:min-dgp-inner}, the optimality property is satisfied by $\msfo$ with parameter $\alpha=1$. However, providing proof for this stronger property might be costly in terms of proof size. Thus, to obtain a small proof size, we settle for an approximated version.
	
	As for the soundness requirement, consider an IO graph $\gio$ such that the verifier accepts $\gio$. This means that all four properties hold for $\gio$. First, observe that by the feasibility property, it holds that $\gio \in\Pi$. Let $V^{L}_1,\dots V^{L}_k$ and $S^{L}_1,\dots S^{L}_k$ be the subsets $V_j$ and $S_j$ encoded by the prover in the field $L_{\text{rim}}(\cdot)$. By the rim property, it holds that $\bigcup_{j\in[k]}\rim(V^{L}_j)\subseteq \bigcup_{j\in[k]}S^{L}_j$. From the growth property it follows that $\epsilon\cdot w(\bigcup _{j\in[k]}\inner ^2(V^{L}_j),\msfo)\geq w(\bigcup_{j\in[k]}S^{L}_{j},\msfo)\geq w(\bigcup_{j\in[k]}\rim(V^{L}_j),\msfo)$.  Let $\msfo^{*}:V\rightarrow\{0,\dots ,k(\Pi,n)\}$ be an optimal solution for $G$ and $\msfi$. We observe that for any $U\subseteq V$, the assignment of $\msfo^{*}$ on the nodes of $U$ must respect $\Pi$, and therefore $w(U,\msfo^{*})\geq w_{\min}(U)$. The optimality property combined with the last observation implies that $w(\bigcup_{j\in[k]}\inner ^2(V^{L}_j),\msfo)\leq \alpha (w_{\min}(V^{L}_1)+\dots+w_{\min}(V^{L}_k))\leq \alpha(w(V^{L}_1,\msfo^{*})+\dots +w(V^{L}_k,\msfo^{*}))= \alpha \cdot w(V,\msfo^{*})=\alpha \cdot f(G_{\msfi,\msfo^{*}})$. Combining this inequality with the rim and growth properties implies that $f(\gio)=w(V,\msfo)=w(\bigcup_{j\in[k]}\inner ^2(V^{L}_j),\msfo)+w(\bigcup_{j\in[k]}\rim(V^{L}_j),\msfo) \leq (1+\epsilon)\cdot w(\bigcup_{j\in[k]}\inner ^2(V^{L}_j),\msfo)\leq \alpha\cdot(1+\epsilon) \cdot f(G_{\msfi,\msfo^{*}})$, thus establishing the soundness requirement.

	\subsubsection{MaxDGPs}
	\label{section:label-maxdgp}
	Consider a canonical MaxDGP $\Psi=\pif$ and an IO graph $\gio\in \Pi$, where $\msfo$ is an optimal solution for $G$ and $\msfi$. We apply the same label construction and verification process  as the one described in Section \ref{section:label-mindgp} for MinDGPs with the only differences being in the definitions of the growth and optimality properties, along with their designated label fields $L_{\text{grw} }(\cdot)$ and $L_{\text{opt}}(\cdot)$. Let $V_j$, $\mathtt{sec}(V_j)$, $\mathtt{ext}(V_j)$, and $S_j$ denote the subsets obtained by the $\partopt$ algorithm (as defined in Section \ref{section:part-one-partition}) for all $1\leq j\leq n$. Let $T_j=\inner^{2}(V_j)\cup S_j$ and let $g_j$ be an assignment $g_j:T_j\cup N^{2}(T_j)\rightarrow \{0,\dots,k(\Pi,n)\}$ that realizes $w_{\max}(T_j)$.
	
	The growth property for MaxDGPs states that $(1+\epsilon)\cdot w(V_j,\msfo)\geq w(T_j,g_j)$. The field $L_{\text{grw}}(\cdot)$ provides a proof for the growth property as follows. First, the multiplicity $g_j(v)$ is assigned to each $v\in T_j$. Then, similarly to the construction in the MinDGP case, the growth property can be proven by means of a comparison scheme that compares between  $w(T_j,g_j)$ and $(1+\epsilon)\cdot w(V_j,\msfo)$. Notice that the subsets $T_j$ can be deduced from the field $L_{\text{rim}}(\cdot)$.
	
	The optimality property states that $\alpha\cdot w(T_j,g_j)\geq w_{\max}(T_j)$. The field $L_{\text{opt}}(\cdot)$ provides a proof for the optimality property by means of an $\alpha$-APLS for $\Psi$ that evaluates the proposed solution $g_j(v)$ on the configured subgraph $\gio(T_j)$.

	\paragraph*{Proof Size and Correctness.}
	Similarly to the case with MinDGPs, the APLS described is locally restricted and has a proof size of  $\ell_{\Psi,\alpha}+O(\log n)$. In terms of correctness, we first analyze the completeness requirement. Consider a given IO graph $\gio$ such that $\msfo$ is an optimal solution for $G$ and $\msfi$. The feasibility and rim properties hold as established in \ref{section:label-mindgp}. The optimality property holds trivially. As for the growth property, first observe that the cluster $V_j$ is defined so that $(1+\epsilon)\cdot w(V_j,\msfo)\geq w(B_{j}^{r(j)+6},\msfo)$. Since $T_j\subseteq \inner ^{2}(B_{j}^{r(j)+6})$, it follows from Lemma \ref{lemma:max-dgp-inner} that $(1+\epsilon)\cdot w(V_j,\msfo)\geq w(B_{j}^{r(j)+6},\msfo)\geq w_{\max}(\inner^{2}(B_{j}^{r(j)+6}))\geq  w(T_j,g_j)$, which establishes the growth property.
	
	Regarding the soundness, consider an IO graph $\gio$ such that the verifier accepts $\gio$. Observe that by the feasibility property, it holds that $\gio \in\Pi$. Let $V^{L}_1,\dots V^{L}_k$ and $S^{L}_1,\dots S^{L}_k$ be the subsets $V_j$ and $S_j$ encoded by the prover in the field $L_{\text{rim}}(\cdot)$. Let $T^{L}_{j}=\inner^{2}(V^{L}_j)\cup  S^{L}_j$ and let $g^{L}_j(v)$ be the multiplicity value encoded in $L_{\text{grw}}(v)$ for each $v\in T_j$. Define $\msfo^{*}:V\rightarrow\{0,\dots ,k(\Pi,n)\}$ to be an optimal solution for $G$ and $\msfi$ and observe that $w(T^{L}_j,\msfo^{*})\leq w_{\max}(T^{L}_j)$ for every $1\leq j\leq k$. It follows from the rim property that $f(G_{\msfi,\msfo^{*}})=w(V^{L}_1,\msfo^{*})+\dots +w(V^{L}_k,\msfo^{*})=w(\bigcup_{j\in[k]}\inner ^2(V^{L}_j),\msfo^{*})+ w(\bigcup_{j\in[k]}\rim (V^{L}_j),\msfo^{*})\leq w(\bigcup_{j\in[k]}\inner ^2(V^{L}_j),\msfo^{*})+ w(\bigcup_{j\in[k]} S^{L}_j,\msfo^{*})=w(\bigcup_{j\in[k]}T^{L}_j,\msfo^{*})\leq \sum_{j\in[k]}w_{\max}(T^{L}_j)$. Now, from the optimality and growth properties, we get that $f(G_{\msfi,\msfo^{*}})\leq \sum_{j\in[k]}w_{\max}(T^{L}_j)\leq \alpha \cdot \sum_{j\in[k]}w(T^{L}_j,g^{L}_j)\leq \alpha \cdot (1+\epsilon)\cdot \sum _{j\in[k]}w(V^{L}_j,\msfo)=\alpha \cdot (1+\epsilon)\cdot f(\gio)$ which establishes the soundness requirement.
	
	\section{Compiler for CGFs}
	\label{section:compiler-cgfs}
	In this section, we present our generic compiler for CGFs. It is divided into four subsections as follows. First, in Section \ref{section:su-closed-cgfs} we characterize the CGFs that are suited for our compiler, namely SU-closed CGFs. Following that, Sections \ref{section:cgf-partition} and \ref{section:cgf-label} are dedicated to the compiler construction. More formally, these sections constructively prove the following theorem.
	
	\begin{theorem}\label{theorem:main-cgf}
		Let $\Phi$ be an SU-closed CGF that admits a PLS with a proof size of $\ell_{\Phi}$. For any constant $\delta>0$, there exists a locally restricted $\delta$-TPLS for $\Phi$ with a proof size of $\ell_{\Phi}+O(\log n)$.
	\end{theorem}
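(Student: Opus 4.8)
The plan is to mirror the structure of the OptDGP compiler (Theorem \ref{theorem:main-optdgp}), adapting the three ingredients---partition, local labels, global comparison---to the property-testing setting. First I would run a variant of the $\partopt$ algorithm, call it $\partcgf$, on the input configured graph $G_s$: an SLOCAL ball-growing procedure with locality $O(\log n)$ that produces a partition $V = V_1 \dot\cup \dots \dot\cup V_n$ into clusters, each inducing a connected subgraph of $O(\log n)$ diameter, together with secondary affiliations and colors exactly as before. The twist relative to the OptDGP case is the stopping criterion for $r(j)$: instead of comparing weighted sums of multiplicities, the ball grows until the number of edges in the ``rim shell'' (edges incident to the boundary layers $B_j^{r(j)+6}\setminus B_j^{r(j)+2}$) is at most a $\delta$-fraction of the number of edges strictly inside $B_j^{r(j)+2}$. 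Because the total edge count $m$ is polynomially bounded, the same geometric-growth argument as in Lemma \ref{lemma:cluter-diameter} gives $r(j) = O((1/\delta)\log n) = O(\log n)$, so $\partcgf$ has locality $O(\log n)$ and by \cite{GhaffariKM17,RozhonG20} can be simulated by a locally restricted prover.

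Next I would describe the label assignment. On a yes-instance $G_s \in \Phi$, the prover first discards all ``rim edges'' --- the at most $\delta m$ edges crossing between distinct clusters or lying in the thin boundary shells --- to obtain a configured subgraph $G'_{s'}$ which, by SU-closure (closure under node-induced subgraphs and disjoint union), still lies in $\Phi$; indeed $G'$ is a disjoint union of node-induced subgraphs $G(\inner^2(V_j))$, each of which is in $\Phi$ individually. The prover then runs the given PLS for $\Phi$ on each cluster-component $G'_{s'}(\inner^2(V_j))$ in a black-box manner, writing the resulting PLS labels (of size $\ell_\Phi$) into a sub-field $L_{\Phi}(\cdot)$; in parallel it writes into the remaining $O(\log n)$-size sub-fields (i) an encoding of the clusters $V_j$ and the sets $S_j$ and the minimal boundary distances, exactly as in the $L_{\text{rim}}$ field of Section \ref{section:label-mindgp}, so the verifier can recover which edges are rim edges and can check the rim property $\bigcup_j \rim(V_j) \subseteq \bigcup_j S_j$; and (ii) a comparison-scheme certificate, run concurrently over each tree spanning $G(\mathtt{ext}(V_j))$, that certifies the growth/edge-budget property --- the number of rim edges charged to cluster $V_j$ is at most $\delta$ times the number of edges inside $V_j$. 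The verifier at each node checks local consistency of the cluster encoding, checks the comparison-scheme arithmetic, and runs the PLS verifier of $\Phi$ restricted to the non-rim edges incident to it.

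For correctness: completeness is immediate --- on $G_s \in \Phi$ the partition exists, the rim and growth properties hold by construction of $\partcgf$, and the black-box PLS accepts each cluster-component since it lies in $\Phi$. For soundness, suppose the verifier accepts $G_s$. The comparison scheme together with the rim property guarantees that the total number of rim edges is at most $\delta m$ (here I would sum the per-cluster bounds, using Observation-style reasoning analogous to Observation \ref{observation:rim} to avoid double-counting, and the fact that the boundary shells of different clusters are edge-disjoint by the $+6$ vs.\ $+2$ slack in the ball radii --- this slack is precisely why the shells are thick enough to absorb the rim without overlap). Removing those $\le \delta m$ edges yields $G'_{s'}$, a disjoint union of the node-induced pieces $G'_{s'}(\inner^2(V_j))$; the PLS verifier having accepted on each piece forces each piece to be in $\Phi$, and by closure under disjoint union $G'_{s'} \in \Phi$, so $G_s$ is $\delta$-close to $\Phi$ and is not in the no-family. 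The proof size is $\ell_\Phi + O(\log n)$ as claimed.

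The main obstacle I anticipate is the edge-disjointness / non-double-counting bookkeeping in the soundness direction: I need the ball-growing radii and the secondary-affiliation mechanism to be set up so that every rim edge is charged to exactly one cluster and the per-cluster $\delta$-budgets genuinely sum to a global $\delta m$ budget --- this is the place where the $+6,+4,+3,+2$ offsets and the white/black coloring from $\partopt$ have to be re-tuned for the edge-counting measure rather than the weighted-multiplicity measure, and where a careless definition would either overcount rim edges or fail to keep the shells of neighboring clusters separated. A secondary technical point is verifying that the black-box PLS for $\Phi$ is well-defined on each $G'_{s'}(\inner^2(V_j))$, which is exactly what closure under node-induced subgraphs buys us (the analogue of self-inducedness in the OptDGP compiler).
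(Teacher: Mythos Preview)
Your high-level strategy---SLOCAL ball-growing partition, black-box PLS on the pieces, comparison scheme for the edge budget, SU-closure to reassemble---is exactly the paper's approach. However, you have imported considerably more machinery from the OptDGP compiler than is actually needed, and the paper's proof is substantially simpler as a result.

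The key simplification you are missing is this: in the CGF setting there is no ``given solution vs.\ locally-optimal solution'' comparison to protect at the cluster boundaries, so the entire $\inner^2/\rim$ apparatus, the $+6,+4,+3,+2$ offsets, and the white/black coloring are all unnecessary. The paper's $\partcgf$ grows the ball until the number of \emph{crossing} edges $|C_j^{r(j)}|$ (edges with exactly one endpoint in $B_j^{r(j)}$) is at most $\delta$ times the number of \emph{interior} edges $|E_j^{r(j)}|$; it then sets $V_j=B_j^{r(j)}$ and assigns secondary affiliation only to the single layer $D_j^{r(j)+1}$. The PLS for $\Phi$ is then applied to the \emph{full} induced subgraph $G_s(V_j)$, not to $G_s(\inner^2(V_j))$. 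In the soundness direction, removing only the cross-cluster edges (which the secondary-clusters property certifies are exactly the edges in $\bigcup_j F_j$) already yields the disjoint union of the $G_s(V_j)$'s, each of which is in $\Phi$ by closure under node-induced subgraphs; summing the per-cluster crossing-edge budgets gives the global $\delta m$ bound directly, with no double-counting issue since each crossing edge is charged to the cluster of its earlier-processed endpoint.

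Your version would work, but it forces you to bound \emph{all} edges incident to rim nodes (not just crossing edges) and to argue that the resulting graph---which now has isolated rim nodes floating around---is still in $\Phi$. Both points can be handled, and you correctly flag the bookkeeping as the main obstacle, but the paper sidesteps this entirely by never introducing a rim in the first place.
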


	For convenience, the compiler construction is divided between Sections \ref{section:cgf-partition}, in which we present an SLOCAL partition algorithm (that plays a similar role to the one presented in the OptDGP compiler); and \ref{section:cgf-label}, in which we describe the label assignment and verification process. In Section \ref{section:cgf-high-level}, we provide a high-level overview of the SLOCAL algorithm and how it is used in the label assignment and verification process.

	\subsection{SU-Closed CGFs}
	\label{section:su-closed-cgfs}
	A CGF $\Phi$ is said to be \emph{closed under node-induced subgraphs} if for every configured graph $G_s\in\Phi$ and node subset $U\subseteq V$, it holds that $G_s(U)\in \Phi$. We say that two configured graphs $G_s=\langle G=(V,E),s\rangle$ and $G'_{s'}=\langle G'=(V',E'),s'\rangle$ are \emph{disjoint} if $V\cap V'=\emptyset$. We define the \emph{disjoint union} between two disjoint configured graphs $G_s=\langle G=(V,E),s\rangle$ and $G'_{s'}=\langle G'=(V',E'),s'\rangle$ as the configured graph $\tilde{G}_{\tilde{s}}=\langle \tilde{G},\tilde{s}\rangle$ consisting of the graph $\tilde{G}=(V\dot{\cup}V',E\dot{\cup}E')$ and the configuration function $\tilde{s}:V\dot{\cup}V'\rightarrow \{0,1\}^{*}$ that assigns the local configuration $\tilde{s}(v)=s(v)$ to any node $v\in V$; and $\tilde{s}(v)=s'(v)$ to any node $v\in V'$. We say that a CGF $\Phi$ is \emph{closed under disjoint union} if for any two disjoint configured graphs $G_s,G'_{s'}\in\Phi$ with disjoint union $\tilde{G}_{\tilde{s}}$, it holds that $\tilde{G}_{\tilde{s}}\in\Phi$. We refer to a CGF $\Phi$ as \emph{SU-closed} if it is closed under node-induced subgraphs and under disjoint union.

	\subsection{Overview}
	\label{section:cgf-high-level}
	In Section \ref{section:cgf-partition}, we present an SLOCAL algorithm called $\partcgf$ that partitions the nodes of a given configured graph $G_s$ into clusters. Before formally describing the algorithm in Section \ref{section:cgf-partition}, let us provide some intuition by presenting the high-level idea of the partition and how it is used to design the label assignment and verification process for an SU-closed CGF $\Phi$.  
	
	Given a configured graph $G_s\in \Phi$, we use a ball growing argument to obtain a partition of the nodes into clusters such that: (1) the subgraph induced by each cluster is of logarithmic diameter; and (2) the number of crossing edges between clusters is a $\delta$-fraction of the number of edges in the clusters. 
	
	In order to allow the prover to provide a proof for the second property by local means, during $\partcgf$, some nodes may be assigned a \emph{secondary affiliation} to an adjacent cluster. The idea is that for each cluster $V_j$, the number of crossing edges to nodes with secondary affiliation to $V_j$ is a $\delta$-fraction of the number of edges within $V_j$.   
	
	\subsection{Partition Algorithm}
	\label{section:cgf-partition}
	\paragraph*{Algorithm's Description.}
	We now provide a formal description of the $\partcgf$ algorithm. Consider an SU-closed CGF $\Phi$ and a configured graph $G_s\in \Phi$. The algorithm partitions the nodes of $G$ into (possibly empty) clusters $V=V_1\dot{\cup}\dots\dot{\cup}V_n$. As usual in the SLOCAL model, the nodes are processed sequentially in $n$ iterations based on an arbitrary order $v_1,\dots,v_n$ on the nodes, where node $v_j$ is processed in the $j$-th iteration.
	
	Throughout the execution of $\partcgf$, each node $v\in V$ maintains two fields referred to as $\text{cluster}(v)$ and $\text{sec}(v)$. The field $\text{cluster}(v)$ is initially empty and its role is to identify $v$'s cluster, where each cluster $V_j$ is identified by the id of node $v_j$ which is processed in the $j$-th iteration, i.e., $V_j=\{v\mid \text{cluster}(v)=id(v_j)\}$. The field $\text{sec}(v)$ is initially empty and its role is to identify $v$'s \emph{secondary affiliation} to a cluster if such affiliation exists (otherwise it remains empty throughout the algorithm).

	The $j$-th iteration of $\partcgf$ is executed as follows. Let $G_j$ to be the subgraph induced on $G$ by $V\setminus(V_1\cup\dots\cup V_{j-1})$. If $v_j\in V_1\cup\dots\cup V_{j-1}$, then we define $V_j=\emptyset$ and finish the iteration; so, assume that $v_j$ is a node in $G_j$. For an integer $r\in \mathbb{Z}_{\geq 0}$, let $D_j^r$ be the set of nodes at distance exactly $r$ from $v_j$ in $G_j$ and let $B^r_j=\bigcup_{r'=0}^{r}D_j^{r'}$. Let $E^{r}_{j}$ be the set of edges in the subgraph $G(B^r_j)$ and let $C^{r}_{j}$ be the set of edges $(u,v)\in E$, such that $u\in B^r_j$ and  $v\notin B^r_j$. Define $r(j)$ to be the smallest integer that satisfies $|C^{r(j)}_{j}|\leq \delta \cdot |E^{r(j)}_{j}|$. The $j$-th iteration is completed by setting $\text{cluster}(v)=id(v_j)$ for each node $v\in B_j^{r(j)}$; and $\text{sec}(v)=id(v_j)$ for each node $v\in D_j^{r(j)+1}$.

	\paragraph*{Algorithm's Properties.}
	The following lemma establishes an upper bound on the diameter of each subgraph $G(V_j)$. 
	
	\begin{lemma}\label{lemma:cluter-diameter-cgf}
		The diameter of subgraph $G(V_j)$ is $O(\log n)$ for each $j\in[n]$.
	\end{lemma}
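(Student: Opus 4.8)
The plan is to follow the ball-growing argument used in the proof of Lemma~\ref{lemma:cluter-diameter}. Fix $j\in[n]$ and assume $V_j\neq\emptyset$ (the claim being trivial otherwise), so that $v_j$ is a node of $G_j$ and $V_j=B_j^{r(j)}$. Since $B_j^{r(j)}\subseteq V(G_j)$ and $G_j$ is an induced subgraph of $G$, the graph $G(V_j)=G(B_j^{r(j)})$ is also the subgraph of $G_j$ induced on $B_j^{r(j)}$; in particular, within $G(V_j)$ every node of $B_j^{r(j)}$ is joined to $v_j$ by a path of length at most $r(j)$ (a shortest path in $G_j$, which stays inside $B_j^{r(j)}$), so $G(V_j)$ is connected with diameter at most $2\,r(j)$. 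Hence it suffices to prove that $r(j)=O(\log n)$.

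If $r(j)=0$ we are done, so assume $r(j)\geq 1$. The crux is the ball-growing inequality $|E_j^{r+1}|\geq |E_j^r|+|C_j^r|$, valid for every $r\geq 0$: each edge counted in $C_j^r$ joins a node of $B_j^r$ to a node of $D_j^{r+1}$, hence lies in $G(B_j^{r+1})$ but not in $G(B_j^r)$, and such edges are distinct from those counted by $|E_j^r|$. By the minimality in the definition of $r(j)$, for every $r'<r(j)$ we have $|C_j^{r'}|>\delta\cdot|E_j^{r'}|$, which together with the ball-growing inequality yields $|E_j^{r'+1}|>(1+\delta)\cdot|E_j^{r'}|$. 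In particular, taking $r'=0$ gives $|C_j^0|>\delta\cdot|E_j^0|=0$, so $v_j$ has a neighbour in $G_j$ and therefore $|E_j^1|\geq 1$.

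Chaining the multiplicative growth from $r'=1$ to $r'=r(j)-1$ gives $|E_j^{r(j)}|>(1+\delta)^{r(j)-1}\cdot|E_j^1|\geq(1+\delta)^{r(j)-1}$. On the other hand $E_j^{r(j)}$ is a set of edges of $G$, so $|E_j^{r(j)}|\leq m<n^2$. Thus $(1+\delta)^{r(j)-1}<n^2$, i.e.\ $r(j)=O\!\left((1/\delta)\log n\right)=O(\log n)$ since $\delta$ is a constant; combined with the bound on the diameter of $G(V_j)$ established above, this proves the lemma. The only real point of care is the ball-growing inequality --- for which $C_j^r$ must be understood as the crossing edges leaving $B_j^r$ \emph{within $G_j$}, i.e.\ the edges from $B_j^r$ to $D_j^{r+1}$, as otherwise $r(j)$ need not even be well defined (consider $B_j^r$ exhausting the connected component of $v_j$ in $G_j$ while $v_j$ has many edges to previously clustered nodes) --- together with the base case $|E_j^1|\geq 1$; everything else is the same polynomial-versus-geometric-series comparison as in Lemma~\ref{lemma:cluter-diameter}.
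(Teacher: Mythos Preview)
Your proof is correct and follows the same ball-growing argument as the paper's own proof. You add welcome extra care that the paper glosses over: the explicit diameter bound $\leq 2r(j)$, the base case $|E_j^1|\geq 1$, and the observation that $C_j^r$ must be read as crossing edges within $G_j$ for the inequality $|E_j^{r+1}|\geq |E_j^r|+|C_j^r|$ (and hence the well-definedness of $r(j)$) to hold.
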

	\begin{proof}
		Suppose that $V_j\neq \emptyset$ (as the lemma is trivial otherwise). To show that the diameter of $G(V_j)$ is $O(\log n)$ it is sufficient to show that $r(j)=O(\log n)$. We use a ball growing argument. Note that for any integer $r$, it holds that $|E^{r}_{j}|\geq|E^{r-1}_{j}|+|C^{r-1}_{j}|$. Thus, for every $r'<r(j)+1$, we have
		$$|E^{r(j)+1}_{j}|\geq |E^{r'}_{j}|>(1+\delta)\cdot |E^{r'-1}_{j}|>(1+\delta)^2\cdot |E^{r'-2}_{j}|>\dots$$
		and since $n^2>m\geq |E^{r(j)+1}_{j}|$, we get that $r(j)=O((1/\delta)\log n)=O(\log n)$.
	\end{proof}
	
	A simple observation derived from Lemma \ref{lemma:cluter-diameter-cgf} is that $\partcgf$ has locality $O(\log n)$. This observation combined with the results of \cite{GhaffariKM17,RozhonG20} lead to the following corollary.
	\begin{corollary}\label{corollary:round-complexity-cgf}
		The algorithm $\partcgf$ can be simulated by a LOCAL algorithm with polylogarithmic round-complexity.
	\end{corollary}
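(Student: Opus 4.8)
The plan is to verify that $\partcgf$, as described in Section~\ref{section:cgf-partition}, is a legitimate SLOCAL algorithm whose locality is $O(\log n)$, and then to invoke the SLOCAL-to-LOCAL simulation result of Ghaffari et al.~\cite{GhaffariKM17,RozhonG20} recorded in Section~\ref{section:preliminaries}. First I would cast $\partcgf$ in the SLOCAL framework by letting $\text{info}(v)$ store the two fields $\text{cluster}(v)$ and $\text{sec}(v)$ maintained by the algorithm; the irrevocable decision written while processing $v_j$ in the $j$-th iteration is exactly the (possibly empty) assignment of these two fields to the nodes of $B_j^{r(j)}$ and $D_j^{r(j)+1}$.

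The heart of the argument is bounding the radius of the neighborhood that the $j$-th iteration must read from and write to. To decide whether $v_j$ already lies in a previously formed cluster, and to reconstruct the residual graph $G_j$ in the vicinity of $v_j$, it suffices to read $\text{cluster}(u)$ for the nodes $u$ near $v_j$: a node $u$ belongs to $G_j$ precisely when $\text{cluster}(u)$ is empty, and the edges of $G_j$ among such nodes are just the corresponding edges of $G$. Since $G_j$ is a subgraph of $G$, every node at distance $r$ from $v_j$ in $G_j$ is at distance at most $r$ from $v_j$ in $G$, so $B_j^{r}\subseteq B^{r}(v_j)$ for every $r$. Consequently, computing the nested balls $B_j^{0}, B_j^{1}, \dots$, the edge sets $E_j^{r}$ and the crossing-edge sets $C_j^{r}$, and hence the threshold $r(j)$, only requires exploring nodes within distance $r(j)+1$ of $v_j$ in $G$; the writes, which touch $B_j^{r(j)}$ (for $\text{cluster}(\cdot)$) and $D_j^{r(j)+1}$ (for $\text{sec}(\cdot)$), likewise stay within $B^{r(j)+1}(v_j)$.

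By Lemma~\ref{lemma:cluter-diameter-cgf} we have $r(j)=O((1/\delta)\log n)$, a bound that is uniform over all iterations $j$ and all processing orders. Taking the locality to be $r=r(j)+1=O(\log n)$ (with the hidden constant depending only on $\delta$) therefore guarantees that every read/write performed while processing $v_j$ lies inside $B^{r}(v_j)$, so $\partcgf$ is a legal SLOCAL algorithm of locality $O(\log n)=\log^{O(1)}n$. The one point that requires care here is that the ball growing is carried out in the residual graph $G_j$ rather than in $G$: the containment $B_j^{r}\subseteq B^{r}(v_j)$ is precisely what lets a locally bounded exploration of $G_j$ be realized by an exploration of radius $O(\log n)$ in $G$, which is the distance measure used to define locality in the SLOCAL model. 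I expect this containment check, together with confirming that the secondary-affiliation writes at distance $r(j)+1$ do not push the access radius beyond $O(\log n)$, to be the main (if minor) obstacle; everything else is bookkeeping.

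Finally, since $\partcgf$ has locality $\log^{O(1)}n$, the simulation result of Ghaffari et al.~\cite{GhaffariKM17,RozhonG20} quoted in Section~\ref{section:preliminaries} immediately yields a LOCAL algorithm that reproduces the output of $\partcgf$ --- namely the cluster and secondary-affiliation assignment --- in $\log^{O(1)}n$ rounds, establishing the corollary.
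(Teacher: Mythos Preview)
Your proposal is correct and follows essentially the same approach as the paper: observe via Lemma~\ref{lemma:cluter-diameter-cgf} that $\partcgf$ is an SLOCAL algorithm of locality $O(\log n)$, then invoke the SLOCAL-to-LOCAL simulation of \cite{GhaffariKM17,RozhonG20}. The paper states this in one sentence, whereas you spell out the bookkeeping (the containment $B_j^{r}\subseteq B^{r}(v_j)$ and the $r(j)+1$ read/write radius), but the argument is the same.
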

	
	\subsection{Labels and Verification}
	\label{section:cgf-label}
	Consider an SU-closed CGF $\Phi$ and a configured graph $G_s\in\Phi$. The prover uses the SLOCAL algorithm $\partcgf$ presented in Section \ref{section:cgf-partition} to compute the values $r(j)$ for all $1\leq j\leq n$, and the fields $\text{cluster}(v)$, $\text{sec}(v)$. The goal of the prover is to provide proof of two properties satisfied by the given configured graph $G_s$ and the outcome of $\partcgf$. We refer to those properties as \emph{secondary clusters}, \emph{crossing edges}, and \emph{inclusion}. To that end, the prover produces a label assignment $L:\Vfunc$ that assigns each node $v$ with a label $L(v)=\langle L_{\text{sec}}(v), L_{\text{cross}}(v),L_{\text{inc}}(v)\rangle$. The label $L(v)$ is composed of the fields $L_{\text{sec}}(v)$, $L_{\text{cross}}(v)$, and $L_{\text{inc}}(v)$, that provide proof for the secondary clusters, crossing edges and inclusion properties, respectively.
	
	The secondary clusters property states that $\text{sec}(v)$ is not empty for every node $v$ that has a neighbor belonging to a different cluster. To that end, the sub-label $L_{\text{sec}}(v)$ assigns the values $\text{cluster}(v)$ and $\text{sec}(v)$ to each node $v\in V$. Observe that this information is sufficient for the verifier to verify the secondary clusters property.
	
	For all $j\in [n]$, let $\mathtt{sec}(V_j)=\{v\mid \text{sec}(v)=id(v_j) \}$ be the set of nodes whose secondary affiliation is to $V_j$ by the end of the $\partcgf$ algorithm, and let $F_j=\{(u,v)\in E\mid u\in V_j,v\in \mathtt{sec}(V_j)\}$ denote the set of edges with one endpoint in $V_j$ and the other endpoint in $\mathtt{sec}(V_j)$. The crossing edges property states that each cluster $V_j$ satisfies $|F_j|\leq \delta \cdot |E^{r(j)}_{j}|$. The field $L_{\text{cross}}(\cdot)$ serves the crossing edges property by means of a comparison scheme between $|F_j|$ and $\delta \cdot |E^{r(j)}_{j}|$. This comparison scheme is based on a shortest paths (spanning) tree rooted at node $v_j$ for each cluster $V_j\neq \emptyset$. Notice that each node $v\in V_j$ knows its incident edges from $|F_j|$ based on the $L_{\text{sec}}(\cdot)$ field of its neighbors.
	
	The inclusion property states that $G_s(V_j)\in\Phi$ for all $V_j\neq \emptyset$. To that end, the prover uses the field $L_{\text{inc}}(\cdot)$ to encode a PLS for $\Phi$ concurrently on all subgraphs $G(V_j)$.
	
	\paragraph*{Proof Size and Correctness.}
	We observe that the label assignment produced by the prover can be computed by means of an SLOCAL algorithm with locality $O(\log n)$ and thus it can be simulated by a locally restricted prover. Moreover, the sub-labels $L_{\text{sec}}(v)$ and  $L_{\text{cross}}(v)$ are of size $O(\log n)$; and $L_{\text{inc}}(v)$ is of size $\ell_{\Phi}$, where $\ell_{\Phi}$ is the proof size of a PLS for $\Phi$. Overall, the proof size of this scheme is $\ell_{\Phi}+O(\log n)$.
	
	We now show that the correctness requirements are satisfied. We start with completeness, i.e., showing that if $G_s\in \Phi$, then the verifier accepts $G_s$. Observe that the secondary clusters and crossing edges properties are satisfied by construction of $\partcgf$. In addition, the inclusion property follows from the fact that $\Phi$ is closed under node-induced subgraphs.
	
	As for the soundness requirement, consider a configured graph $G_s$ such that the verifier accepts $G_s$. Let $V^{L}_1,\dots V^{L}_k$ be the clusters encoded in the field $L_{\text{sec}}(\cdot)$. For every $j\in [k]$, let $E^{L}_j$ denote the edge set of subgraph $G(V^{L}_j)$, let $\mathtt{sec}^{L}_j$ be the set of nodes for which the field $L_{\text{sec}}(\cdot)$  encodes a secondary affiliation to $V^{L}_j$, and let $F^{L}_j$ be the set of edges with one endpoint in $V^{L}_j$ and one in $\mathtt{sec}^{L}_j$. The inclusion property guarantees that $G_{s}(V^{L}_j)\in\Phi$ for each $j\in[k]$. Let $G'_{s'}$ be the disjoint union of $G_{s}(V^{L}_1),\dots,G_{s}(V^{L}_k)$. Since $\Phi$ is closed under disjoint union, we get that $G'_{s'}\in\Phi$. By the secondary clusters property, $G'_{s'}$ is the configured subgraph obtained from $G_s$ by removing the set $F^{L}_1\cup\dots\cup F^{L}_k$ of edges. The crossing edges property implies that $|F^{L}_1\cup\dots\cup F^{L}_k|=\sum_{j\in[k]} |F^{L}_j|\leq \delta \cdot \sum_{j\in[k]} |E^{L}_j|\leq \delta m$. Thus, $G_s$ is not $\delta$-far from belonging to $\Phi$, i.e., $G_s\notin \calf_{N}$.
	
	In conclusion, this scheme describes a correct locally restricted $\delta$-TPLS for SU-closed CGFs with a proof size of $\ell_{\Phi}+O(\log n)$, thus proving Theorem \ref{theorem:main-cgf}. 
	
	\section{Bounds for Concrete OptDGPs and CGFs}\label{section:bounds}
	
	\subsection{OptDGPs} 
	\label{section:concrete-optdgps}
	In this section, we show how the compiler presented in Section \ref{section:compiler-optdgps} can be used in the design of locally restricted APLSs for some classical OptDGPs that fit the canonical structure. In Sections \ref{section:mwvc}, \ref{section:maxis}, and \ref{section:mwds}, we present locally restricted APLSs with a logarithmic proof size for the problems of minimum weight vertex cover, maximum independent set, and minimum weight dominating set, respectively. Then, in  Section \ref{section:generic}, we present a locally restricted $(1+\epsilon)$-APLS that applies to any canonical OptDGP.

	\subsubsection{Minimum Weight Vertex Cover} 
	\label{section:mwvc}
	Consider a graph $G=(V,E)$ associated with a node-weight function $w:V\rightarrow\{1,\dots ,n^{O(1)}\}$ and let $C\subseteq E$ be a set of \emph{constrained} edges. A \emph{vertex cover} of $C$ is a subset $U\subseteq V$ of nodes such that every edge $e\in C$ has at least one endpoint in $U$. A \emph{minimum weight vertex cover (MWVC)} of $C$ is a vertex cover $U$ of $C$ that minimizes $w(U)=\sum_{u\in U}w(u)$. 
	
	Observe that MWVC is a covering OptDGP. Moreover, vertex cover is self-induced (notice that this is in contrast to the common case of vertex cover where all edges are constrained). Thus, MWVC is canonical. We aim to use our compiler to construct a locally restricted $2(1+\epsilon)$-APLS for MWVC. To that end, we establish the following lemma.
	\begin{lemma}\label{lemma:mwvc-2-apls}
		There exists a $2$-APLS for MWVC with a proof size of $O(\log n)$.
	\end{lemma}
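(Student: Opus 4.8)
The plan is to build a (non-locally-restricted) $2$-APLS for MWVC of the LP-duality flavour. Besides the solution $\msfo$ (encoding the vertex cover $U=\{v\mid \msfo(v)=1\}$ of the constrained edge set $C$), the prover supplies a feasible \emph{fractional matching} of $(V,C)$ with respect to the node weights, i.e.\ an assignment $y\colon C\to\mathbb{R}_{\ge 0}$ with $\sum_{e\ni v,\,e\in C}y(e)\le w(v)$ for every $v$, chosen so that $2\sum_{e\in C}y(e)\ge w(U)$. By weak LP duality the first property forces $\sum_{e\in C}y(e)\le OPT_{\Psi}(G,\msfi)$, so together with the second we get $f(\gio)=w(U)\le 2\,OPT_{\Psi}(G,\msfi)$, which is exactly what the APLS must certify. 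The feasibility of $\msfo$ is an LCL check; the global inequality $2\sum_{e\in C}y(e)\ge\sum_v w(v)\msfo(v)$ is verified by the comparison scheme of Section~\ref{section:preliminaries} applied with $a(v)=\sum_{e\ni v,\,e\in C}y(e)$ and $b(v)=w(v)\msfo(v)$, since $\sum_v a(v)=2\sum_{e\in C}y(e)$ and $\sum_v b(v)=f(\gio)$; if $G$ is disconnected, this is run and checked separately within each connected component, which is legitimate because both sides are additive over components.

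The one delicate point is the \emph{proof size}: naively the label of a node $v$ must store $y(e)$ for all incident constrained edges, i.e.\ up to $\Theta(\deg(v))$ values. To avoid this, the prover chooses $y$ to be an \emph{extreme point} of the fractional-matching polytope. A standard structural fact is that the support $G^{+}=\{e\mid y(e)>0\}$ of such an extreme point is a disjoint union of trees and odd unicyclic graphs; consequently $G^{+}$ admits an orientation in which every node has out-degree at most $1$, and $y$ is integral on each tree component and half-integral on each odd-unicyclic one, so each value $2y(e)$ is an integer of magnitude $n^{O(1)}$. The prover then stores at each node $v$ only the (at most one) edge $e$ that this orientation directs out of $v$, together with $y(e)$; a node recovers the $y$-values on its remaining incident support edges by reading them from its neighbours' labels. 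Hence the fractional-matching part of the label is $O(\log n)$ bits, the comparison-scheme part is $O(\log n)$ bits (all its partial sums are $n^{O(1)}$), and $\msfo(v)$ is $O(1)$ bits, for a total proof size of $O(\log n)$.

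The verifier at $v$ then checks: (1) the LCL predicate of MWVC, namely that every incident constrained edge has an endpoint in $U$; (2) that the orientation data it sees is consistent (each support edge oriented exactly once, out-degree at most $1$) and that $\sum_{e\ni v,\,e\in C}y(e)\le w(v)$, i.e.\ $y$ is a feasible fractional matching; and (3) its part of the comparison scheme for $\sum_v a(v)\ge\sum_v b(v)$. For completeness, if $\msfo$ is optimal then $w(U)=OPT_{\Psi}(G,\msfi)$, and the prover takes $y$ to be a maximum-weight fractional matching realized at an extreme point; its value equals $OPT_{\mathrm{LP}}$ by LP duality, and since $OPT_{\Psi}(G,\msfi)\le 2\,OPT_{\mathrm{LP}}$ for vertex cover (round any optimal fractional cover at the threshold $1/2$), we get $2\sum_{e\in C}y(e)=2\,OPT_{\mathrm{LP}}\ge f(\gio)$, so all three checks pass (and the same holds component-wise). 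For soundness, if the verifier accepts some $\gio$, then check (1) makes $\msfo$ a vertex cover of $C$, so $\gio\in\Pi$; check (2) makes $y$ a feasible fractional matching, so for the optimal vertex cover $U^{*}$ of $C$ one has $\sum_{e\in C}y(e)\le\sum_{v\in U^{*}}\sum_{e\ni v,\,e\in C}y(e)\le\sum_{v\in U^{*}}w(v)=OPT_{\Psi}(G,\msfi)$; and check (3) gives $f(\gio)=\sum_v w(v)\msfo(v)\le 2\sum_{e\in C}y(e)\le 2\,OPT_{\Psi}(G,\msfi)$, so $\gio\notin\mathcal{F}_N$.

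The main obstacle is precisely the proof-size bookkeeping of the second paragraph: one has to use an \emph{extreme-point} fractional matching and invoke the structure of its support, both to obtain a bounded-out-degree orientation (so each node stores only $O(1)$ edge-values) and to conclude that the values are half-integral and hence $O(\log n)$ bits long. The remaining ingredients — the LP-duality inequalities underlying completeness and soundness, and the invocation of the comparison scheme — are routine.
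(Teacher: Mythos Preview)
Your proposal is correct and essentially unpacks the black box that the paper's own proof cites: the paper simply invokes the $2$-APLS for MWVC from \cite{EmekG20}, which is built on exactly the LP-duality/fractional-matching argument you give (including the extreme-point support structure to keep the per-node label at $O(\log n)$ bits), and then notes that the generalization to an arbitrary constrained edge set $C\subseteq E$ follows by running that scheme on the connected subgraphs induced by $C$. Your direct incorporation of $C$ into the LP and the component-wise use of the comparison scheme accomplish the same reduction.
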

	\begin{proof}
		As presented in \cite{EmekG20}, there exists a $2$-APLS for the instance of MWVC where all edges are constrained and the graph is connected. We can obtain a $2$-APLS for MWVC in the more general case where a subset $C\subseteq E$ of edges are constrained simply by applying the $2$-APLS from \cite{EmekG20} on the connected subgraphs induced by the constrained edge set $C$. The proof size of this scheme is $O(\log n+\log W)$, where $W$ is an upper bound on the node-weights. Since in our case $W=n^{O(1)}$, it follows that the proof size is $O(\log n)$.
	\end{proof}
	Plugging the $2$-APLS obtained in Lemma \ref{lemma:mwvc-2-apls} into our compiler leads to the following corollary.
	\begin{corollary}\label{corollary:mwvc}
		For any constant $\epsilon>0$, there exists a locally restricted $(2(1+\epsilon))$-APLS for MWVC with a proof size of $O(\log n)$.
	\end{corollary}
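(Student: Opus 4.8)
The plan is to obtain this corollary as a direct instantiation of the generic compiler from Theorem~\ref{theorem:main-optdgp}. Concretely, I would take $\Psi$ to be the MWVC problem and $\alpha = 2$, feed into the compiler the $2$-APLS supplied by Lemma~\ref{lemma:mwvc-2-apls}, and read off the resulting locally restricted scheme and its proof size.

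The first step is to verify that the hypotheses of Theorem~\ref{theorem:main-optdgp} are satisfied, i.e., that MWVC is a canonical OptDGP. This was already argued in the text preceding Lemma~\ref{lemma:mwvc-2-apls}: MWVC is a covering MinDGP (the LCL predicate at each node checks, for every constrained incident edge, that at least one endpoint has multiplicity $1$, and this predicate is monotone in the required covering direction; the objective is $\sum_{v} w(v)\cdot\mathsf{o}(v)$ with $w(v)\in\{1,\dots,n^{O(1)}\}$), and the underlying LCL is self-induced because restricting to a node subset $U$ only drops constraints (the induced constrained edge set is $C\cap E(G(U))$), so co-legality is preserved, as is co-legality after replacing the predicates on $U$ by tautologies. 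Hence $\Psi = \mathrm{MWVC}$ is canonical.

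The second step is the proof-size bookkeeping. Lemma~\ref{lemma:mwvc-2-apls} gives a (not necessarily locally restricted) $2$-APLS for MWVC with $\ell_{\Psi,2} = O(\log n)$. Applying Theorem~\ref{theorem:main-optdgp} with this scheme and the given constant $\epsilon > 0$ yields a locally restricted $(2(1+\epsilon))$-APLS for MWVC with proof size $\ell_{\Psi,2} + O(\log n) = O(\log n) + O(\log n) = O(\log n)$, which is exactly the claimed bound.

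I do not anticipate a genuine obstacle here; the content of the corollary is entirely carried by Theorem~\ref{theorem:main-optdgp} and Lemma~\ref{lemma:mwvc-2-apls}. The only point that warrants a sentence of care is confirming that MWVC meets the ``canonical'' definition (in particular self-inducedness, which as noted differs from the textbook vertex cover where all edges are constrained), so that the compiler is applicable; everything else is a one-line substitution of $\alpha = 2$ and $\ell_{\Psi,\alpha} = O(\log n)$.
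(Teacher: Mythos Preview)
Your proposal is correct and matches the paper's approach exactly: the corollary is stated immediately after Lemma~\ref{lemma:mwvc-2-apls} as a direct consequence of plugging that $2$-APLS into the compiler of Theorem~\ref{theorem:main-optdgp}, using the canonicality of MWVC established in the preceding text. Your additional justification of canonicality and the proof-size arithmetic are precisely what the paper leaves implicit.
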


	We now consider the unweighted version, simply referred to as \emph{minimum vertex cover (MVC)}, on graphs with large odd-girth (where the odd-girth of a graph is defined to be the shortest odd cycle). As the following theorem shows, this case allows for an improved approximation ratio.
	\begin{theorem}\label{theorem:mvc-odd-girth}
		For any constant $\epsilon>0$, there exists a locally restricted $(1+\epsilon)$-APLS for MVC on graphs of odd-girth $\omega(\log n)$ with a proof size of $O(\log n)$.
	\end{theorem}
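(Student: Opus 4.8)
The plan is to feed the OptDGP compiler of Theorem~\ref{theorem:main-optdgp} an \emph{exact} PLS (i.e., a $1$-APLS) for MVC, exploiting the observation that when the host graph has odd-girth $\omega(\log n)$, every cluster produced by $\partopt$ induces a bipartite subgraph, and that an exact PLS for MVC on bipartite graphs has proof size $O(\log n)$. Recall from Section~\ref{section:mwvc} that MVC is canonical (it is a covering OptDGP and vertex cover is self-induced), so the compiler is applicable; it remains to supply a suitable inner scheme and to argue that the clusters it is run on are bipartite.

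\emph{Clusters are bipartite.} By Lemma~\ref{lemma:cluter-diameter}, every nonempty cluster $V_j$ produced by $\partopt$ induces a connected subgraph $G(V_j)$ of diameter $O(\log n)$; since $\epsilon$ is a constant this bound is $c\log n$ for some constant $c$. I would then prove the elementary fact that a connected graph $H$ of diameter $d$ that is not bipartite has odd-girth at most $2d+1$: take a shortest odd closed walk in $H$, which is necessarily a cycle $v_0 v_1 \cdots v_{2k} v_0$ of length $2k+1$, together with a shortest $v_0$–$v_k$ path $P$ in $H$; concatenating $P$ with either of the two $v_0$–$v_k$ arcs of the cycle yields closed walks of lengths $|P|+k$ and $|P|+k+1$, exactly one of which is odd, and if $|P|<k$ this odd closed walk has length below $2k+1$, contradicting minimality, so $|P|=k$ and hence $2k+1\le 2d+1$. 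Since the odd-girth of an induced subgraph is at least that of the host graph, and the host's odd-girth is $\omega(\log n)$, for all sufficiently large $n$ no $G(V_j)$ contains an odd cycle, i.e., each $G(V_j)$ is bipartite (and so is the subgraph it spans on the constrained edges).

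\emph{An exact PLS for MVC on connected bipartite graphs.} By König's theorem, in a bipartite graph the minimum size of a vertex cover of the constrained edge set equals the maximum size of a matching in the constrained subgraph. The prover therefore encodes, alongside the candidate cover $\msfo$, a maximum matching $M$ of the constrained edges (each node recording the port of its $M$-partner, if any) and runs a comparison scheme (Section~\ref{section:preliminaries}) on a spanning tree of the connected cluster certifying $|\msfo|\le|M|$; locally, the verifier checks that $M$ is a matching and that $\msfo$ covers every constrained edge. Acceptance forces $|\msfo|\le|M|\le OPT\le|\msfo|$ and hence $|\msfo|=OPT$ (soundness), while on a bipartite graph the prover can always exhibit such $M$ and $\msfo$ with $|\msfo|=|M|$ (completeness, by König). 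All encoded values are bounded by $n$, so the proof size is $O(\log n)$.

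\emph{Invoking the compiler, and the main obstacle.} Since the construction of Section~\ref{section:label-mindgp} invokes the inner $\alpha$-APLS only on the cluster-induced instances $\gio(V_j)$ — which the previous paragraph shows are connected and bipartite — the PLS above is a valid substitute for that inner scheme with $\alpha=1$, and running the compiler and its correctness analysis verbatim produces a locally restricted $(1\cdot(1+\epsilon))=(1+\epsilon)$-APLS for MVC on graphs of odd-girth $\omega(\log n)$ with proof size $O(\log n)+O(\log n)=O(\log n)$. The delicate points are (i) making sure the $O(\log n)$ cluster-diameter bound of Lemma~\ref{lemma:cluter-diameter} genuinely beats the $\omega(\log n)$ odd-girth for the particular constant hidden in that bound — this is where the asymptotic slack of $\omega(\cdot)$ is spent — and (ii) the bookkeeping observation that it is legitimate to plug into the compiler an inner scheme guaranteed correct only on connected bipartite instances, rather than on all legal IO graphs of MVC, precisely because the compiler applies it solely to cluster-induced subgraphs.
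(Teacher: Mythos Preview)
Your proposal is correct and follows essentially the same approach as the paper: exploit that the $O(\log n)$-diameter clusters produced by the compiler are bipartite under the odd-girth assumption, and supply an exact PLS for MVC on bipartite graphs via K\"onig's theorem together with a comparison scheme. Your write-up is simply more explicit than the paper's---you spell out the diameter/odd-girth inequality and flag the bookkeeping point (ii), both of which the paper leaves implicit.
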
 
	\begin{proof}
		Recall that our compiler first partitions the nodes into clusters of diameter $O(\log n)$, and then proceeds to apply an $\alpha$-APLS concurrently on the subgraph induced by each cluster. We observe that each of these subgraphs created by the partition is bipartite since the odd-girth of the graph is $\omega(\log n)$. Thus, it is sufficient to show that there exists a PLS for MVC on bipartite graphs with a proof size of $O(\log n)$.
		
		The well known K{\"o}nig's theorem states that in bipartite graphs the size of minimum vertex cover is equal to the size of maximum matching. This allows for a PLS for MVC in bipartite graphs with a proof size of $O(\log n)$ constructed as follows. The prover simply encodes a maximum matching on the graph along with a proof that the size of this matching is equal to the size of the given vertex cover (e.g., by means of a comparison scheme). 
	\end{proof}
	\subsubsection{Maximum Independent Set} 
	\label{section:maxis}
	Consider a graph $G=(V,E)$ and let $C\subseteq E$ be a set of \emph{constrained} edges. An \emph{independent set} of $C$ is a subset $I\subseteq V$ of nodes, such that each node $v\in I$ is incident on an edge in $C$ and every edge $e\in C$ has at most one endpoint in $I$. A \emph{maximum independent set (MaxIS)} of $C$ is an independent set $I$ of $C$ that maximizes $|I|$. 
	
	Observe that MaxIS is a packing OptDGP. Moreover, independent set is self-induced (notice that this is in contrast to the common case of independent set where all edges are constrained). Thus, MaxIS is canonical. 
	
	Let us denote by $\Delta=\max_{v\in V}\{\deg(v)\}$ the largest degree in graph $G=(V,E)$. In the following lemma, we present a simple $\Delta$-APLS for the MaxIS problem. 
	\begin{lemma}\label{lemma:maxis-apls}
		There exists a $\Delta$-APLS for MaxIS with a proof size of $O(\log n)$.
	\end{lemma}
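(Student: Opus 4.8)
The plan is to exhibit a $\Delta$-APLS whose prover contributes nothing beyond echoing the output, with the entire approximation guarantee discharged by a purely local check of the verifier. Recall that by the definitions in Section~\ref{section:pls}, it suffices to design a GPLS that accepts $\gio$ whenever $\msfo$ encodes a maximum independent set of $C$ and rejects $\gio$ whenever $\msfo$ is infeasible or $f(\gio)<OPT_{\Psi}(G,\msfi)/\Delta$. The prover sets $L(v)=\msfo(v)$ for every node $v$, so that $L^{N}(v)$ reveals to the verifier the output values of $v$'s neighbors; the verifier at $v$ accepts iff (a) $L(v)=\msfo(v)$; (b) the MaxIS predicate $p^{\Pi}_{\prdiv}$ holds on $(L(v),L(u_1),\dots,L(u_{\deg(v)}))$ (i.e.\ if $\msfo(v)=1$ then $v$ is incident to a constrained edge and every constrained neighbor $u$ has $\msfo(u)=0$); and (c) \emph{maximality}: if $\msfo(v)=0$ and $v$ is incident to at least one constrained edge, then some constrained edge $(v,u)$ has $\msfo(u)=1$. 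Each check is a function of $\langle s(v),L(v),L^{N}(v)\rangle$ --- in particular $v$ learns from its own local input which incident edges are constrained and from $L^{N}(v)$ whether each such neighbor lies in $I=\{u:\msfo(u)=1\}$ --- so this is a valid GPLS verifier, and its proof size is that of a single multiplicity value, i.e.\ $O(\log n)$ (in fact $O(1)$ for MaxIS). Completeness is immediate: if $\msfo$ is optimal then $I$ is a maximum independent set of $C$, hence maximal, so (a)--(c) all pass, and the prover only needs $\msfo(v)$ to form $L(v)$.

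For soundness I would prove the combinatorial core: if $I$ is a maximal independent set of $C$ then $|I|\ge OPT_{\Psi}(G,\msfi)/\Delta$. Let $I^{\star}$ be a maximum independent set of $C$, and let $V_C$ be the set of nodes incident to a constrained edge, so $I,I^{\star}\subseteq V_C$. Since $I$ is maximal, every node of $V_C\setminus I$ has a constrained neighbor in $I$; fix, for each $u\in I^{\star}\setminus I$, such a neighbor $\pi(u)\in I$. If some $\pi(u)$ lay in $I^{\star}$ it would place both endpoints of a constrained edge into $I^{\star}$, a contradiction, so $\pi$ maps into $I\setminus I^{\star}$; moreover each $v\in I$ has at most $\deg_G(v)\le\Delta$ constrained neighbors and hence receives at most $\Delta$ charges, giving $|I^{\star}\setminus I|\le\Delta\,|I\setminus I^{\star}|$. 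Using $\Delta\ge 1$ whenever $C\neq\emptyset$ (the case $C=\emptyset$, where $I=I^{\star}=\emptyset$, being trivial),
\[
OPT_{\Psi}(G,\msfi)=|I^{\star}|=|I\cap I^{\star}|+|I^{\star}\setminus I|\le|I\cap I^{\star}|+\Delta\,|I\setminus I^{\star}|\le\Delta\,|I|.
\]
Now if the verifier accepts $\gio$, then by (a)--(b) the encoded $\msfo$ is a feasible independent set $I$ of $C$ and by (c) it is maximal, whence $f(\gio)=|I|\ge OPT_{\Psi}(G,\msfi)/\Delta$; thus $\gio\notin\mathcal{F}_{N}$, which is exactly the soundness requirement of a $\Delta$-APLS.

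I do not expect a real obstacle here; the one point that needs to be checked with some care is that maximality of $I$ is genuinely expressible as a local predicate over $\langle s(v),L(v),L^{N}(v)\rangle$ (it is, as noted above, since the constrained-edge structure incident to $v$ is part of $\msfi(v)$), together with the behaviour of the charging argument in the degenerate regimes $C=\emptyset$ and $\Delta\le 1$. Everything else is immediate, and feeding this $\Delta$-APLS into the compiler of Theorem~\ref{theorem:main-optdgp} then yields, for every constant $\epsilon>0$, the locally restricted $\Delta(1+\epsilon)$-APLS for MaxIS with proof size $O(\log n)$ recorded in Table~\ref{apls-table}.
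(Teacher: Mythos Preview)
Your argument is correct, and it is in fact cleaner than the paper's. Both proofs rest on the same combinatorial fact --- a maximal independent set of $C$ has size at least $OPT/\Delta$ --- but they deploy it differently. The paper has the prover encode a \emph{separate} maximal independent set $M$ together with the value of $\Delta$, and then uses a comparison scheme (over a spanning tree, incurring the full $O(\log n)$ label) to certify that the given solution $I$ is large enough relative to $M$. You instead observe that any \emph{optimal} independent set is in particular maximal, so the verifier can simply check maximality of the given $I$ locally via condition~(c); no auxiliary witness, no comparison scheme, and no global knowledge of $\Delta$ is needed, and the proof size is actually $O(1)$ rather than $\Theta(\log n)$. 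The trade-off is that the paper's template (encode a witness, compare aggregates) is the generic pattern reused throughout Section~\ref{section:concrete-optdgps}, whereas your shortcut exploits the specific fact that for MaxIS maximality is both locally checkable and already implied by optimality.
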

	\begin{proof}
		We use the fact that the ratio between the size of a maximum independent set and the size of a maximal independent set (i.e., an independent set that is not a subset of any other independent set) is at most $\Delta$. The $\Delta$-APLS construction is simple. The prover encodes a maximal independent set along with the value of $\Delta$ and a proof that its size is at most a multiplicative factor of $\Delta$ away from the given independent set.
	\end{proof}
	Plugging the $\Delta$-APLS from Lemma \ref{lemma:maxis-apls} into our compiler leads to the following corollary.
	\begin{corollary}
		For any constant $\epsilon>0$, there exists a locally restricted $(\Delta(1+\epsilon))$-APLS for MaxIS with a proof size of $O(\log n)$.
	\end{corollary}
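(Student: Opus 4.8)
The plan is to obtain the corollary as a direct black-box application of Theorem~\ref{theorem:main-optdgp}, feeding it the $\Delta$-APLS of Lemma~\ref{lemma:maxis-apls} as the input scheme. The only thing that needs to be checked is that the hypotheses of the theorem are met and that the resulting proof size really collapses to $O(\log n)$.

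First, I would recall from the discussion preceding Lemma~\ref{lemma:maxis-apls} that MaxIS is a canonical OptDGP: it is a packing MaxDGP (multiplicities in $\{0,1\}$, weights all equal to $1$, objective $f(\gio)=\sum_v \msfo(v)=|I|$, and the monotonicity condition holds for the independent-set predicate), and the underlying LCL is self-induced. Hence MaxIS satisfies the precondition of Theorem~\ref{theorem:main-optdgp}. Second, Lemma~\ref{lemma:maxis-apls} supplies a $\Delta$-APLS for MaxIS with proof size $\ell_{\Psi,\alpha}=O(\log n)$ for $\alpha=\Delta$. Applying Theorem~\ref{theorem:main-optdgp} with this scheme and the given constant $\epsilon>0$ then yields a locally restricted $(\Delta(1+\epsilon))$-APLS for MaxIS of proof size $\ell_{\Psi,\alpha}+O(\log n)=O(\log n)$; here one uses $\Delta\le n-1$, so all quantities of magnitude polynomial in $\Delta$ occupy $O(\log n)$ bits.

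The one point that deserves care --- and the only real obstacle --- is that the compiler does not run the black-box $\alpha$-APLS on the whole instance $\gio$, but concurrently on the configured subgraphs $\gio(T_j)$ induced by the clusters $T_j=\inner^2(V_j)\cup S_j$ produced by $\partopt$. I would argue this is harmless. Independent set is self-induced, so $\gi(T_j)\in\legpi$ and the scheme of Lemma~\ref{lemma:maxis-apls} is well-defined on each $\gio(T_j)$. Moreover, that scheme applied to a graph $H$ is in fact a $\Delta(H)$-APLS, and since $G(T_j)$ is a node-induced subgraph of $G$ we have $\Delta(G(T_j))\le \Delta$; hence it is a fortiori a (uniform) $\Delta$-APLS on every cluster, which is precisely what the compiler requires when it fixes a single global parameter $\alpha=\Delta$. (If one prefers a perfectly uniform threshold, the prover can additionally write the value $\Delta$ into the labels using $O(\log n)$ extra bits, without affecting the asymptotic proof size.)

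Putting these together, the compiler's soundness guarantee (the MaxDGP branch in Section~\ref{section:label-maxdgp}) gives $f(G_{\msfi,\msfo^{*}})\le \alpha(1+\epsilon)\cdot f(\gio)=\Delta(1+\epsilon)\cdot f(\gio)$ for every accepted instance, i.e.\ the scheme rejects every IO graph whose value is below $OPT_\Psi(G,\msfi)/(\Delta(1+\epsilon))$, while Corollary~\ref{corollary:round-complexity} ensures it is locally restricted; completeness is immediate from the completeness part of Theorem~\ref{theorem:main-optdgp}. This establishes the corollary.
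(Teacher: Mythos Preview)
Your proposal is correct and follows exactly the paper's approach: the paper simply states that plugging the $\Delta$-APLS of Lemma~\ref{lemma:maxis-apls} into the compiler of Theorem~\ref{theorem:main-optdgp} yields the corollary, and you have spelled out precisely this argument (with some additional care about the per-cluster application that the paper leaves implicit).
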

	
	Similarly to the MVC problem, restricting MaxIS to families of graphs with odd-girth $\omega(\log n)$ allows for a better approximation ratio, as established by the following theorem. 
	\begin{theorem}\label{theorem:maxis-odd-girth}
		For any constant $\epsilon>0$, there exists a locally restricted $(1+\epsilon)$-APLS for MaxIS on graphs of odd-girth $\omega(\log n)$ with a proof size of $O(\log n)$.
	\end{theorem}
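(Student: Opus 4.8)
The plan is to follow the same route as the proof of Theorem~\ref{theorem:mvc-odd-girth}: feed the compiler of Theorem~\ref{theorem:main-optdgp} a $1$-APLS (i.e., an exact PLS) for MaxIS that is correct on bipartite graphs, having first argued that on graphs of large odd-girth the compiler only ever invokes the inner scheme on bipartite subgraphs. Recall that MaxIS is a packing MaxDGP and that independent set is self-induced, so MaxIS is canonical and Theorem~\ref{theorem:main-optdgp} is applicable. Inspecting the proof of Theorem~\ref{theorem:main-optdgp} in the MaxDGP case (Section~\ref{section:label-maxdgp}), the black-box $\alpha$-APLS is invoked only on configured subgraphs $\gio(T_j)$ with $T_j=\inner^2(V_j)\cup S_j$; since $\inner^2(V_j)\subseteq V_j$ and $S_j\subseteq\mathtt{sec}(V_j)\cup V_j$, we have $T_j\subseteq\mathtt{ext}(V_j)$, and by Lemma~\ref{lemma:cluter-diameter} the subgraph $G(\mathtt{ext}(V_j))$ is connected with diameter $O(\log n)$. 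Thus $G(T_j)$ is a subgraph of a connected subgraph of $G$ of diameter $O(\log n)$.

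The key elementary fact is that any graph $H$ containing an odd cycle contains one of length at most $2\,\mathrm{diam}(H)+1$. Indeed, let $C$ be a shortest odd cycle in $H$, of length $2\ell+1$; for $u,v$ on $C$ one has $d_H(u,v)=d_C(u,v)$ (otherwise glue an $H$-geodesic between $u$ and $v$ onto the cycle arc whose parity makes the resulting closed walk odd; this walk is strictly shorter than $C$ and contains an odd cycle, contradicting the minimality of $C$). Taking $u,v$ at cycle-distance $\ell$ gives $\ell\le\mathrm{diam}(H)$. Since the odd-girth of a subgraph is at least the odd-girth of $G$, and since $\mathrm{diam}(G(\mathtt{ext}(V_j)))=O(\log n)$ while the odd-girth of $G$ is $\omega(\log n)$, for all large $n$ the subgraph $G(\mathtt{ext}(V_j))$ — and hence $G(T_j)$ — contains no odd cycle, i.e., is bipartite. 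It therefore suffices to exhibit a PLS for MaxIS on bipartite graphs with proof size $O(\log n)$.

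For a bipartite instance with constrained edge set $C$, let $V_C$ be the set of vertices incident to an edge of $C$; by König's theorem the maximum size of an independent set of $C$ equals $|V_C|-\nu$, where $\nu$ is the size of a maximum matching of $(V_C,C)$ (using $|V_C|-\tau=$ max independent set size together with $\tau=\nu$). The prover encodes a maximum matching $M$ of $(V_C,C)$, orienting each matching edge so that each vertex can locally account for its contribution, together with a comparison scheme certifying $|I|+|M|=|V_C|$, where $I$ is the given independent set; the verifier additionally checks locally that $M$ is a matching and that $I$ is a valid independent set of $C$. Completeness is clear when $I$ is optimal. For soundness, every matching $M$ satisfies $|V_C|-|M|\ge|V_C|-\nu=|V_C|-\tau$, which by König equals the maximum independent-set size, so a verified equality $|I|=|V_C|-|M|$ forces $I$ to be optimal. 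If $G(T_j)$ is disconnected we run the scheme on each connected component separately, exactly as in the proof of Lemma~\ref{lemma:mwvc-2-apls}; since all weights are bounded by $n^{O(1)}$, the proof size is $O(\log n)$.

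Plugging this $1$-APLS into Theorem~\ref{theorem:main-optdgp} with the given constant $\epsilon$ yields a locally restricted $(1+\epsilon)$-APLS for MaxIS on graphs of odd-girth $\omega(\log n)$ with proof size $O(\log n)+O(\log n)=O(\log n)$. The only step requiring genuine care is the bipartiteness claim for the subgraphs handed to the inner scheme (the diameter bound of Lemma~\ref{lemma:cluter-diameter} combined with the odd-cycle-length estimate); the rest is a routine application of König's theorem within the comparison-scheme framework, parallel to the MWVC and MVC constructions.
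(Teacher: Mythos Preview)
Your proof is correct and follows essentially the same route as the paper: reduce via the compiler to bipartite cluster-subgraphs, then exhibit an $O(\log n)$-size PLS for MaxIS on bipartite graphs using a K\"onig-type duality. The only difference is cosmetic: the paper certifies optimality by encoding a minimum edge cover and comparing its size to $|I|$ (using $\alpha=\rho$ in bipartite graphs), whereas you encode a maximum matching and compare $|I|+|M|$ to $|V_C|$ (using $\alpha=|V_C|-\nu$); these are equivalent via Gallai's identities, and your bipartiteness argument for $G(T_j)$ is in fact spelled out more carefully than the paper's.
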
 
	\begin{proof}
		Similarly to the proof of Theorem \ref{theorem:mvc-odd-girth}, it is sufficient to show that there exists a PLS for MaxIS on bipartite graphs with a proof size of $O(\log n)$. To that end, we can use the fact that in bipartite graphs, the size of MaxIS is equal to the size of a minimum edge cover. We can now construct a PLS where the prover encodes a minimum edge cover of the graph, along with a proof that it is equal in size to the given independent set.
	\end{proof}
	\subsubsection{Minimum Weight Dominating Set} 
	\label{section:mwds}
	Consider a graph $G=(V,E)$ associated with a node-weight function $w:V\rightarrow\{1,\dots n^{O(1)}\}$ and let $C\subseteq V$ be a subset of \emph{constrained} nodes. A \emph{dominating set} of $C$ is a subset  $D\subseteq V$ of nodes, such that $ D\cap (v\cup N(v))\neq \emptyset$ for each constrained node $v\in C$. A \emph{minimum weight dominating set (MWDS)} of $C$ is a dominating set $D$ of $C$ that minimizes $\sum_{u\in D}w(u)$. 
	
	Observe that MWDS is a covering OptDGP. Moreover, dominating set is self-induced (notice that this is in contrast to the common case of dominating set where all nodes are constrained). Thus, we get that MWDS is canonical. We aim to use our compiler to construct a locally restricted $O(\log n)$-APLS for MWDS. To that end, we establish the following lemma.
	\begin{lemma}\label{lemma:apls-mwds}
		There exists an $O(\log n)$-APLS for MWDS with a proof size of $O(\log n)$.
	\end{lemma}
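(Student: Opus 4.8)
The plan is to certify a lower bound on $OPT_{\Psi}(G,\msfi)$ via LP duality. View an instance of MWDS as a weighted set cover instance whose ground set is the set $C$ of constrained nodes and whose sets are the closed neighbourhoods $N[u]$, $u\in V$, with $N[u]$ having cost $w(u)$. The dual of the natural LP relaxation is the fractional packing
\[
\max\ \textstyle\sum_{v\in C}y_v \quad\text{subject to}\quad \textstyle\sum_{v\in C\cap N[u]}y_v\le w(u)\ \text{ for all }u\in V,\qquad y\ge 0 ,
\]
and by weak duality together with the $H_n$ integrality gap of set cover (where $H_n=\sum_{i=1}^{n}1/i=\Theta(\log n)$), every legal input graph admits a feasible $y$ with $\sum_{v}y_v\ge OPT_{\Psi}(G,\msfi)/H_n$. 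Concretely, since the prover here is unrestricted it may simply run the greedy set cover algorithm: whenever greedy picks $N[u]$ and thereby newly covers $k$ constrained nodes, assign each of them a \emph{price} $w(u)/k$. These prices sum to the cost of greedy's solution, which is at least $OPT_{\Psi}(G,\msfi)$, and the standard charging argument gives $\sum_{v\in C\cap N[u]}\mathrm{price}(v)\le H_{|C\cap N[u]|}\cdot w(u)$ for every $u$; hence $y_v:=\mathrm{price}(v)/\beta$ is dual feasible for any integer $\beta\ge H_n$, i.e.\ $\beta=O(\log n)$.

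The scheme then works as follows. The prover's label at a node $v$ encodes (i)~the claimed output $\msfo(v)$, so that feasibility of the proposed dominating set can be verified using the LCL predicate $p^{\Pi}_{\prdiv}$ ($\Pi$ being an LCL); (ii)~the dual value $y_v$, together with $v$'s own weight and constrained-status, which $v$ checks against its local input; and (iii)~the auxiliary data of a comparison scheme (Section~\ref{section:preliminaries}), run on a shortest-paths spanning tree of each connected component of $G$, that certifies $f(\gio)\le \alpha'\cdot\sum_{v}y_v$ for a suitable $\alpha'=O(\log n)$ (an inequality that in fact holds within each connected component, because the restriction of greedy to a component behaves like greedy run on that component alone). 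The verifier accepts iff: $\gio\in\Pi$ (the LCL check); the local inequality $\sum_{v\in C\cap N[u]}y_v\le w(u)$ holds at every node $u$, which $u$ evaluates from its own label and $L^{N}(u)$; and the comparison-scheme verifier accepts. Completeness is immediate from the greedy guarantees above. For soundness, acceptance forces $y$ to be a feasible dual, so $\sum_{v}y_v\le OPT_{\Psi}(G,\msfi)$ by weak duality, and combined with the comparison we get $f(\gio)\le\alpha'\cdot OPT_{\Psi}(G,\msfi)=O(\log n)\cdot OPT_{\Psi}(G,\msfi)$ with $\msfo$ feasible; hence $\gio\notin\calf_N$.

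The step that needs the most care — and the source of the small gap between $\alpha'$ and $\beta$ — is bit complexity. Raw prices $w(u)/k$ are fractions with denominators ranging over $\{1,\dots,n\}$, so partial sums along the comparison-scheme tree could share a common denominator as large as $\operatorname{lcm}(1,\dots,n)=2^{\Theta(n)}$, which would blow the comparison scheme's proof size up to $\Theta(n)$. The fix I would use is to round each price \emph{down} to the nearest multiple of $1/(2n)$ before dividing by $\beta$: rounding down preserves dual feasibility, and since $|C|\le n$ it decreases $\sum_v\mathrm{price}(v)$ by less than $1/2$, which is at most $OPT_{\Psi}(G,\msfi)/2$ as long as $C\neq\emptyset$ (the case $C=\emptyset$ being trivial). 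Thus the rounded $y$ is still feasible and satisfies $\sum_{v}y_v\ge OPT_{\Psi}(G,\msfi)/(2\beta)$, so taking $\alpha'=2\beta$ works. Now each $y_v$ is an integer multiple of $1/(2n\beta)$ whose numerator is bounded by a polynomial in $n$; after clearing the fixed denominator $2n\beta$, both the per-node dual-feasibility check and the comparison scheme involve only integer-valued node functions with polynomially bounded partial sums, so the comparison scheme — and hence the entire label — has size $O(\log n)$. This yields an $O(\log n)$-APLS for MWDS with proof size $O(\log n)$.
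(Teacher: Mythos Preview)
Your proof is correct and follows essentially the same LP-duality approach as the paper, which simply cites the dual-based scheme of \cite{EmekG20} and observes that the $O(\log n)$ set-cover integrality gap persists in the constrained-subset variant. Your version is more self-contained---explicitly constructing the dual witness via greedy pricing and carefully handling the bit complexity of the fractional dual values---but the underlying idea (certify a dual lower bound locally, compare to the primal via a comparison scheme) is the same.
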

	\begin{proof}
		An $O(\log n)$-APLS for the instance of MWDS where all nodes are constrained and is presented in \cite{EmekG20}. The idea behind that $O(\log n)$-APLS is that the prover provides a feasible solution to the dual LP, such that the objective value of this dual solution is at most a multiplicative factor of $O(\log n)$ from the given dominating set. We argue that this technique can be applied to obtain $O(\log n)$-APLS for MWDS (in its more generalized version described above). This follows from the fact that the  gap between an optimal MWDS solution and an optimal dual solution remains $O(\log n)$ (since MWDS is an instance of set cover).
	\end{proof}
	Plugging the $O(\log n )$-APLS obtained in Lemma \ref{lemma:apls-mwds} into our compiler leads to the following corollary.
	\begin{corollary}\label{corollary:mwds}
		There exists a locally restricted $O(\log n)$-APLS for MWDS with a proof size of $O(\log n)$.
	\end{corollary}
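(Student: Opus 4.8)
The plan is to invoke the OptDGP compiler of Theorem~\ref{theorem:main-optdgp} in a black-box manner, exactly as in Corollary~\ref{corollary:mwvc}. First I would recall that the structural hypotheses have already been verified in the discussion preceding this corollary: MWDS is a covering MinDGP --- its objective can be written as $\sum_{v\in V} w(v)\cdot\msfo(v)$ with $\msfo(v)\in\{0,1\}$ encoding membership in the dominating set, and the domination constraint at each constrained node is a monotone LCL predicate (enlarging a dominating set keeps it dominating) --- while the ``constrained node'' formulation makes the underlying LCL self-induced. Hence $\Psi=\mathrm{MWDS}$ is canonical and the compiler applies.

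Next I would apply Lemma~\ref{lemma:apls-mwds}, which provides an $\alpha$-APLS for MWDS with $\alpha=O(\log n)$ and proof size $\ell_{\Psi,\alpha}=O(\log n)$. Feeding this APLS and an arbitrary constant $\epsilon>0$ into Theorem~\ref{theorem:main-optdgp} produces a locally restricted $(\alpha(1+\epsilon))$-APLS for MWDS whose proof size is $\ell_{\Psi,\alpha}+O(\log n)$.

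It then remains only to simplify the parameters. Since $\epsilon$ is an absolute constant, $\alpha(1+\epsilon)=O(\log n)\cdot O(1)=O(\log n)$, so the scheme is a locally restricted $O(\log n)$-APLS; and $\ell_{\Psi,\alpha}+O(\log n)=O(\log n)+O(\log n)=O(\log n)$, yielding the stated proof size. The one subtlety worth flagging --- and the closest thing to an obstacle --- is that here the approximation parameter $\alpha$ handed to the compiler depends on $n$ rather than being constant, unlike in the preceding applications; but Theorem~\ref{theorem:main-optdgp} imposes a constancy requirement only on $\epsilon$, not on $\alpha$, so the argument goes through unchanged.
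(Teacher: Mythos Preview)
Your proposal is correct and follows exactly the paper's approach: the paper simply states that the corollary is obtained by ``plugging the $O(\log n)$-APLS obtained in Lemma~\ref{lemma:apls-mwds} into our compiler,'' which is precisely what you do, with the canonicity of MWDS already established in the preceding discussion. Your remark that Theorem~\ref{theorem:main-optdgp} imposes no constancy requirement on $\alpha$ is a valid observation that the paper leaves implicit.
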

	\subsubsection{Generic Locally Restricted $(1+\epsilon)$-APLS for Canonical OptDGPs} 
	\label{section:generic}
	We establish a generic upper bound that applies to any canonical OptDGP.
	\begin{theorem}
		Consider a canonical OptDGP $\Psi$. For any constant $\epsilon>0$, there exists a locally restricted $(1+\epsilon)$-APLS for $\Psi$ with a proof size of $O(n^2)$.
	\end{theorem}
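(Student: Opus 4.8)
The plan is to obtain this statement as an immediate corollary of Theorem~\ref{theorem:main-optdgp} by invoking it with the trivial approximation parameter $\alpha=1$. The first observation is that a $1$-APLS for an OptDGP $\Psi$ is exactly a PLS for $\Psi$: for a MinDGP, the $\alpha$-APLS no-family with $\alpha=1$ is $\mathcal{U}\setminus\{\gio\in\Pi\mid f(\gio)\leq OPT_{\Psi}(G,\msfi)\}$, which coincides with the PLS no-family since $f(\gio)\geq OPT_{\Psi}(G,\msfi)$ for every feasible $\msfo$ (and symmetrically for a MaxDGP). Hence it suffices to exhibit, for an arbitrary canonical OptDGP $\Psi$, a (not necessarily locally restricted) PLS with proof size $O(n^2)$; feeding it to the compiler of Theorem~\ref{theorem:main-optdgp} then yields a locally restricted $(1+\epsilon)$-APLS with proof size $O(n^2)+O(\log n)=O(n^2)$.

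For the required PLS I would use the folklore ``encode the whole instance at every node'' scheme. On a yes-instance $\gio$, the prover assigns to each node $v$ the label $L(v)=\langle\msfo(v),E\rangle$, where $E$ is a complete encoding of $\gio$ restricted to the connected component of $v$: the nodes of the component listed in some canonical order of their ids, the adjacency matrix over this order, and, for each node, its id, its weight $w(\cdot)$, its predicate field $\prdiv$, and its multiplicity $\msfo(\cdot)$. The adjacency matrix contributes $\Theta(n^2)$ bits and every per-node field is of polylogarithmic size, so the proof size is $O(n^2)$. The verifier at $v$ runs three checks: (i) a \emph{local consistency} check that $E$'s description of $v$ and of $v$'s neighbourhood agrees with the true local input $\msfi(v)$ and with the output field $\msfo(v)$; (ii) an \emph{agreement} check that the encoding carried in each neighbour's label equals $E$; and (iii) a \emph{global} check, performed with the verifier's unbounded computational power on the component described by $E$, that the encoded solution is feasible for the encoded input graph (an LCL check) and that its objective value equals $OPT_{\Psi}$ of the encoded input graph --- the latter being well defined since $\Pi$ is self-induced, so the component-restriction lies in $\legpi$.

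Completeness is straightforward: on a yes-instance the prover writes the true encoding everywhere, checks (i) and (ii) pass, the encoded solution is feasible, and a standard exchange argument --- replacing $\msfo$ on a single component by a strictly better feasible solution for that component while leaving the rest of $\msfo$ unchanged, which remains feasible by the LCL property and strictly improves $f$ --- shows that a globally optimal $\msfo$ is optimal on each component, so check (iii) passes as well. The bulk of the work is soundness. One first argues that if all the consistency and agreement checks pass, then by connectivity the common encoding held throughout each component $K$ is forced to be the true restriction $\langle G(K),\msfi|_K,\msfo|_K\rangle$. One then needs the decomposition identity $OPT_{\Psi}(G,\msfi)=\sum_{K}OPT_{\Psi}(G(K),\msfi|_K)$, proved in both directions by gluing/restricting feasible solutions exactly as in the exchange argument above. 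Consequently, on a no-instance either $\msfo$ is infeasible --- in which case the violated LCL predicate is caught at some node using that node's own predicate $p^{\Pi}_{\prdiv}$ and its neighbours' output fields --- or $\msfo$ is feasible but $f(\gio)\neq OPT_{\Psi}(G,\msfi)$, in which case the decomposition identity together with $w(K,\msfo|_K)\geq OPT_{\Psi}(G(K),\msfi|_K)$ (resp.\ $\leq$, for a MaxDGP) forces a strict gap on some component $K^{*}$, which every node of $K^{*}$ detects in check (iii); in all cases some node rejects.

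I expect the only genuinely delicate point to be the disconnected-graph bookkeeping: establishing the component decomposition of $OPT_{\Psi}$ and verifying that the purely local consistency and agreement checks do pin down each component uniquely. For connected configured graphs --- which is in any case the regime in which the comparison-scheme machinery of Section~\ref{section:preliminaries} is phrased --- the whole argument collapses to the plain observation that every node reconstructs the entire instance from its label and then checks feasibility and optimality directly.
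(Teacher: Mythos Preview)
Your proposal is correct and follows essentially the same approach as the paper: both plug the universal ``encode the whole instance at every node'' PLS (cf.\ \cite[Theorem~3.2]{pls}) into the compiler of Theorem~\ref{theorem:main-optdgp} with $\alpha=1$. Your per-component encoding and the accompanying decomposition argument are a more careful treatment of the disconnected case than the paper gives, but the underlying idea is identical.
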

	\begin{proof}
		As stated in \cite[Theorem 3.2]{pls}, any decidable property admits a PLS. The idea behind this universal PLS is that the prover can assign each node $v$ with a label $L(v)$ that encodes the entire configured graph $G_s$. In response, the verifier at node $v$ verifies that $v$'s neighbors agree on the structure of the configured graph encoded in $L(v)$, and that $v$'s local neighborhood is consistent with the one encoded in $L(v)$. Following that, the verifier can evaluate whether $G_s$ is a yes-instance or not. 
		
		Observe that applying the universal PLS described above for a canonical OptDGP requires a proof size of $O(n^2)$. We can now plug this PLS construction into our compiler to obtain the desired locally restricted $(1+\epsilon)$-APLS for $\Psi$. 
	\end{proof}
	
	\subsection{CGFs}\label{section:concrete-cgfs}
	
	In this section, we show how the compiler presented in Section \ref{section:compiler-cgfs} can be combined with known PLS constructions to obtain locally restricted $\delta$-TPLSs for various well-known SU-closed CGFs.
	
	\subsubsection{Planarity}\label{section:planarity}
	A graph $G=(V,E)$ is called \emph{planar} if it can be embedded in the plane. The following lemma has been established by Feuilloley et al.~\cite{FeuilloleyFMRRT21}.
	\begin{lemma}
		There exists a PLS for planarity with a proof size of $O(\log n)$.
	\end{lemma}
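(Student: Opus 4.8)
The plan is to follow the approach of Feuilloley et al.~\cite{FeuilloleyFMRRT21}: the prover certifies that $G$ admits a \emph{combinatorial planar embedding}, and the verifier checks this locally together with Euler's formula. First I would reduce to the connected case. The prover includes a spanning-forest certificate (parent pointer plus distance to the component root at each node, as in \cite[Lemma~2.2]{pls}), costing $O(\log n)$ bits per node; since a graph is planar iff each of its connected components is planar, and the component of each node is recoverable from the forest, it suffices to certify planarity component by component. So assume $G$ is connected.

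Next, the prover supplies a rotation system (a cyclic order of the incident edges at every node) meant to encode a planar embedding, equivalently a labeling of the two sides of every edge by face identifiers together with, at each node, a certificate that the faces seen around it are consistent with the rotation. For a connected graph this encodes a planar embedding iff (i) every face is a single closed walk and (ii) the number of faces equals $m-n+2$. Condition (i) is certified per face by a spanning-tree-style witness rooted at the minimum-id node of the face; condition (ii) is certified by a comparison scheme: the quantities $n$ (subtree sizes in the spanning tree), $m$ (via $\sum_v\deg(v)=2m$), and $f$ (the number of distinct face identifiers, counted over a spanning tree of the dual graph, which is connected for a connected plane graph) are each accumulated and compared, so that the verifier at a single node can confirm the Euler identity. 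Each of these sub-checks is performed from a node's own label and the labels of its neighbors, and the non-tree parts of the embedding data add only $O(\log n)$ extra bits where they are stored.

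The hard part will be keeping the proof size at $O(\log n)$ bits \emph{per node} rather than $O(\Delta\log n)$: the naive encoding above stores $\Theta(\log n)$ bits per incident edge (rotation successor, side-faces, dual-tree pointers), i.e.\ $\Theta(\deg(v)\log n)$ at a node $v$, which is prohibitive for dense planar graphs, and in fact no certificate can store a full embedding. Overcoming this is the technical core of \cite{FeuilloleyFMRRT21}: instead of writing the embedding down explicitly, one certifies a \emph{succinct} witness of planarity in which each node carries only a bounded amount of structural information --- for internally triangulated planar graphs, a Schnyder-wood-type orientation in which every node has $O(1)$ outgoing edges (hence $O(\log n)$ bits), together with a reduction from general planar graphs to triangulated ones and a decomposition into $2$- and $3$-connected pieces (a block-tree / SPQR-tree-style structure) that is itself certified with $O(\log n)$-bit labels. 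Composing the forest certificate, the Euler/face checks realized through comparison schemes, and this compact embedding witness yields a PLS for planarity with proof size $O(\log n)$; I expect the interface between the connectivity decomposition and the compact embedding witness to be the most delicate step to get right.
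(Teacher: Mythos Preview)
The paper does not actually prove this lemma: it is stated immediately after the sentence ``The following lemma has been established by Feuilloley et al.~\cite{FeuilloleyFMRRT21}'' and is used as a black box, with no accompanying argument. So your proposal already goes well beyond what the paper does; any correct sketch that ends by invoking \cite{FeuilloleyFMRRT21} matches the paper's ``proof'' exactly.

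As for the content of your sketch: the overall shape --- reduce to connected components via a spanning-forest certificate, encode a combinatorial embedding, and verify Euler's formula through comparison-scheme-style counting, with the real difficulty being the compression from $O(\deg(v)\log n)$ to $O(\log n)$ bits per node --- is a fair high-level summary of the obstacles and of the kind of machinery \cite{FeuilloleyFMRRT21} deploys. Two caveats. First, your face-consistency check (``every face is a single closed walk'') and the dual-graph spanning tree are themselves per-edge structures, so they suffer the same $\Theta(\deg(v)\log n)$ blowup you flag later; you should not present them as already $O(\log n)$ and only later say the embedding part needs compressing. Second, the actual construction in \cite{FeuilloleyFMRRT21} is not quite ``Schnyder woods on a triangulation plus SPQR decomposition''; if you intend to reconstruct the proof rather than cite it, you should consult their paper for the precise certificate, since the details of how they avoid storing the rotation system are specific and not obviously recoverable from the outline you gave.
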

	Observe that planar graphs are SU-closed. Thus, plugging the PLS for planarity into our compiler implies the following corollary.
	\begin{corollary}
		For any constant $\delta>0$, there exists a locally restricted $\delta$-TPLS for planarity with a proof size of $O(\log n)$.
	\end{corollary}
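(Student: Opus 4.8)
The plan is to obtain the corollary as a direct instantiation of the generic CGF compiler (Theorem~\ref{theorem:main-cgf}), so the only points to establish are that the family of planar configured graphs is SU-closed and that it admits a PLS of logarithmic proof size.

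First I would verify SU-closedness. Closure under node-induced subgraphs is immediate: restricting a plane embedding of a configured graph $G_s$ to the nodes of a subset $U$ and their incident edges yields a plane embedding of $G_s(U)$. Closure under disjoint union is the standard fact that two planar graphs can be embedded side by side: given plane embeddings of disjoint configured graphs $G_s$ and $G'_{s'}$, one rescales and translates the second so that it occupies a region of the plane disjoint from the first, producing an embedding of their disjoint union $\tilde{G}_{\tilde{s}}$. Since the configuration function here records only node ids and internal port names --- local data preserved verbatim under both operations --- planarity is an SU-closed CGF.

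Next I would invoke the preceding lemma of Feuilloley et al.~\cite{FeuilloleyFMRRT21}, which supplies a PLS for planarity with proof size $\ell_{\Phi} = O(\log n)$, and feed this PLS as the black box to the compiler of Theorem~\ref{theorem:main-cgf}. The compiler then produces, for every constant $\delta > 0$, a locally restricted $\delta$-TPLS for planarity with proof size $\ell_{\Phi} + O(\log n) = O(\log n)$, as claimed. The one place that deserves a second look is the disjoint-union step, since the soundness argument of Section~\ref{section:cgf-label} reconstructs a member of $\Phi$ as the disjoint union of the cluster-induced subgraphs $G_s(V^{L}_j)$ before bounding the number of removed (crossing) edges by $\delta m$; but with SU-closedness in hand this is precisely the hypothesis required by Theorem~\ref{theorem:main-cgf}, so no serious obstacle remains and nothing further is needed.
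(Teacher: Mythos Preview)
Your proposal is correct and follows essentially the same approach as the paper: observe that planarity is SU-closed, cite the $O(\log n)$ PLS for planarity from \cite{FeuilloleyFMRRT21}, and plug it into Theorem~\ref{theorem:main-cgf}. The paper merely asserts SU-closedness in a single sentence, so your brief justification of closure under induced subgraphs and disjoint union is, if anything, more detailed than what the paper provides.
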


	\subsubsection{Bounded Arboricity}\label{section:arboricity}
	The \emph{arboricity} of a graph $G=(V,E)$ is the minimum number $k$ for which there exists an edge partition $E=E_1\dot{\cup}\dots \dot{\cup}E_k$ such that $G_i=(V,E_i)$ is a forest for each $i\in[k]$. Let $arb(G)$ denote the arboricity of graph $G$.  We say that graph $G$ is of bounded arboricity if $arb(G)=O(1)$.
	\begin{lemma}
		There exists a PLS for bounded arboricity with a proof size of $O(\log n)$.
	\end{lemma}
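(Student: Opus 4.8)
The plan is to certify the property ``$arb(G) \le k$'' --- where $k = O(1)$ is the fixed parameter defining the family ``bounded arboricity'' --- by having the prover exhibit a decomposition of $E$ into $k$ forests $E_1,\dots,E_k$, which exists by the Nash--Williams theorem precisely because $arb(G)\le k$, and then certify two things: (a) each $(V,E_i)$ is acyclic, and (b) $E_1,\dots,E_k$ partition $E$. Since $k$ is constant, it suffices to encode each of these two items with $O(\log n)$ bits per node.

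For item~(a) I would use the classical spanning-tree/forest certification technique of Korman et al.~\cite{pls}: for each $i\in[k]$ the prover roots every tree of $(V,E_i)$, orients each edge of $E_i$ from child to parent, and stores at each node $v$ a value $d_i(v)\in\{0,\dots,n-1\}$ (the hop-distance from $v$ to the root of its tree in $(V,E_i)$) together with a pointer $par_i(v)$ naming the parent of $v$ in $(V,E_i)$ (or $\bot$ if $v$ is a root or isolated in $E_i$), the pointer being encoded by the \emph{identifier} of the parent. The verifier at $v$ knows the identifiers of its neighbours from its own local configuration $s(v)$, so it can match $par_i(v)$ to the incident edge it refers to and read off that neighbour's distance value from its label; it then checks, for each $i$, that either $par_i(v)=\bot$, or $par_i(v)=id(u)$ for some neighbour $u$ with $d_i(v)=d_i(u)+1$. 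Since each node stores at most one parent per forest, the oriented edges of each $E_i$ have out-degree at most $1$ everywhere; combined with the strict distance decrease along every oriented edge, the standard argument --- take a node of maximum $d_i$ on a putative cycle of $E_i$; both of its cycle edges would have to be oriented out of it, contradicting out-degree $\le 1$ --- shows that all nodes accept only if $(V,E_i)$ is acyclic. This costs $O(\log n)$ bits per forest, hence $O(\log n)$ in total.

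For item~(b), note that an edge $e=\{u,v\}$ lies in $E_i$ exactly when precisely one of $par_i(v)=id(u)$ and $par_i(u)=id(v)$ holds; the degenerate ``both hold'' case is already excluded by the distance check of item~(a), as it would force $d_i(v)=d_i(u)+1$ and $d_i(u)=d_i(v)+1$. Thus the verifier at $v$ can, for each incident edge $e$, read the relevant pointer fields from the neighbour's label, count over $i\in[k]$ the forests in which $e$ is claimed, and reject unless this count equals $1$. If every node accepts, then the $k$ edge sets read off from the labels genuinely partition $E$ and each is acyclic, so $arb(G)\le k$; conversely, if $arb(G)>k$ then no labelling can induce such a decomposition, so some node always rejects. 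The total proof size is $k\cdot O(\log n)=O(\log n)$.

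I do not expect a genuine obstacle here --- the construction is bookkeeping --- but two points need a little care. First, keeping the partition-consistency certificate within $O(\log n)$ bits per node: the naive encoding ``label every incident edge with its forest index'' already costs $\Theta(\deg(v))$ bits, which is why everything is routed through the per-forest parent pointers together with the count-equals-one test. Second, pinning down the soundness of the forest sub-certificate, i.e.\ making the maximum-distance-node argument rigorous for the edge sets induced by the labels (as opposed to the prover's intended decomposition), so that acyclicity of each part is actually enforced.
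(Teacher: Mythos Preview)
Your proof is correct and follows the same approach as the paper's: exhibit a decomposition of $E$ into $k=O(1)$ forests and certify each part with the forest PLS of \cite{pls}, incurring $k\cdot O(\log n)=O(\log n)$ bits per node. Your write-up is in fact more careful than the paper's two-sentence sketch, since you explicitly address how the edge partition is communicated within the $O(\log n)$ budget (via the per-forest parent pointers together with the ``each edge is claimed exactly once'' check), a point the paper leaves implicit.
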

	\begin{proof}
		As established in \cite{pls}, there exists a PLS for forests with a proof size of $O(\log n)$. A PLS for bounded 
		can be implemented by using the PLS construction for forests concurrently on $arb(G)$ edge-induced subgraphs of $G$.
	\end{proof}
	Observe that bounded arboricity is SU-closed. Thus, plugging the PLS for bounded arboricity into our compiler implies the following corollary.
	\begin{corollary}
		For any constant $\delta>0$, there exists a locally restricted $\delta$-TPLS for bounded arboricity with a proof size of $O(\log n)$.
	\end{corollary}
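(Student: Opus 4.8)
The plan is to invoke the generic CGF compiler of Theorem~\ref{theorem:main-cgf}, feeding it the $O(\log n)$-proof-size PLS for bounded arboricity from the preceding lemma; the only thing that needs checking is that the family $\Phi_k$ of graphs of arboricity at most $k$ (for the relevant fixed constant $k$) is SU-closed.

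First I would check closure under node-induced subgraphs. If $E=E_1\dot{\cup}\dots\dot{\cup}E_k$ is an edge partition of $G=(V,E)$ into forests and $U\subseteq V$, then restricting each $E_i$ to the edges with both endpoints in $U$ yields an edge partition of $G(U)$ into subforests, so $arb(G(U))\le k$ and $G_s(U)\in\Phi_k$. Next, closure under disjoint union: given disjoint configured graphs $G_s,G'_{s'}\in\Phi_k$, take forest partitions $E=E_1\dot{\cup}\dots\dot{\cup}E_k$ of $G$ and $E'=E'_1\dot{\cup}\dots\dot{\cup}E'_k$ of $G'$; since a disjoint union of two forests is again a forest, $E_i\cup E'_i$ is a forest for every $i\in[k]$, and these sets partition the edge set of the disjoint union $\tilde{G}_{\tilde{s}}$, whence $arb(\tilde{G})\le k$ and $\tilde{G}_{\tilde{s}}\in\Phi_k$. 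The configuration function plays no role in any of this, as the defining property depends only on the graph topology, so the configured-graph versions of the two closure properties follow immediately from the graph-theoretic ones. Hence $\Phi_k$ is SU-closed.

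With SU-closedness in hand, applying Theorem~\ref{theorem:main-cgf} to the PLS of the preceding lemma yields, for every constant $\delta>0$, a locally restricted $\delta$-TPLS for bounded arboricity whose proof size is $\ell_\Phi+O(\log n)=O(\log n)+O(\log n)=O(\log n)$, which is the claim. I do not expect a genuine obstacle here: the substantive work --- the SLOCAL partition algorithm $\partcgf$ together with its label-assignment and verification scheme --- is entirely encapsulated in Theorem~\ref{theorem:main-cgf}. The main thing to be careful about is that both closure operations preserve the \emph{same} bound $k$, rather than merely keeping the arboricity $O(1)$ up to constants; the explicit edge-partition arguments above take care of this point.
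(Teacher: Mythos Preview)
Your proposal is correct and follows exactly the paper's approach: observe that bounded arboricity is SU-closed, then plug the $O(\log n)$ PLS from the preceding lemma into Theorem~\ref{theorem:main-cgf}. If anything, you give more detail than the paper does, since the paper simply asserts SU-closedness without spelling out the forest-partition arguments.
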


	\subsubsection{$k$-Colorability}\label{section:k-colorability}
	For a positive integer $k$, we say that a graph $G=(V,E)$ is \emph{$k$-colorable} if there exists a proper $k$-coloring of its nodes. Observe that $k$-colorability admits a (simple) PLS with proof size $O(\log k)$ and that it is SU-closed. Thus, combined with our compiler, we get the following theorem.
	\begin{theorem}
		For any constant $\delta>0$, there exists a locally restricted $\delta$-TPLS for $k$-colorability with a proof size of $O(\log n)$.
	\end{theorem}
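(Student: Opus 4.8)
The plan is to instantiate the generic compiler of Theorem~\ref{theorem:main-cgf} with the CGF $\Phi$ consisting of all $k$-colorable configured graphs, so the work splits into two routine verifications: that $\Phi$ is SU-closed, and that $\Phi$ admits a (not necessarily locally restricted) PLS of proof size $O(\log n)$. Everything then follows by a black-box invocation of Theorem~\ref{theorem:main-cgf}.

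First I would check SU-closedness. Given $G_s \in \Phi$ with a proper $k$-coloring $c : V \to [k]$ and a node subset $U \subseteq V$, the restriction $c|_U$ is still a proper coloring of $G(U)$, since every edge of $G(U)$ is an edge of $G$; hence $G_s(U) \in \Phi$, and $\Phi$ is closed under node-induced subgraphs. For disjoint union, if $G_s, G'_{s'} \in \Phi$ are disjoint with proper $k$-colorings $c, c'$, then the map agreeing with $c$ on $V$ and with $c'$ on $V'$ is a proper $k$-coloring of $\tilde{G}_{\tilde{s}}$, because $\tilde{G}$ contains no edges between $V$ and $V'$; hence $\tilde{G}_{\tilde{s}} \in \Phi$, and $\Phi$ is closed under disjoint union. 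Combining the two properties, $\Phi$ is SU-closed.

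Next I would exhibit the PLS. Since a graph on $n$ nodes is trivially colorable whenever at least $n$ colors are available, we may assume $k \le n$. On a yes-instance, the prover assigns to each node $v$ the label $L(v) = c(v) \in [k]$ encoding $v$'s color in some fixed proper $k$-coloring $c$; this takes $O(\log k) = O(\log n)$ bits. The verifier at $v$ accepts iff $L(v) \in [k]$ and $L(v) \neq L(u)$ for every neighbor $u$. Completeness is immediate, and for soundness, if the verifier accepts everywhere then the labels themselves form a proper $k$-coloring, so $G_s \in \Phi$. Hence $\ell_\Phi = O(\log n)$.

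Finally, feeding this PLS into the compiler of Theorem~\ref{theorem:main-cgf} yields a locally restricted $\delta$-TPLS for $\Phi$ with proof size $\ell_\Phi + O(\log n) = O(\log n)$, as claimed. I do not expect any genuine obstacle here: each step is either a one-line structural check or a direct application of Theorem~\ref{theorem:main-cgf}. The only mild subtlety is the bookkeeping observation that $k \le n$ may be assumed without loss of generality, so that the $O(\log k)$ term coming from the base PLS is absorbed into $O(\log n)$.
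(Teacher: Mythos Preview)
Your proposal is correct and follows essentially the same approach as the paper: verify that $k$-colorability is SU-closed, note the trivial PLS of proof size $O(\log k)$, and invoke the compiler of Theorem~\ref{theorem:main-cgf}. The paper's own argument is even terser, simply asserting these two facts without spelling them out; your explicit handling of the $k \le n$ reduction is a detail the paper leaves implicit.
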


	\subsubsection{Forests and DAGs}\label{section:forests-dags}
	The following two lemmas have been established by Korman et al.~\cite{pls}.
	\begin{lemma}
		There exists a PLS for forests with a proof size of $O(\log n)$.
	\end{lemma}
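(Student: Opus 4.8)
The plan is to have the honest prover encode a rooted spanning forest of $G$, exactly as in the classical construction of Korman et al. Concretely, the prover picks an arbitrary root in each connected component of $G$ and, for every node $v$, sets the label $L(v)$ to consist of three fields: (i) a bit flagging whether $v$ is a root; (ii) an integer $d(v)\in\{0,\dots,n-1\}$, the hop-distance from $v$ to its root; and (iii) when $v$ is not a root, the id of $v$'s \emph{parent}, meaning the neighbor of $v$ lying on the (unique) path from $v$ to its root. Each field has length $O(\log n)$, so this yields the claimed proof size $O(\log n)$.

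The verifier at a node $v$, from its view $\langle s(v),L(v),L^N(v)\rangle$, performs the following local checks. If $v$ is flagged as a root, it checks $d(v)=0$. Otherwise it checks that $v$ has a (necessarily unique) neighbor $u$ whose id equals the parent-id stored in $L(v)$, and that $d(v)=d(u)+1$ (in particular $d(v)\ge 1$). In addition — and this is the crucial \emph{edge-covering} check — for every neighbor $w\ne u$ of $v$, the verifier checks that $w$ is flagged as a non-root and that the parent-id stored in $L(w)$ equals $id(v)$; when $v$ itself is a root, this condition is imposed on \emph{all} of $v$'s neighbors. Completeness is then immediate: if $G$ is a forest, the honest encoding described above passes every check at every node.

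For soundness, suppose the verifier accepts $G_s$ under some label assignment $L$; I must argue $G$ is a forest. The accepted labels induce a parent function $p$ on the set $V'$ of non-root nodes, where $p(v)$ is a neighbor of $v$ with $d(p(v))=d(v)-1$; since $d$ is nonnegative and strictly decreases along parent edges, iterating $p$ from any node reaches a root after finitely many steps, so the parent edges form an acyclic subgraph. The edge-covering check guarantees that \emph{every} edge of $G$ is a parent edge for (exactly) one of its endpoints — if $\{v,w\}$ were neither $v$'s nor $w$'s parent edge it would fail the check at both endpoints — so $G$ coincides with this acyclic parent subgraph, hence $G$ is a forest. I expect the soundness argument to be the only non-routine part, specifically getting the interplay of the two checks right: depth consistency alone does \emph{not} forbid cycles (a triangle carries the consistent depths $0,1,2$), so the edge-covering check is exactly what rules out ``chord'' edges that would close a cycle; conversely, the edge-covering check alone would admit a directed cycle of parent pointers with no root, which is precisely what the strictly decreasing depth labels forbid. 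The remaining verification that all needed data is locally available and that the honest labels have size $O(\log n)$ is routine bookkeeping.
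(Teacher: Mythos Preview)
Your proposal is correct and reproduces the classical Korman--Kutten--Peleg construction that the paper merely cites without proof. One minor model-specific point: in this paper's PLS model the verifier at $v$ sees only $\langle s(v),L(v),L^N(v)\rangle$ and not the neighbors' local configurations, so to let $v$ identify which neighbor carries the stored parent-id you should also include $id(v)$ as a field of $L(v)$ (with $v$ locally checking it against $s(v)$); this adds $O(\log n)$ bits and changes nothing else in your argument.
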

	\begin{lemma}
		There exists a PLS for directed acyclic graphs (DAGs) with a proof size of $O(\log n)$.
	\end{lemma}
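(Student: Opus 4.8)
The plan is to certify acyclicity by having the prover supply, as the label of each node, its rank in a topological ordering of the directed graph recorded by the configuration. The key elementary fact is that a directed graph $D$ on $n$ nodes is acyclic if and only if there is a function $\tau : V \to \{1, \dots, n\}$ with $\tau(u) < \tau(v)$ for every arc $(u,v)$ of $D$: the forward direction is realized by any topological sort, while for the converse a directed cycle $v_0, v_1, \dots, v_t = v_0$ would yield the impossible chain $\tau(v_0) < \tau(v_1) < \cdots < \tau(v_0)$. This is essentially the certification used by Korman et al.~\cite{pls}.

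Concretely, on a configured graph $G_s$ in the DAG family the prover runs a topological sort of the recorded orientation and sets $L(v)$ to be $v$'s position in that ordering, written with $\lceil \log n \rceil$ bits. The verifier at a node $v$ reads from $s(v)$, for each incident edge $\{u,v\}$, whether it is oriented towards $v$ or away from $v$; it then checks that $L(u) < L(v)$ for every in-edge and $L(v) < L(u)$ for every out-edge, and accepts iff all these comparisons hold. Completeness is immediate, since the topological ranking satisfies every such inequality. For soundness, if every node accepts on some label assignment $L$, then for each edge seen by $v$ as oriented $u \to v$ we have $L(u) < L(v)$, so $L$ strictly increases along every directed path; the cycle argument then shows the recorded orientation is acyclic, hence $G_s$ is not a no-instance. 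The proof size is $O(\log n)$ because the honest labels lie in $\{1, \dots, n\}$.

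The one point needing a little care — and the closest thing to an obstacle here — is that the verifier at $v$ sees only $s(v)$ and the \emph{labels} of its neighbors, not their configurations $s(u)$, so it cannot directly check that the two endpoints of an edge agree on its orientation. This is not actually a problem: if $s(v)$ and $s(u)$ disagree about the orientation of $\{u,v\}$, then the comparison imposed at $v$ and the one imposed at $u$ are the two opposite strict inequalities between $L(u)$ and $L(v)$, which cannot both hold, so at least one endpoint rejects regardless of the label assignment — exactly the desired behavior, since such a configured graph is not in the DAG family. Hence the single monotonicity test simultaneously handles acyclicity and orientation consistency, and no auxiliary certificate (such as a spanning tree for counting nodes) is required; the scheme works verbatim on disconnected graphs as well, since every check is per-edge.
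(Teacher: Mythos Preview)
Your proof is correct and is essentially the standard topological-rank certification from Korman, Kutten, and Peleg~\cite{pls}; the present paper does not give its own argument for this lemma but simply cites that result, so there is nothing further to compare. Your extra paragraph handling the case where the two endpoints disagree on an edge's orientation is a nice touch and fits cleanly within the verifier model defined here.
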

	Observe that both forests and DAGs are SU-closed. Thus, plugging the PLSs for forests and DAGs into our compiler implies the following corollaries.
	\begin{corollary}
		For any constant $\delta>0$, there exists a locally restricted $\delta$-TPLS for forests with a proof size of $O(\log n)$.
	\end{corollary}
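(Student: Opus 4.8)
The plan is to obtain this corollary as an immediate application of the CGF compiler, Theorem~\ref{theorem:main-cgf}. Recall that this compiler takes any PLS for an SU-closed CGF $\Phi$ with proof size $\ell_{\Phi}$ and produces a locally restricted $\delta$-TPLS for $\Phi$ with proof size $\ell_{\Phi}+O(\log n)$. So it suffices to (i) verify that the family of (configured) forests is SU-closed, and (ii) exhibit a PLS for forests with proof size $O(\log n)$; the latter is exactly the content of the preceding lemma, due to Korman et al.~\cite{pls}.

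For step (i), I would argue as follows. The forest family is closed under node-induced subgraphs: if $G$ is acyclic and $U\subseteq V$, then $G(U)$ is a subgraph of $G$, and any cycle in $G(U)$ would also be a cycle in $G$, a contradiction; hence $G(U)$ is acyclic, and the induced configuration function is by definition a configured subgraph. It is closed under disjoint union: the disjoint union of two acyclic graphs introduces no edges between the two vertex sets, so any cycle in the union would lie entirely within one of the two graphs, contradicting their acyclicity. Therefore the forest CGF is closed under node-induced subgraphs and under disjoint union, i.e., SU-closed.

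Combining (i) and (ii) and invoking Theorem~\ref{theorem:main-cgf} with the constant $\delta$ yields a locally restricted $\delta$-TPLS for forests with proof size $O(\log n)+O(\log n)=O(\log n)$, as claimed. There is essentially no obstacle in this argument: the only substantive point is the SU-closedness check, which is elementary, and all the heavy lifting has already been carried out in the compiler construction of Section~\ref{section:compiler-cgfs}.
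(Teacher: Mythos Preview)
Your proposal is correct and follows essentially the same approach as the paper: observe that forests are SU-closed, invoke the $O(\log n)$ PLS for forests from \cite{pls}, and plug it into the compiler of Theorem~\ref{theorem:main-cgf}. The paper simply asserts SU-closedness without argument, whereas you spell out the elementary verification, but the logic is identical.
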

	\begin{corollary}
		For any constant $\delta>0$, there exists a locally restricted $\delta$-TPLS for directed acyclic graphs with a proof size of $O(\log n)$.
	\end{corollary}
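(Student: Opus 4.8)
The plan is to obtain this corollary as a direct instantiation of the CGF compiler of Theorem~\ref{theorem:main-cgf}. Let $\Phi$ denote the CGF of directed acyclic graphs, formalized in the usual way: the local configuration $s(v)$ of each node $v$ encodes an orientation of each of $v$'s incident edges (edge orientations are among the attributes explicitly allowed to reside in the local configuration), and $G_s\in\Phi$ precisely when these orientations are mutually consistent across endpoints and the resulting directed graph contains no directed cycle. By the lemma above stating that DAGs admit a PLS with proof size $O(\log n)$ (Korman et al.~\cite{pls}), we have $\ell_{\Phi}=O(\log n)$. Thus it suffices to verify that $\Phi$ is SU-closed; Theorem~\ref{theorem:main-cgf} then yields, for every constant $\delta>0$, a locally restricted $\delta$-TPLS for $\Phi$ of proof size $\ell_{\Phi}+O(\log n)=O(\log n)$.

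It remains to check the two closure properties. For closure under node-induced subgraphs, fix $G_s\in\Phi$ and $U\subseteq V$: by the definition of a configured subgraph, $G_s(U)$ retains, for each surviving edge, the orientation it had in $G_s$, so the orientations in $G_s(U)$ are still consistent, and any directed cycle of $G_s(U)$ would be a directed cycle of $G_s$, contradicting $G_s\in\Phi$; hence $G_s(U)\in\Phi$. For closure under disjoint union, let $G_s,G'_{s'}\in\Phi$ be disjoint with disjoint union $\tilde{G}_{\tilde{s}}$: since the vertex sets are disjoint, $\tilde{s}$ agrees with $s$ on $V$ and with $s'$ on $V'$, so the orientations remain consistent, and since a directed cycle is a connected subgraph it must lie entirely within one of the two parts, i.e., it would be a directed cycle of $G_s$ or of $G'_{s'}$ --- impossible. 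Hence $\tilde{G}_{\tilde{s}}\in\Phi$, and $\Phi$ is SU-closed, which completes the argument.

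There is essentially no obstacle here: the corollary is a plug-in of the $O(\log n)$-size DAG PLS into the compiler of Theorem~\ref{theorem:main-cgf}. The only point meriting a little care is the (routine) formalization of ``DAG'' as a CGF that carries its orientation data inside the configuration function in a way that survives the operation of taking configured subgraphs, together with the observation that acyclicity is hereditary under both node deletion and disjoint union because directed cycles are connected; both are immediate from the definitions, so the proof is short.
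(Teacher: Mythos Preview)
Your proposal is correct and follows exactly the paper's approach: observe that DAGs are SU-closed, invoke the $O(\log n)$ PLS for DAGs from \cite{pls}, and plug into Theorem~\ref{theorem:main-cgf}. The paper merely asserts SU-closure as an observation, whereas you spell out the (routine) verification, but the argument is the same.
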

	
	\section{Impossibilities of Locally Restricted GPLS}
	\label{section:impossibilities}
	In this section, we establish some inherent limitations of locally restricted GPLSs based on the following observation.
	
	\begin{observation}
		If there exists a locally restricted GPLS over $\calu$ with yes-family $\calf_{Y}$ and no-family $\calf_{N}$, then there exists a LOCAL algorithm with a $\log ^{O(1)}(n)$ round-complexity that given a configured graph $G_s\in\calu$, decides if $G_s\in \calf_{Y}$ (in which case all nodes return $\true$); or $G_s\in \calf_{N}$ (in which case at least one node returns $\false$).
	\end{observation}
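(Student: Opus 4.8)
The plan is to simply unfold the definition of a locally restricted GPLS and observe that the honest prover and the verifier can be composed into a single LOCAL algorithm. First I would describe the algorithm: on a configured graph $G_s \in \calu$, each node $v$ collects the configured subgraph $G_s(B^{r}(v))$ in $r = \log^{c} n$ communication rounds (assuming, as is standard, that the nodes know $n$, or at least a polynomial upper bound on it, so that $r$ is well defined), and then applies the prover's honest label-assignment rule to obtain a candidate label $L(v)$; this is legitimate precisely because, by the definition of \emph{locally restricted} in Section~\ref{section:locally-restricted}, the honest prover computes $L(v)$ as a function of $G_s(B^{r}(v))$. In one additional round, $v$ learns the labels $L^{N}(v) = \langle L(u_1), \dots, L(u_{\deg(v)}) \rangle$ of its neighbors, and then outputs the Boolean value $\varphi(v)$ produced by the verifier on the tuple $\langle s(v), L(v), L^{N}(v) \rangle$.

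Next I would argue correctness in the two cases. If $G_s \in \calf_{Y}$, then the label assignment $L$ computed above is exactly one that the honest, locally restricted prover would produce, so Requirement~\ref{gpls-r1} guarantees that $\varphi(v) = \true$ at every node, i.e., all nodes return $\true$. If $G_s \in \calf_{N}$, then Requirement~\ref{gpls-r2} guarantees that for \emph{every} label assignment --- in particular, for the one our simulated prover produced --- the verifier rejects, i.e., $\varphi(v) = \false$ for at least one node. Finally, the round complexity is $r + O(1) = \log^{O(1)} n$, as required.

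I do not expect any real obstacle: the statement is essentially a restatement of the definitions, and the only minor points worth spelling out are (i) that the verifier's one-hop view of labels costs a single extra round on top of the $r$ rounds used to simulate the prover, and (ii) that the guarantee is vacuous for $G_s \notin \calu$ and for $G_s \in \calu \setminus (\calf_{Y} \cup \calf_{N})$, which is consistent with the statement (it speaks only about $G_s \in \calf_{Y}$ and $G_s \in \calf_{N}$). If one prefers, the honest prover can instead be simulated via an SLOCAL algorithm with $\log^{O(1)} n$ locality as discussed in Section~\ref{section:preliminaries}, which by \cite{GhaffariKM17, RozhonG20} is again implementable in $\log^{O(1)} n$ LOCAL rounds; either route yields the claimed round complexity.
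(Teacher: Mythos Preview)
Your proposal is correct and follows essentially the same approach as the paper: simulate the locally restricted prover in $\log^{O(1)} n$ rounds, then run the verifier in one additional round, and invoke Requirements~\ref{gpls-r1} and~\ref{gpls-r2} for the two cases. The extra remarks you include (the explicit $r + O(1)$ round count, the vacuous cases, and the optional SLOCAL route) are fine elaborations but not needed for the observation.
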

	\begin{proof}
		Given a configured graph $G_s\in\calu$, we obtain a LOCAL algorithm by first simulating the locally restricted prover on $G_s$ (using a polylogarithmic number of rounds), and then simulating the verifier (using $1$ round). By the correctness requirements of a GPLS, the outcome of this algorithm is that all nodes return $\true$ if $G_s\in \calf_{Y}$; whereas at least one node returns $\false$ if $G_s\in \calf_{N}$.
	\end{proof}
	
	The observation above implies that it is impossible to construct a locally restricted GPLS for verification tasks that require $\omega( poly\log n)$ rounds in the LOCAL model. Notice that this impossibility applies to a large class of verification tasks associated with OptDGPs and CGFs. For example, using a simple indistinguishability argument, one can show that there is no locally restricted PLS for forests (i.e., a PLS deciding if a given graph is a forest). Similar arguments can be applied to exclude a locally restricted PLS for most of the OptDGPs and CGFs considered in Section \ref{section:bounds}.


	\section{Additional Related Work}
	\label{section:related}
	The PLS model was introduced by Korman, Kutten, and Peleg in \cite{pls} and studied extensively since then, see, e.g., \cite{FeuilloleyFMRRT21,mst,blin-2014, error-sensitive-pls-Fraigniaud-2017, patt-shamir-perry-2017,redundancy-Feuilloley-2018}. Research in this field include \cite{mst}, where a PLS for minimum spanning tree  is shown to have a proof size of $O(\log n\log W)$, where $W$ is the largest edge-weight, and \cite{FeuilloleyFMRRT21}, where a PLS for planarity is shown to have a proof size of $O(\log n)$. 
	
	In parallel, several researchers explored the limitations of the PLS model, often relying on known lower bounds from nondeterministic communication complexity \cite{kushilevitz-communication-complexity}. Lower bounds of $\Omega(n^2)$ and $\Omega(n^2/\log n)$ are established in \cite{lcp} with regards to the proof size of any PLS for graph symmetry and non $3$-colorability, respectively. A similar technique was used by the authors of \cite{lower_bounds} to show that many classic optimization problems require a proof size of $\tilde{\Omega}(n^2)$.
	
	The lower bounds on the proof size of PLSs for some optimization problems
	have motivated the authors of \cite{aplspaper} to introduce the APLS notion, further studied recently in \cite{EmekG20}. Optimization problems considered in the context of APLS include maximum weight matching, which was shown in \cite{aplspaper} to admit a $2$-APLS with a proof size of $O(\log W)$, and minimum weight vertex cover, which was shown in \cite{EmekG20} to admit a $2$-APLS with a proof size of $O(\log n+\log W)$, where in both cases $W$ refers to the largest weight value. 
	
	In the current paper, we also introduce the TPLS model which is suited for properties that are not formulated as optimization problems. This model is based on the notion of property testing \cite{GoldreichGR98}. More specifically, the TPLS model is formulated using the distance measure between graphs defined in \cite{AlonKKR08}. 
	
	Our focus in this paper is on locally restricted APLSs and TPLSs, restricting the prover to a LOCAL algorithm with a polylogarithmic number of rounds. Interest in the power of deterministic LOCAL algorithms with polylogarithmic round-complexity was initiated by Linial's seminal work \cite{Linial87,Linial92}. One particular problem that raised a lot of interest in this context is the network decomposition problem introduced by Awerbuch et al.\ in \cite{AwerbuchGLP89}. In a recent breakthrough \cite{RozhonG20}, Ghaffari and Rozhon presented a deterministic algorithm with polylogarithmic round-complexity for the network decomposition problem. As established in \cite{GhaffariKM17}, a consequence of this result
	is that any SLOCAL algorithm with
	$\log^{O (1)} n$
	locality can be simulated by a LOCAL algorithm with $\log ^{O(1)}n$ rounds. This simulation technique is used in the construction of our compilers in Sections \ref{section:compiler-optdgps} and \ref{section:compiler-cgfs}.

	\newpage
	\begin{table}
		\begin{center}
			\begin{tabular}{| l| l| l|}
				\hline
				\textbf{Term} &  \textbf{Abbreviation} & \textbf{Reference}\\
				\hline
				configured graph family & CGF& Section \ref{section:model}\\
				\hline
				closed under node-induced subgraph and disjoint union & SU-closed & Section \ref{section:model}\\
				\hline
				distributed graph problem & DGP& Section \ref{section:model}\\
				\hline
				input-output graph & IO graph& Section \ref{section:model}\\
				\hline
				distributed graph minimization problem & MinDGP& Section \ref{section:model}\\
				\hline
				distributed graph maximization problem & MaxDGP& Section \ref{section:model}\\
				\hline
				distributed graph optimization problem & OptDGP& Section \ref{section:model}\\
				\hline
				gap proof labeling scheme & GPLS& Section \ref{section:pls}\\
				\hline
				proof labeling scheme & PLS& Section \ref{section:pls}\\
				\hline
				testing proof labeling scheme & TPLS& Section \ref{section:pls}\\
				\hline
				approximate proof labeling scheme & APLS& Section \ref{section:pls}\\
				\hline
				
			\end{tabular}
		\end{center}
		\caption{A list of abbreviations}
		\label{abbreviations}
	\end{table}

	\clearpage
	\bibliographystyle{alpha}
	
	\bibliography{references}
	
\end{document}